\documentclass[
reprint,
groupedaddress,
nofootinbib,
 amsmath,amssymb,
 aps, 
]{revtex4-2}

\usepackage{physics}

\usepackage{graphicx}
\usepackage{dcolumn}
\usepackage{bm}
\usepackage[colorlinks=true,urlcolor=blue,linkcolor=blue,citecolor=magenta]{hyperref}
\usepackage[normalem]{ulem}
\usepackage{tikz}
\usetikzlibrary{calc, backgrounds}

\usetikzlibrary{
  graphs,
  graphs.standard
}
\usepackage{MnSymbol,wasysym}

\usepackage{amsmath, amssymb, amsthm, mathtools}
\usepackage[T1]{fontenc}
\usepackage{bbold}
\usepackage{dsfont}
\usepackage{standalone}

\tikzset{
    treenode/.style={circle, draw=black, minimum size=0.5cm, inner sep=0pt},
    treeedge/.style={thick}
}

\newcommand{\edgepair}[2]{%
    \draw[treeedge]
        ($ (#1)!2pt!90:(#2) $) -- ($ (#2)!2pt!-90:(#1) $);
    \draw[treeedge, dashed]
        ($ (#1)!2pt!-90:(#2) $) -- ($ (#2)!2pt!90:(#1) $);
}

\newcommand{\edgequad}[2]{%
    \foreach \off/\style in {-4.5/solid, -1.5/dashed, 1.5/solid, 4.5/dashed} {
        \draw[treeedge, \style]
            ($ (#1)!\off pt!90:(#2) $) --
            ($ (#2)!\off pt!-90:(#1) $);
    }
}

\newcommand{\edgedouble}[2]{%
    \draw[treeedge]
        ($ (#1)!2pt!90:(#2) $) -- ($ (#2)!2pt!-90:(#1) $);
    \draw[treeedge]
        ($ (#1)!2pt!-90:(#2) $) -- ($ (#2)!2pt!90:(#1) $);
}

\newcommand{\edgegray}[2]{%
    \draw[treeedge, gray!50, opacity=0.4] (#1) -- (#2);
}

\usepackage{xcolor}
\usetikzlibrary{calc}

\newcommand{\mysection}[1]{\paragraph*{#1 ---}}

\usepackage{comment}
\usepackage{amsthm}

\usepackage{soul}
\usepackage[normalem]{ulem}
\usepackage{comment}
\usepackage{breakcites}
\usepackage[capitalize,nameinlink]{cleveref}

\usepackage[sanserif,basic]{complexity}
\usepackage{scalerel}
\usepackage[colorinlistoftodos]{todonotes}
\usepackage{algorithm}
\usepackage{setspace}
\usepackage{algpseudocode}
\usepackage{xspace}
\usepackage{extarrows}
\usepackage{environ}
\usepackage{needspace}
\usepackage{framed}
\usepackage{enumitem} 
\theoremstyle{definition}

\newtheorem{prototheorem}{Theorem}
\newtheorem{theorem}[prototheorem]{Theorem}
\newtheorem{lemma}[prototheorem]{Lemma}
\newtheorem{definition}[prototheorem]{Definition}
\newtheorem{conjecture}[prototheorem]{Conjecture}

\colorlet{theoremshade}{cyan!15}
\colorlet{lemmashade}{green!15} %

\newenvironment{theo}{\colorlet{shadecolor}{theoremshade}\begin{shaded}\begin{theorem}}
{\end{theorem}\end{shaded}}

\newenvironment{lem}{\colorlet{shadecolor}{lemmashade}\begin{shaded}\begin{lemma}}
{\end{lemma}\end{shaded}}

\newenvironment{defn}{\colorlet{shadecolor}{gray!15}\begin{shaded}\begin{definition}}
{\end{definition}\end{shaded}}

\usepackage{nicefrac}

\makeatletter
\def\l@subsubsection#1#2{}
\makeatother

\begin{document}

\title{Hierarchical Prototype Emergence in Modern Hopfield Models}
\author{Aditya Cowsik}
\thanks{Aditya Cowsik and Adithya Sriram contributed equally to this work.}
\author{Adithya Sriram}
\thanks{Aditya Cowsik and Adithya Sriram contributed equally to this work.}
\affiliation{
 Department of Physics, Stanford University, Stanford, CA 94305, USA
}
\date{\today}

\begin{abstract}
Hierarchical correlations are a universal feature of any realistic model of data, and the question of how associative memory models may learn these correlations and generalize beyond them to construct new sensible images is an important step towards understanding more complex modern architectures such as diffusion models. We consider a hierarchical model for memories which are sampled and stored in a dense Hopfield network with polynomial activation. We analytically derive conditions for each level of this hierarchy to be locally stable – that is they are local energy minima. We use prototype reconstruction as a minimal model of generalization and we find that it takes only a quasi-polynomial amount of information to generalize beyond particular memories and even particular groups
in the hierarchy. We observe a qualitatively analogous phase diagram in
the number of memories, sharpness of the activation function (polynomial degree)
for data from Fashion-MNIST.
\end{abstract}

\maketitle

\mysection{Introduction}

The advent of transformer based language models has revolutionized generative artificial intelligence technologies \cite{vaswani2023attentionneed, devlin-etal-2019-bert, dosovitskiy2021imageworth16x16words, peebles2023scalablediffusionmodelstransformers}. Despite the incredible progress in their capabilities, exactly \textit{how} these architectures learn and are capable of generating sensible content is an area of intense scrutiny \cite{simon2026scientifictheorydeeplearning}. In the specific context of diffusion models and image generation, how the model is able to generalize images from its training set and create entirely new and creative out-of-distribution images from a simple query is a fascinating question, and mechanistic descriptions of this process are just beginning to emerge \cite{sohldickstein2015deepunsupervisedlearningusing, kamb2024analytic, hunt2026exactinformationtheorygeneralization, pham2026memorizationgeneralizationemergencediffusion, niedoba2025mechanisticexplanationdiffusionmodel, cui2025solvablemodellearninggenerative}.

At the same time as these novel architectures emerge, Hopfield networks and associative memory models have also seen a number of interesting innovations \cite{doi:10.1073/pnas.79.8.2554, krotov2025modernmethodsassociativememory, ramsauer2021hopfieldnetworksneed, delgaudio2026shorttermplasticityrecallsforgotten, lufkin2026hybridassociativememories, Steinberg2022-jd}. These models are content-addressable self correcting memories and the original Hopfield networks were shown to be able to reliably store a number of memories which scaled linearly in the number of neurons \cite{doi:10.1073/pnas.79.8.2554}. More recently, dense associative memories have emerged as a modern generalization of the original Hopfield model \cite{krotov2016dense}. In these networks, the quadratic energy function is replaced with a higher order polynomial or exponential activation function. This sharpens the basin of attraction around the stored memories and boosts the storage capacity to super-linear or even exponential in the number of neurons \cite{krotov2016dense, PhysRevLett.132.077301, kafraj2026biologicallyplausibledenseassociative}. 
\begin{figure}
    \centering
    \includegraphics[width=0.8\linewidth]{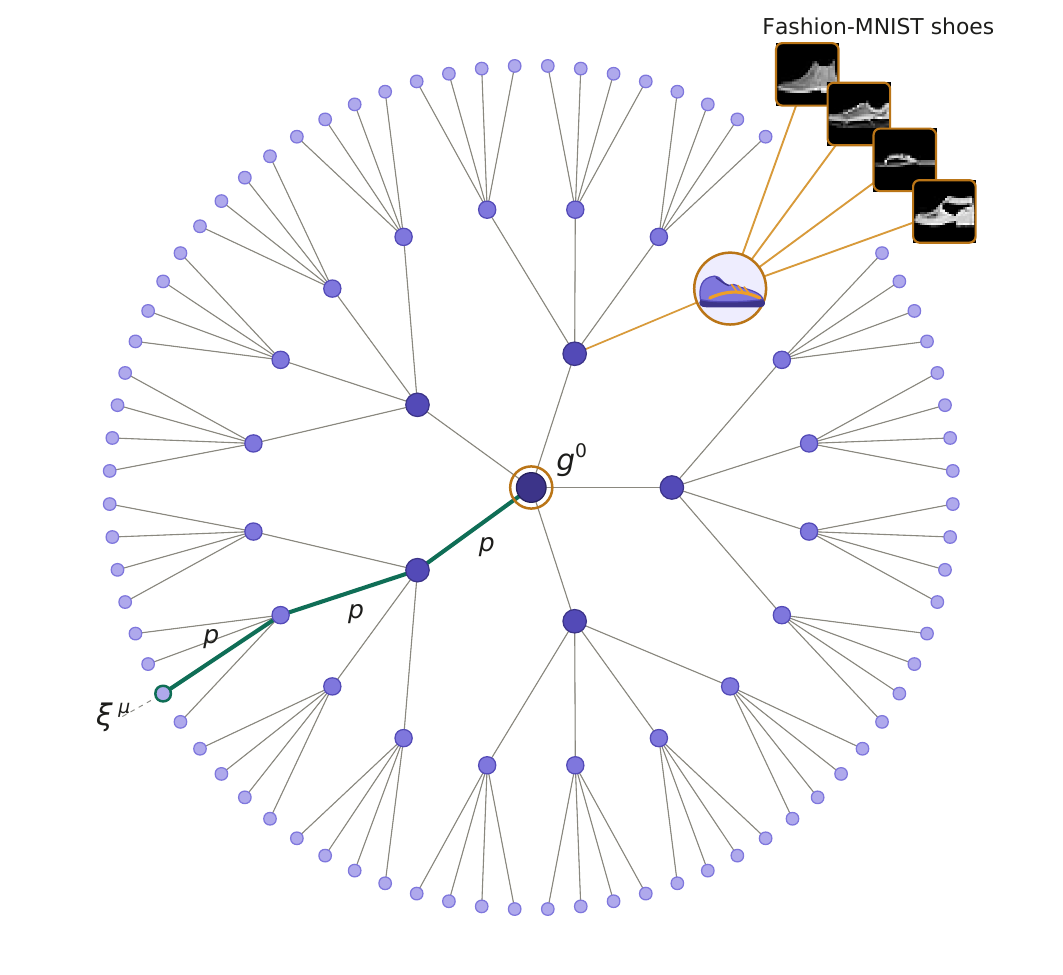}
    \caption{Schematic of the hierarchical memory structure we consider in this work. Patterns seen by the network $\boldsymbol{\xi}^\mu$ are the leaves (pictures of shoes) and generalized patterns $\mathbf{g}^\ell$ (reconstructed prototypes) are those which are further up in the correlation network. In the toy model we consider, patterns are derived from their direct ancestors by applying noise to a fraction $p$ of the bits of the ancestor pattern (\cref{eq:noise}).  %
    }
    \label{fig:cartoon1}
\end{figure}

While in diffusion models, the task is to generalize from the training set and produce new patterns, the task in associative memory models is exactly the opposite. Yet there is an intriguing analytical correspondence between the two, as it was shown that diffusion models and dense associative memories have related energy functions and similar properties as dynamical systems \cite{ambrogioni2023search, pham2026memorizationgeneralizationemergencediffusion, niu2024scalinglawsunderstandingtransformer}. This motivates a perspective of interpolating between these two and exploring a \textit{memorization-generalization transition}. Understanding the structure of Hopfield networks could help us understand when they merely reproduce memorized data, and when they can generalize beyond what they have already seen. This question is closely related to the notion of capacity in generalized Hopfield models. Indeed, the spurious minima generated in the energy landscape of dense associative memories are being explored as a mechanism behind generalization \cite{pham2026memorizationgeneralizationemergencediffusion, hoover2026denseassociativememoryepanechnikov}. However, it is insufficient to say that generalization happens precisely when we exceed a capacity threshold.

We adopt this perspective and explore a model of dense associative memory in which \textit{the encoded data contains a latent hierarchical correlation structure.} Data with latent hierarchical structure is very common. This type of data can be viewed as storing "example" patterns, such as images of shoes. These example patterns would have emerged from more generalized "concepts" which sit further up in the correlation structure. The images of shoes are examples of a generalized ``footwear concept" which is itself an example of a generalized ``clothing concept" (see ~\cref{fig:cartoon1}). %

Under the assumption of this latent correlational structure, and using prototype reconstruction as a minimal model for generalization, we attempt to answer the following two questions: 1) With hierarchically correlated memories, when do dense associative memory models memorize and remember patterns? 2) Can these dense associative memory models recover the generalized patterns from the underlying correlation structure? We find that not only can the network reliably recover the stored patterns, but new ``spurious'' patterns emerge as minima in the energy landscape and many of these patterns are exactly those which are further up in the correlation structure. This competition between memorization and generalization can be tuned by the network parameters. Our work sheds light on how hierarchical features in data can be learned and how that is precisely related to memorization/forgetting in an exactly solvable model.%

Before continuing, we review some related work. We first note that Hopfield networks with correlated patterns have been studied in previous work, though these models consider only quadratic activation functions, whereas the dense activation function is the crucial ingredient of our work~\cite{DOTSENKO1986410, engelJPA, Agliari:2013wo, sargolzaei2025hierarchical}. In the context of dense associative memories, the idea of hierarchical organization has begun to receive interest. Ref. \cite{krotov2021hierarchicalassociativememory} studied models where the architecture itself involved hierarchically organized hidden layers of dense associative memories. Other related work by \cite{Agliari2023-an, agliari2021emergenceconceptshallowneural} and concerns the use of a neural networks in retrieving an archetype memory or concept given many training examples. Our contribution beyond these is to consider arbitrary depth trees and analyze the relative stability of memories at different tree levels. We also show how new memories and patterns may be generated by the spurious minima. 

\mysection{Model Description}

We consider a system of binary neurons, each of which is denoted by a variable $\sigma_i$ which can take on values $\pm 1$ \citep{doi:10.1073/pnas.79.8.2554}. The state of the entire system is denoted $\boldsymbol{\sigma} \in \{-1, 1\}^N$. A pattern to be stored, or memory, is denoted as $\boldsymbol{\xi}$, where the $i\textsuperscript{th}$ index $\xi_i$ is the state of the $i\textsuperscript{th}$ neuron in the memory. We define the following as the energy function for the system:
\begin{align} \label{eq: activation function}
   E(\boldsymbol\sigma) = -\sum_{\mu=1}^M F(\boldsymbol{\xi}^\mu \cdot \boldsymbol{\sigma}),
\end{align}
where $F(x)$ is an activation function which here takes in as input the dot product between the memory and the current state of the system. Here, we will consider polynomial activation functions, i.e. $F(x) = x^n$. Also, $M$ is the total number of encoded memories. 
Recovery of memories happens by performing local descent starting at a probe point $\boldsymbol{\sigma}^0$ until a fixed point (local minimum) is reached. 

 In the toy model of data we consider, the memories are correlated in a tree structure, see ~\cref{fig:cartoon1}. The central root prototype is denoted in the figure as $\mathbf{g}^0$. Derived from this pattern are successive layers of prototypes, each generating the next layer. This continues arbitrarily many times until finally one reaches the bottom layer of the tree. In our model, this tree is taken to be of depth $h$. From each prototype, we derive $K-1$ memories and so the tree is degree $K$ (from the root we derive $K$ memories). We take each derived memory to be generated by slightly corrupting its prototype with a finite density of noise $\mathbf{e}$. This noise is modeled with a random vector where each entry $e_i \in \{\pm 1\}$ is i.i.d. and $\mathbb{P}(e_i = -1) = p$. We restrict to $0 < p < 1/2$. Then, for example, a level $\ell$ prototype $\mathbf{g}^\ell$, which is derived from a level $\ell - 1$ prototype $\mathbf{g}^{\ell - 1}$ takes the form
\begin{align} \label{eq:noise}
    \mathbf{g}^\ell = \mathbf{g}^{\ell - 1} \circ \mathbf{e}
\end{align}
where here $\circ$ denotes element-wise multiplication, i.e. $g^\ell_i = g^{\ell-1}_i e_i$. For simplicity we maintain a uniform correlation between prototypes and derived memories and a uniform degree, i.e. $p$ and $K$ are uniform throughout the tree. Importantly, we initialize the network \textit{with only leaf memories $\mathbf{\xi}$}. That is, in the energy function ~\cref{eq: activation function}, the network only penalizes states based on their Hamming distance from the leaf memories. 

We remark on one important point. Often times in Hopfield models, the patterns represent bit configurations, such as the states of pixels in a black and white image. In that case, if a bit in the state is flipped "on", then the corresponding pixel is lit up. The model we present could indeed correspond to such a system and is a natural way of viewing things, but it could also correspond to something more general. For instance, two images depicting similar objects may look completely uncorrelated from a pixel-to-pixel point of view, but are in fact highly correlated in some latent space of objects. Our model is intended to capture these types of correlational structures as well, if the right variables are used as inputs.

\mysection{Memory Stability and Emergent Minima} 
For any pattern, its stability to perturbations is given by the following \textit{energy gap}:
\begin{align} \label{eq:gap}
    \Delta E (\boldsymbol{\sigma}) = E(\boldsymbol{\sigma} - 2 \sigma_i \hat{\mathbf{e}}_i)-E(\boldsymbol{\sigma}).
\end{align}
This quantity measures the gap in energy between the state of the system $\boldsymbol{\sigma}$ and the state with the bit at position $i$ flipped. If the energy gap is positive for all possible flips, then the pattern is indeed a local minima. The statistics of this gap inform us as to which patterns are local energy minima under the random ensemble in consideration. In particular, we are interested in whether the mean energy gap is positive, i.e. $\langle \Delta E \rangle > 0$, and in the squared coefficient of variation, $(\mathcal{C}^v)^2$, i.e. ratio between the energy gap variance and the mean squared. This is akin to an inverse signal to noise ratio and it sets the probability that a pattern has a negative energy gap. This analysis yields the familiar linear memory capacities of traditional Hopfield models and the superlinear capacities of dense associative memories \cite{doi:10.1073/pnas.79.8.2554, Geszti1990-yk, krotov2016dense,PhysRevLett.132.077301}. 
\begin{figure}
    \centering
    \includegraphics[width=.99\linewidth]{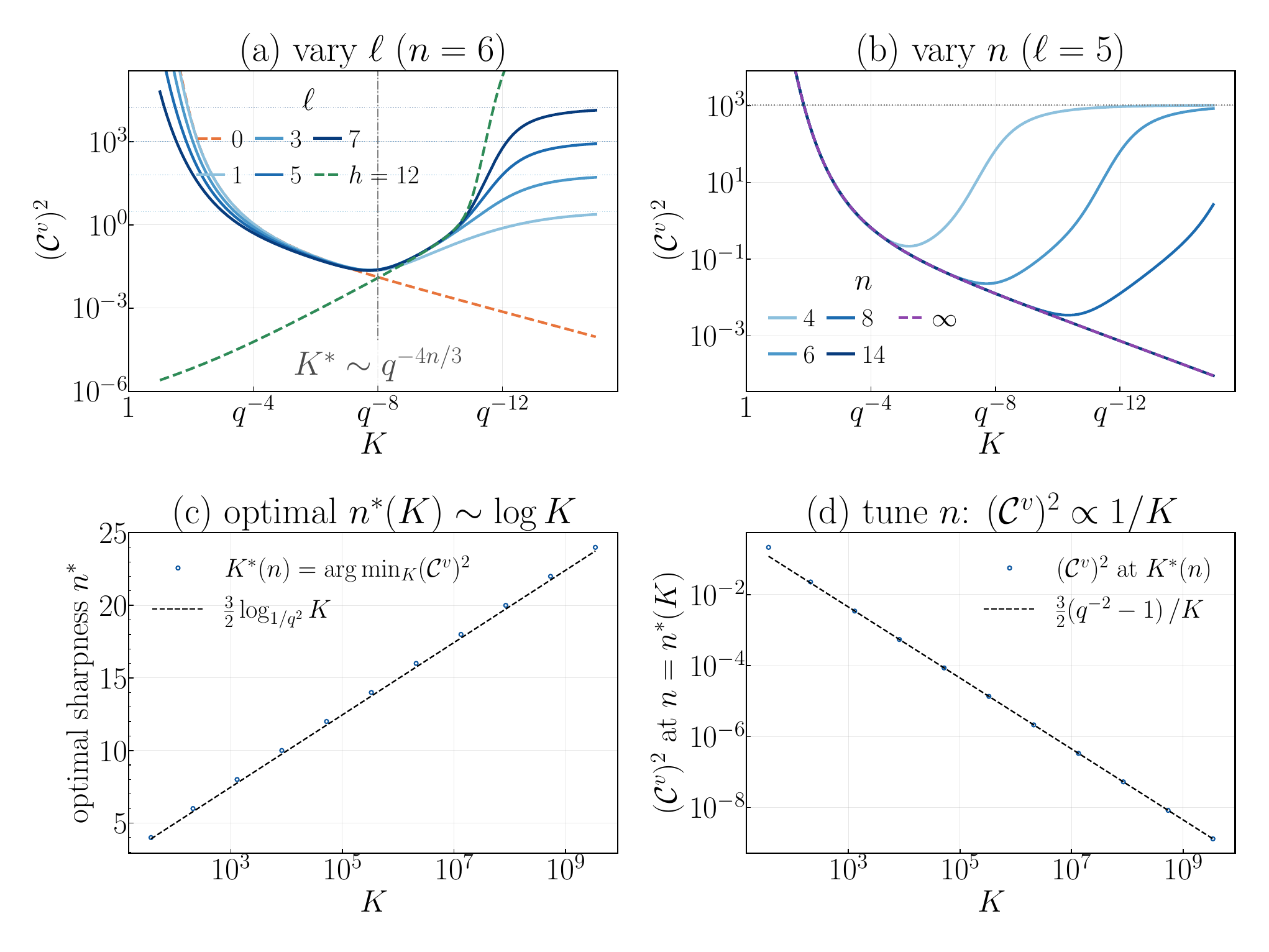}
    \caption{The squared coefficient of variation, $(\mathcal{C}^v_\ell)^2$, can be made asymptotically small at all $\ell$ simultaneously with the appropriate choice of sharpness, $n$. {\textbf{(a)}} Varying $\ell$ at a fixed $n=6$ demonstrates how the stability in the intermediate $K$ regime is largely $\ell$ independent, but beyond the crossover, patterns closer to the leaves become highly unstable. At the crossover all levels are approximately equally stable. {\textbf{(b)}} Varying the activation power $n$ at fixed depth ($\ell = 5$) causes the coefficient of variation to diverge when $K<q^{-2}$, reach a minimum for larger $K$ before saturating as $K \to \infty$. {\textbf{(c)}} Optimal value of $n$, $n^*$ vs $K$. {\textbf{(d)}} Tuning $n$ to the optimal value $n^*(K)$ (as shown in {\textbf{(c)}}), which minimizes $(\mathcal{C}^v_\ell)^2$ given other variables are fixed, leads to $(\mathcal{C}^v_\ell)^2 \sim 1/K$ at all depths (here shown for $\ell = 5$).}
    \label{fig:setting3}
\end{figure}

When $\boldsymbol{\sigma}$ corresponds to an encoded leaf memory $\xi$, stability is synonymous with \textit{memorization} -- the network can recover a stored pattern. On the other hand, the second question posed in the introduction pertains to whether the model can understand and recover the underlying correlational structure in the data. In this case, we can assess the energy gap statistics for when $\boldsymbol{\sigma}$ is an ancestor prototype, e.g. $\mathbf{g}^{\ell}$. Stability of $\mathbf{g}^\ell$ to perturbations is prototype reconstruction and is what we consider to be \textit{generalization}.

We now discuss evaluation of the statistics of ~\cref{eq:gap} for various $\boldsymbol{\sigma}$. There are three types of memories whose stabilities in which we are interested:  1) ancestor prototypes $\mathbf{g}^{\ell}$ where $0 < \ell < h$, 2) the leaf memories $\boldsymbol{\xi}^\mu$, and 3) the root $\mathbf{g}^0$. As mentioned earlier, the stabilities of these states are governed by the sign of the mean, as well as  $(\mathcal{C}^v_{\ell})^2$. Here, the subscript $\ell$ denotes the level of the prototype and due the symmetry of the model, all prototypes of the same level have the same statistics. 

For the parameter regime $q^{2h} = \omega(n/\sqrt{N})$, the mean and the variance have the following forms for any probe point which is a tree node:
\begin{align} \label{eq: generic moments}
    \langle \Delta E \rangle_\ell &= 2n N^{n-1} \sum_{d=h-\ell}^{h+\ell} C^{(1)}_\ell(d) q^{n d} \\
    \text{Var}[(\Delta E)_\ell] &= 4n^2 N^{2n-2} \sum_{(s,t_1,t_2)} C^{(2)}_{\ell}((s,t_1,t_2)) \nonumber \\
    &\times q^{n(2s+t_1+t_2) -2s}(1-q^{2s}).
\end{align}
In these expressions, $q = 1-2p$. In the second line, $s,t_1,t_2$ are the distances from the probe $\boldsymbol{\sigma}$, target 1 and target 2 respectively to their median point in the tree. The $C_\ell^{(\cdot)}$ are combinatorial coefficients. Within the parameter regime described above, these expressions are valid up to an $o(1)$ relative error. Further evaluation of the above equations and determination of the $C_\ell^{(\cdot)}$ requires sifting through many tedious combinatorics. To organize these computations, we developed a diagrammatic formalism in analogy with how Feynman diagrams help organize the combinatorial problem of perturbation theory in field theory. We describe this formalism in detail, derive the above equations, and establish rigorous control over the approximation errors in the supplementary material. 

From ~\cref{eq: generic moments}, as the combinatorial coefficients $C^{(1)}_\ell$ are always positive, the mean energy gap is strictly positive for any set of parameter values. Therefore the stabilities are entirely determined by the coefficients of variation. 

 We first discuss the case of the ancestor memories, where $\ell$ is in the range $0 < \ell < h$. In ~\cref{fig:setting3}(a), we plot the analytical expression for $(\mathcal{C}^v_{0 < \ell < h})^2$ calculated for $0 < \ell < h$, in the large $h$ limit. Recall that this $\ell$ corresponds to ancestor memories which \textit{were never explicitly seen by the network}. 

We find three distinct regimes in the parameter space as a function of $K$, and its ratio to powers of $q$. First, when $Kq^2 < 1$, the ancestor memories at all $\ell$ are highly unstable, as evidenced by a $(\mathcal{C}^v_{0 < \ell < h})^2$ which quickly diverges as $K$ approaches 0. This is readily understood because the network does not have enough training examples to reconstruct the ancestor memories. Once $Kq^2 > 1$, the squared coefficient of variation becomes finite and falls to a minimum, before beginning to increase again as $K$ is further increased past $K > q^{-(2n-1)}$. Eventually, in the large $K$ limit, the squared coefficient of variation saturates to an $\ell$ dependent plateau. Notably, adding more data does \emph{not} improve recall unless the model capacity ($n$) is increased commensurately.  The height of this plateau diverges rapidly with $\ell$, indicating that the large number of training examples sufficiently interfere with one another and serve to "wash out" the differences between the different ancestor memories.  

The intermediate regime of $K$ between $q^{-2}$ and $q^{-(2n-1)}$ is the most interesting. Here, we observe that $(\mathcal{C}_{0<\ell<h}^v)^2$ decreases roughly as $\sim K^{-1}$ until a minimum point. For large $\ell$, the value of $K$ at which this minimum point occurs saturates to a value $K^* \propto q^{-4n/3}$ and inverting this relation yields the optimal $n^*$ ~\cref{fig:setting3}(c). Furthermore, the actual value of $(\mathcal{C}^v_\ell)^2$ at $K^*$ may be driven downwards with increasing $n$. In ~\cref{fig:setting3}(d) we increase $K$ and tune $n = n^*(K)$ and see that minimum value $(\mathcal{C}^v_\ell)^2$ decreases as roughly $\propto K^{-1}$. What we therefore find is that in this parameter regime, the probability for an ancestor prototype to be unstable is decreasing in $K$, and so long as $n$ is increased commensurately, this probability may be decreased indefinitely towards zero (minima in ~\cref{fig:setting3}(b)). 

This ability to suppress $(\mathcal{C}_{0<\ell<h}^v)^2$ with $K$ and $n$ indicates how we should scale these parameters with $N$. Indeed, if we take $K$ to be an appropriate superlinear function of $N$ and $n \propto \log N$, then $(\mathcal{C}_{0 < \ell < h}^v)^2$ will vanish with $N$. Subsequently, a straightforward application of Chebyshev's inequality reveals that the probability for an ancestor prototype to be unstable vanishes. Furthermore, the stated parameter regime ($q^{2h} = \omega(n/\sqrt{N})$) means that this ability to suppress $(\mathcal{C}_{0 < \ell < h}^v)^2$ is valid up to $h\propto \log N$. In other words, with $ N^{\Theta(\log N) + o(1)}$ amount of training examples, the dense Hopfield network is able to reconstruct prototypes and generalize up to a depth $ \log N$.

Next, let us discuss the stability of the leaf memories $\boldsymbol{\xi}^\mu$ and the root prototype $\mathbf{g}^0$, for which $\ell = h$ and $\ell = 0$ respectively. The leaf stability corresponds entirely to a recall phenomenon rather than that of generalization. In contrast to the ancestor prototypes, we find that $(\mathcal{C}^v_{\ell = h})^2$ monotonically increases with $K$ (green dashed line in ~\cref{fig:setting3}(a)), indicating that leaf memory stability is highest when $K$ is small. Unlike with the ancestor prototypes, the correlations between training examples damages the network's ability to recall them individually. On the other hand for the root prototype, $(\mathcal{C}^v_{\ell = 0})^2$ decreases monotonically in $K$ (orange dashed line in ~\cref{fig:setting3}(a)). In this case, every training example reinforces the root, so its stability is only improved by an increase in $K$.

\begin{figure}
    \centering
    \includegraphics[width=0.95\linewidth]{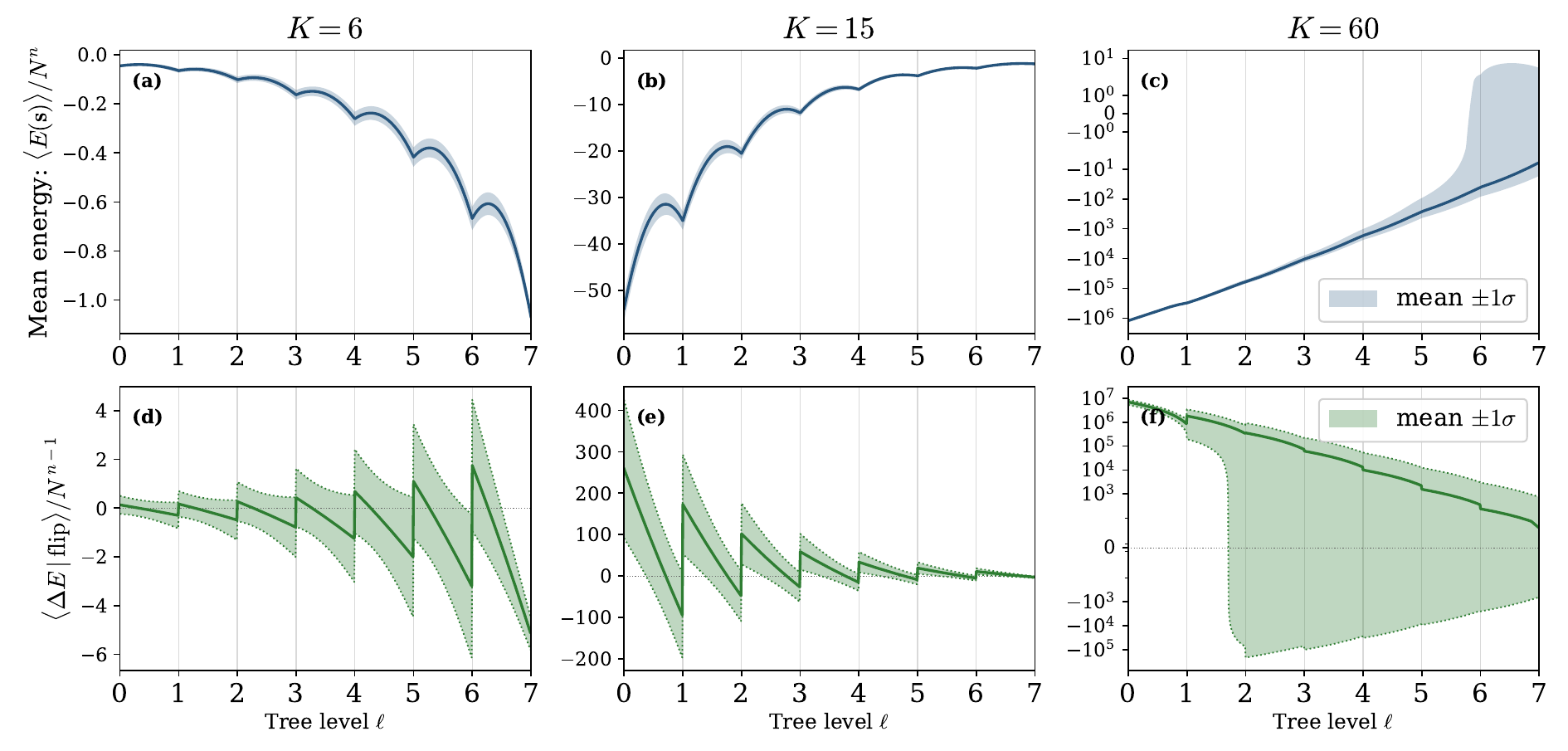}
    \caption{Energy landscape of a hierarchical Hopfield network along a root-to-leaf   walk, computed to leading order in $N$ from the exact tree moments.  The probe traverses each parent--child edge along a shortest Hamming path. Integer positions $\ell$ correspond to the prototypes patterns. The top row shows the mean energy while the bottom row shows the mean energy cost of the next spin flip of the walk.  Shaded regions (dotted edges) denote $\pm1\sigma$ fluctuations over the disorder ensemble. The $\langle E\rangle$ band scales as $N^{-1/2}$ (shown for $N=10^{3}$), while the gap band is $N$-independent. Columns show the three stability regimes of the branching ratio $K$ for $h=7$, $n=3$, and flip probability $p=0.25$: (a),(d) $K=6$, fluctuations destabilize all but the deepest levels (only leaves are stable minima); (b),(e) $K=15$, the gap band clears zero after every node and all levels are stable; (c),(f) $K=60$, only the root remains stable, the relative fluctuations grow with depth toward the large-$K$ plateau $C_v^{2}\!\to\!q^{-2\ell}-1$.  Note the symmetric-log scales in (c),(f), reflecting the geometric growth of the root basin, $\sim(Kq^{n})^{h}$.}
    \label{fig:energy_diagram}
\end{figure}

~\cref{fig:energy_diagram} shows the mean energy landscape as we probe locations from the root to a leaf (following Hamming-distance geodesics from parent to child), passing through all intermediate nodes along the way. Each level is located at a minimum of the average energy, superimposed on an overall trend towards lower/higher energy as we move from the root to the leaves. Subfigure (a) shows that small $K$ leads to greater stability at the leaf, and in general moving down the hierarchy, while (b) shows the opposite effect at larger $K$. Correspondingly subfigures (d) and (e) show the average energy difference corresponding to a spin-flip, moving from the higher to lower prototypes. In (a) $K$ is too small for any but the leaf prototype to be stable which corresponds to the shaded variance of $\Delta E$ overlapping the $x$-axis (it's sign is uncertain) while in (b) the distribution is gapped from 0. (c) and (f) show that once $K$ becomes too large stability towards the leaves collapse entirely due to interference and only the prototypes near the root are stable. 

The local minima have basis of attraction which extend approximately halfway (in Hamming distance) to the immediate parent or child nodes. The sizes of these neighborhoods don't depend strongly on the parameters though the overall trend effects the degree to which points prefer to flow to lower or higher nodes on the tree. 

\mysection{Experiment on Fashion-MNIST}
We extend our analytics on the exactly solvable tree model to a real dataset which displays some features of hierarchical correlations, Fashion-MNIST \cite{xiao2017/online}. These images carry a natural multi-level structure with each image belonging to one of ten defined classes and those classes sharing relative similarity relations with each other. Additionally this dataset largely lacks translation or rotation symmetry which our simplified model is not designed for, but which a more general model should simultaneously manage \cite{kamb2024analytic}. We comment on this further in the discussion. The dense associate memory built from this dataset shows that the sharpness, $n$, still controls which level of the hierarchy is stable. Though the theoretical investigation primarily discusses a scaling of training examples, they also imply how for different $\ell$, the $(\mathcal{C}^v_\ell)^2$ may be adjusted with $n$ (see \cref{fig:setting3}(b)). 

We binarize (per-pixel dithering) the images and store them as memories in a dense associative memory with polynomial activation, initialize the dynamics at a single, ankle boot query image, and follow energy descent to a fixed point. We sweep the sharpness, and the number of images per class (which we analogize to the branching factor, $K$). This traces out the phase diagram of ~\cref{fig:phasediagram} where we color each point by Hamming distance between the recovered fixed point and the original query. We find four well-separated regimes corresponding to distinct levels of the latent hierarchy. For large $n$ or small $K$ the activation is sharp enough that the query is pinned to the query image (the leaf, red) and the network purely memorizes. As $n$ is lowered, the basin flows to an emergent boot prototype and then to the shoe prototype, neither of which are stored explicitly but appear as minima built from the interference of many examples as predicted by the tree model. For the smallest $n$ (left) the flow collapses onto the global root memory (purple star), the over-generalized fixed point. Near the lower triple point the recovered shoe prototype visibly borrows features from the boot prototype while remaining recognizably a shoe. Indeed the shoe prototype appears at a smaller $K$ than the boot prototype indicating that the basin it belongs to encompasses both types of footwear. This progression with decreasing $n$ mirrors the memorization-to-generalization crossover of the toy model, and the existence of stable intermediate phases is the real-data counterpart of the emergent ancestor memories. 
 
\begin{figure}
    \centering
    \includegraphics[width=0.99\linewidth]{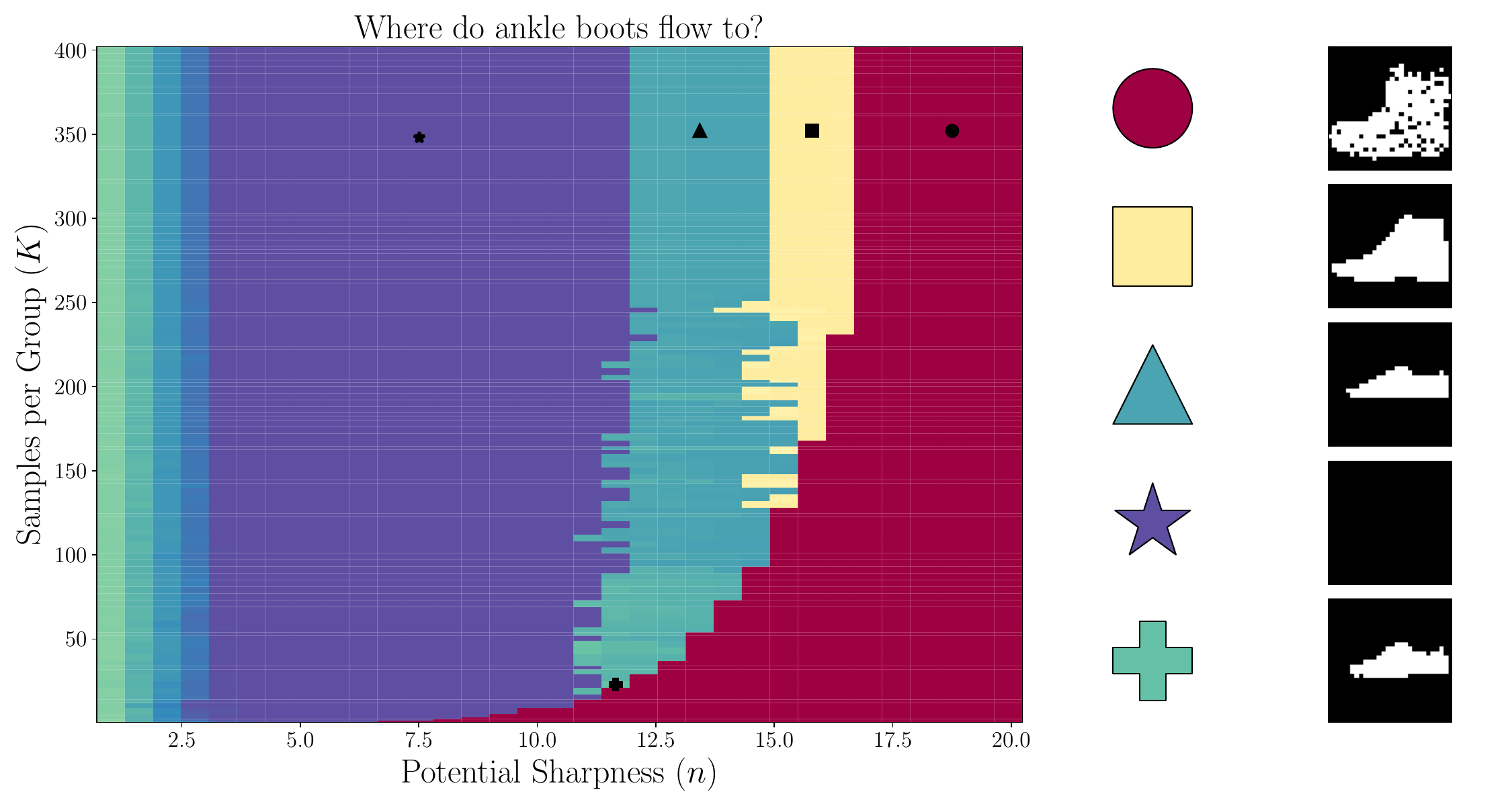}
    \caption{Phase diagram of the dense-associative memory dynamics on Fashion-MNIST initialized at an ankle boot image over a range of activation sharpnesses and group sizes. Color encodes the Hamming distance between the query and the recovered fixed point (red, circle is exactly zero). The five markers locate representative fixed points, shown as images on the right with matching shape and color. Four well-separated regions appear: the original image (leaf node), an emergent boot prototype, an emergent shoe prototype, and the blank root prototype. At the lower triple point (teal plus), the shoe prototype takes on some features of the boot prototype (e.g. a lift at the front of the shoe) while remaining largely consistent with the blue triangle shoe prototype.}
    \label{fig:phasediagram}
\end{figure}

\mysection{Discussion} 
In this work, we have explored a memorization-generalization transition that occurs in dense Hopfield networks, where the emergent memory patterns arise as a direct consequence of the underlying correlation structure in the data. When the data to be stored is correlated in hierarchical manner, then the ancestor patterns of the encoded memories themselves emerge as stable memories in the network. In this way, a type of generalization has occurred by way of prototype reconstruction, where encoded memories serve as ``examples'' for the network to learn a more general ``concept.'' We explore this idea both through a direct analytical signal-to-noise ratio calculation and by numerically probing the energy landscape. These analyses elucidate that the level of generalization is a function of the number of correlated patterns and the activation function, e.g. for some number of training examples below a number which depends on the strength of correlations, all ancestor prototypes have a finite probability to be stable minima. Finally, we qualitatively demonstrate some aspects of this phenomena in a model of real data, Fashion-MNIST.

This work may be extended in numerous directions. First, we have used only a simple one layer network. It would be interesting to explore what would change under more complicated architectures, such as multilayer Hopfield networks and even hierarchical Hopfield networks \cite{ramsauer2021hopfieldnetworksneed, krotov2021hierarchicalassociativememory}. In addition, given the enormous performance boost granted by the usage of self-attention and transformers in modern diffusion model architectures, it would be curious to ask how these types of layers could be used in these associative memories. For example, we performed numerical experiments on Fashion-MNIST, where the emergent memories yielded sensible patterns but in more complicated datasets, for example those with spatial symmetries, we found that that emergent memories became washed out and blurry. We hypothesize that usage of more complex architectures could generalize more complex hierarchically correlated data into discernable patterns. Indeed capturing latent structure was part of the philosophy behind the Boltzmann machine as an improvement on the original Hopfield model \cite{Ackley1985-up}. This work could be a starting point towards understanding the behavior of diffusion models as well as the attention mechanism in transformers.

We also remark on the similarity between our data model and the ultrametric structure of the Gibbs states in a full replica symmetry broken spin glass \cite{Mezard1987-tw, amit1985storing, Parga1986-rf}. In these f-RSB models, this ultrametric structure is an emergent property of the ruggedness of the energy landscape and the presence of exponentially many (in system size) stable local minima. What is interesting is that in the model we present, we also obtain an ultrametric structure of stable local minima, but this structure is "hard-coded" by the data model. We believe an analysis of the contrast of our model to that of a broken replica symmetry spin glass is an intriguing question for future work.

\mysection{Acknowledgements}
We acknowledge helpful discussions with Surya Ganguli, Federico Ghimenti, Mason Kamb, Dmitry Krotov, Aditya Mahadevan, and Akshat Pandey.
A.S. was supported in part by the Office of Naval Research Young Investigator Program (ONR YIP) under Award Number N00014-24-1-2098 and in part by a Packard Fellowship in Science and Engineering (PI: Vedika Khemani). A.S. also acknowledges support from the NSF graduate research fellowship and the ARCS Scholar award. A.C. acknowledges funding from the Simons Foundation (Award ID: 560571).

Claude was used to generate the cartoon in ~\cref{fig:cartoon1} and both Claude and ChatGPT  were used to assist with and proofread the analytical calculations done by the authors.

\let\oldaddcontentsline\addcontentsline%
\renewcommand{\addcontentsline}[3]{}%

\bibliographystyle{apsrev4-2}
\bibliography{main}

%apsrev4-2.bst 2019-01-14 (MD) hand-edited version of apsrev4-1.bst
%Control: key (0)
%Control: author (72) initials jnrlst
%Control: editor formatted (1) identically to author
%Control: production of article title (-1) disabled
%Control: page (0) single
%Control: year (1) truncated
%Control: production of eprint (0) enabled
\begin{thebibliography}{36}%
\makeatletter
\providecommand \@ifxundefined [1]{%
 \@ifx{#1\undefined}
}%
\providecommand \@ifnum [1]{%
 \ifnum #1\expandafter \@firstoftwo
 \else \expandafter \@secondoftwo
 \fi
}%
\providecommand \@ifx [1]{%
 \ifx #1\expandafter \@firstoftwo
 \else \expandafter \@secondoftwo
 \fi
}%
\providecommand \natexlab [1]{#1}%
\providecommand \enquote  [1]{``#1''}%
\providecommand \bibnamefont  [1]{#1}%
\providecommand \bibfnamefont [1]{#1}%
\providecommand \citenamefont [1]{#1}%
\providecommand \href@noop [0]{\@secondoftwo}%
\providecommand \href [0]{\begingroup \@sanitize@url \@href}%
\providecommand \@href[1]{\@@startlink{#1}\@@href}%
\providecommand \@@href[1]{\endgroup#1\@@endlink}%
\providecommand \@sanitize@url [0]{\catcode `\\12\catcode `\$12\catcode `\&12\catcode `\#12\catcode `\^12\catcode `\_12\catcode `\%12\relax}%
\providecommand \@@startlink[1]{}%
\providecommand \@@endlink[0]{}%
\providecommand \url  [0]{\begingroup\@sanitize@url \@url }%
\providecommand \@url [1]{\endgroup\@href {#1}{\urlprefix }}%
\providecommand \urlprefix  [0]{URL }%
\providecommand \Eprint [0]{\href }%
\providecommand \doibase [0]{https://doi.org/}%
\providecommand \selectlanguage [0]{\@gobble}%
\providecommand \bibinfo  [0]{\@secondoftwo}%
\providecommand \bibfield  [0]{\@secondoftwo}%
\providecommand \translation [1]{[#1]}%
\providecommand \BibitemOpen [0]{}%
\providecommand \bibitemStop [0]{}%
\providecommand \bibitemNoStop [0]{.\EOS\space}%
\providecommand \EOS [0]{\spacefactor3000\relax}%
\providecommand \BibitemShut  [1]{\csname bibitem#1\endcsname}%
\let\auto@bib@innerbib\@empty
%</preamble>
\bibitem [{\citenamefont {Vaswani}\ \emph {et~al.}(2023)\citenamefont {Vaswani}, \citenamefont {Shazeer}, \citenamefont {Parmar}, \citenamefont {Uszkoreit}, \citenamefont {Jones}, \citenamefont {Gomez}, \citenamefont {Kaiser},\ and\ \citenamefont {Polosukhin}}]{vaswani2023attentionneed}%
  \BibitemOpen
  \bibfield  {author} {\bibinfo {author} {\bibfnamefont {A.}~\bibnamefont {Vaswani}}, \bibinfo {author} {\bibfnamefont {N.}~\bibnamefont {Shazeer}}, \bibinfo {author} {\bibfnamefont {N.}~\bibnamefont {Parmar}}, \bibinfo {author} {\bibfnamefont {J.}~\bibnamefont {Uszkoreit}}, \bibinfo {author} {\bibfnamefont {L.}~\bibnamefont {Jones}}, \bibinfo {author} {\bibfnamefont {A.~N.}\ \bibnamefont {Gomez}}, \bibinfo {author} {\bibfnamefont {L.}~\bibnamefont {Kaiser}},\ and\ \bibinfo {author} {\bibfnamefont {I.}~\bibnamefont {Polosukhin}},\ }\href {https://arxiv.org/abs/1706.03762} {\bibinfo {title} {Attention is all you need}} (\bibinfo {year} {2023}),\ \Eprint {https://arxiv.org/abs/1706.03762} {arXiv:1706.03762 [cs.CL]} \BibitemShut {NoStop}%
\bibitem [{\citenamefont {Devlin}\ \emph {et~al.}(2019)\citenamefont {Devlin}, \citenamefont {Chang}, \citenamefont {Lee},\ and\ \citenamefont {Toutanova}}]{devlin-etal-2019-bert}%
  \BibitemOpen
  \bibfield  {author} {\bibinfo {author} {\bibfnamefont {J.}~\bibnamefont {Devlin}}, \bibinfo {author} {\bibfnamefont {M.-W.}\ \bibnamefont {Chang}}, \bibinfo {author} {\bibfnamefont {K.}~\bibnamefont {Lee}},\ and\ \bibinfo {author} {\bibfnamefont {K.}~\bibnamefont {Toutanova}},\ }in\ \href {https://doi.org/10.18653/v1/N19-1423} {\emph {\bibinfo {booktitle} {Proceedings of the 2019 Conference of the North {A}merican Chapter of the Association for Computational Linguistics: Human Language Technologies, Volume 1 (Long and Short Papers)}}},\ \bibinfo {editor} {edited by\ \bibinfo {editor} {\bibfnamefont {J.}~\bibnamefont {Burstein}}, \bibinfo {editor} {\bibfnamefont {C.}~\bibnamefont {Doran}},\ and\ \bibinfo {editor} {\bibfnamefont {T.}~\bibnamefont {Solorio}}}\ (\bibinfo  {publisher} {Association for Computational Linguistics},\ \bibinfo {address} {Minneapolis, Minnesota},\ \bibinfo {year} {2019})\ pp.\ \bibinfo {pages} {4171--4186}\BibitemShut {NoStop}%
\bibitem [{\citenamefont {Dosovitskiy}\ \emph {et~al.}(2021)\citenamefont {Dosovitskiy}, \citenamefont {Beyer}, \citenamefont {Kolesnikov}, \citenamefont {Weissenborn}, \citenamefont {Zhai}, \citenamefont {Unterthiner}, \citenamefont {Dehghani}, \citenamefont {Minderer}, \citenamefont {Heigold}, \citenamefont {Gelly}, \citenamefont {Uszkoreit},\ and\ \citenamefont {Houlsby}}]{dosovitskiy2021imageworth16x16words}%
  \BibitemOpen
  \bibfield  {author} {\bibinfo {author} {\bibfnamefont {A.}~\bibnamefont {Dosovitskiy}}, \bibinfo {author} {\bibfnamefont {L.}~\bibnamefont {Beyer}}, \bibinfo {author} {\bibfnamefont {A.}~\bibnamefont {Kolesnikov}}, \bibinfo {author} {\bibfnamefont {D.}~\bibnamefont {Weissenborn}}, \bibinfo {author} {\bibfnamefont {X.}~\bibnamefont {Zhai}}, \bibinfo {author} {\bibfnamefont {T.}~\bibnamefont {Unterthiner}}, \bibinfo {author} {\bibfnamefont {M.}~\bibnamefont {Dehghani}}, \bibinfo {author} {\bibfnamefont {M.}~\bibnamefont {Minderer}}, \bibinfo {author} {\bibfnamefont {G.}~\bibnamefont {Heigold}}, \bibinfo {author} {\bibfnamefont {S.}~\bibnamefont {Gelly}}, \bibinfo {author} {\bibfnamefont {J.}~\bibnamefont {Uszkoreit}},\ and\ \bibinfo {author} {\bibfnamefont {N.}~\bibnamefont {Houlsby}},\ }\href {https://arxiv.org/abs/2010.11929} {\bibinfo {title} {An image is worth 16x16 words: Transformers for image recognition at scale}} (\bibinfo {year} {2021}),\ \Eprint {https://arxiv.org/abs/2010.11929} {arXiv:2010.11929
  [cs.CV]} \BibitemShut {NoStop}%
\bibitem [{\citenamefont {Peebles}\ and\ \citenamefont {Xie}(2023)}]{peebles2023scalablediffusionmodelstransformers}%
  \BibitemOpen
  \bibfield  {author} {\bibinfo {author} {\bibfnamefont {W.}~\bibnamefont {Peebles}}\ and\ \bibinfo {author} {\bibfnamefont {S.}~\bibnamefont {Xie}},\ }\href {https://arxiv.org/abs/2212.09748} {\bibinfo {title} {Scalable diffusion models with transformers}} (\bibinfo {year} {2023}),\ \Eprint {https://arxiv.org/abs/2212.09748} {arXiv:2212.09748 [cs.CV]} \BibitemShut {NoStop}%
\bibitem [{\citenamefont {Simon}\ \emph {et~al.}(2026)\citenamefont {Simon}, \citenamefont {Kunin}, \citenamefont {Atanasov}, \citenamefont {Boix-Adserà}, \citenamefont {Bordelon}, \citenamefont {Cohen}, \citenamefont {Ghosh}, \citenamefont {Guth}, \citenamefont {Jacot}, \citenamefont {Kamb}, \citenamefont {Karkada}, \citenamefont {Michaud}, \citenamefont {Ottlik},\ and\ \citenamefont {Turnbull}}]{simon2026scientifictheorydeeplearning}%
  \BibitemOpen
  \bibfield  {author} {\bibinfo {author} {\bibfnamefont {J.}~\bibnamefont {Simon}}, \bibinfo {author} {\bibfnamefont {D.}~\bibnamefont {Kunin}}, \bibinfo {author} {\bibfnamefont {A.}~\bibnamefont {Atanasov}}, \bibinfo {author} {\bibfnamefont {E.}~\bibnamefont {Boix-Adserà}}, \bibinfo {author} {\bibfnamefont {B.}~\bibnamefont {Bordelon}}, \bibinfo {author} {\bibfnamefont {J.}~\bibnamefont {Cohen}}, \bibinfo {author} {\bibfnamefont {N.}~\bibnamefont {Ghosh}}, \bibinfo {author} {\bibfnamefont {F.}~\bibnamefont {Guth}}, \bibinfo {author} {\bibfnamefont {A.}~\bibnamefont {Jacot}}, \bibinfo {author} {\bibfnamefont {M.}~\bibnamefont {Kamb}}, \bibinfo {author} {\bibfnamefont {D.}~\bibnamefont {Karkada}}, \bibinfo {author} {\bibfnamefont {E.~J.}\ \bibnamefont {Michaud}}, \bibinfo {author} {\bibfnamefont {B.}~\bibnamefont {Ottlik}},\ and\ \bibinfo {author} {\bibfnamefont {J.}~\bibnamefont {Turnbull}},\ }\href {https://arxiv.org/abs/2604.21691} {\bibinfo {title} {There will be a scientific theory of deep learning}}
  (\bibinfo {year} {2026}),\ \Eprint {https://arxiv.org/abs/2604.21691} {arXiv:2604.21691 [stat.ML]} \BibitemShut {NoStop}%
\bibitem [{\citenamefont {Sohl-Dickstein}\ \emph {et~al.}(2015)\citenamefont {Sohl-Dickstein}, \citenamefont {Weiss}, \citenamefont {Maheswaranathan},\ and\ \citenamefont {Ganguli}}]{sohldickstein2015deepunsupervisedlearningusing}%
  \BibitemOpen
  \bibfield  {author} {\bibinfo {author} {\bibfnamefont {J.}~\bibnamefont {Sohl-Dickstein}}, \bibinfo {author} {\bibfnamefont {E.~A.}\ \bibnamefont {Weiss}}, \bibinfo {author} {\bibfnamefont {N.}~\bibnamefont {Maheswaranathan}},\ and\ \bibinfo {author} {\bibfnamefont {S.}~\bibnamefont {Ganguli}},\ }\href {https://arxiv.org/abs/1503.03585} {\bibinfo {title} {Deep unsupervised learning using nonequilibrium thermodynamics}} (\bibinfo {year} {2015}),\ \Eprint {https://arxiv.org/abs/1503.03585} {arXiv:1503.03585 [cs.LG]} \BibitemShut {NoStop}%
\bibitem [{\citenamefont {Kamb}\ and\ \citenamefont {Ganguli}(2024)}]{kamb2024analytic}%
  \BibitemOpen
  \bibfield  {author} {\bibinfo {author} {\bibfnamefont {M.}~\bibnamefont {Kamb}}\ and\ \bibinfo {author} {\bibfnamefont {S.}~\bibnamefont {Ganguli}},\ }\href@noop {} {\bibfield  {journal} {\bibinfo  {journal} {arXiv preprint arXiv:2412.20292}\ } (\bibinfo {year} {2024})}\BibitemShut {NoStop}%
\bibitem [{\citenamefont {Hunt}\ \emph {et~al.}(2026)\citenamefont {Hunt}, \citenamefont {Kamb},\ and\ \citenamefont {Ganguli}}]{hunt2026exactinformationtheorygeneralization}%
  \BibitemOpen
  \bibfield  {author} {\bibinfo {author} {\bibfnamefont {H.}~\bibnamefont {Hunt}}, \bibinfo {author} {\bibfnamefont {M.}~\bibnamefont {Kamb}},\ and\ \bibinfo {author} {\bibfnamefont {S.}~\bibnamefont {Ganguli}},\ }\href {https://arxiv.org/abs/2607.08041} {\bibinfo {title} {An exact information theory of generalization phase transitions in bayesian diffusion models}} (\bibinfo {year} {2026}),\ \Eprint {https://arxiv.org/abs/2607.08041} {arXiv:2607.08041 [cs.LG]} \BibitemShut {NoStop}%
\bibitem [{\citenamefont {Pham}\ \emph {et~al.}(2026)\citenamefont {Pham}, \citenamefont {Raya}, \citenamefont {Negri}, \citenamefont {Zaki}, \citenamefont {Ambrogioni},\ and\ \citenamefont {Krotov}}]{pham2026memorizationgeneralizationemergencediffusion}%
  \BibitemOpen
  \bibfield  {author} {\bibinfo {author} {\bibfnamefont {B.}~\bibnamefont {Pham}}, \bibinfo {author} {\bibfnamefont {G.}~\bibnamefont {Raya}}, \bibinfo {author} {\bibfnamefont {M.}~\bibnamefont {Negri}}, \bibinfo {author} {\bibfnamefont {M.~J.}\ \bibnamefont {Zaki}}, \bibinfo {author} {\bibfnamefont {L.}~\bibnamefont {Ambrogioni}},\ and\ \bibinfo {author} {\bibfnamefont {D.}~\bibnamefont {Krotov}},\ }\href {https://arxiv.org/abs/2505.21777} {\bibinfo {title} {Memorization to generalization: Emergence of diffusion models from associative memory}} (\bibinfo {year} {2026}),\ \Eprint {https://arxiv.org/abs/2505.21777} {arXiv:2505.21777 [cs.LG]} \BibitemShut {NoStop}%
\bibitem [{\citenamefont {Niedoba}\ \emph {et~al.}(2025)\citenamefont {Niedoba}, \citenamefont {Zwartsenberg}, \citenamefont {Murphy},\ and\ \citenamefont {Wood}}]{niedoba2025mechanisticexplanationdiffusionmodel}%
  \BibitemOpen
  \bibfield  {author} {\bibinfo {author} {\bibfnamefont {M.}~\bibnamefont {Niedoba}}, \bibinfo {author} {\bibfnamefont {B.}~\bibnamefont {Zwartsenberg}}, \bibinfo {author} {\bibfnamefont {K.}~\bibnamefont {Murphy}},\ and\ \bibinfo {author} {\bibfnamefont {F.}~\bibnamefont {Wood}},\ }\href {https://arxiv.org/abs/2411.19339} {\bibinfo {title} {Towards a mechanistic explanation of diffusion model generalization}} (\bibinfo {year} {2025}),\ \Eprint {https://arxiv.org/abs/2411.19339} {arXiv:2411.19339 [cs.LG]} \BibitemShut {NoStop}%
\bibitem [{\citenamefont {Cui}\ \emph {et~al.}(2025)\citenamefont {Cui}, \citenamefont {Pehlevan},\ and\ \citenamefont {Lu}}]{cui2025solvablemodellearninggenerative}%
  \BibitemOpen
  \bibfield  {author} {\bibinfo {author} {\bibfnamefont {H.}~\bibnamefont {Cui}}, \bibinfo {author} {\bibfnamefont {C.}~\bibnamefont {Pehlevan}},\ and\ \bibinfo {author} {\bibfnamefont {Y.~M.}\ \bibnamefont {Lu}},\ }\href {https://arxiv.org/abs/2501.03937} {\bibinfo {title} {A solvable model of learning generative diffusion: theory and insights}} (\bibinfo {year} {2025}),\ \Eprint {https://arxiv.org/abs/2501.03937} {arXiv:2501.03937 [cs.LG]} \BibitemShut {NoStop}%
\bibitem [{\citenamefont {Hopfield}(1982)}]{doi:10.1073/pnas.79.8.2554}%
  \BibitemOpen
  \bibfield  {author} {\bibinfo {author} {\bibfnamefont {J.~J.}\ \bibnamefont {Hopfield}},\ }\href {https://doi.org/10.1073/pnas.79.8.2554} {\bibfield  {journal} {\bibinfo  {journal} {Proceedings of the National Academy of Sciences}\ }\textbf {\bibinfo {volume} {79}},\ \bibinfo {pages} {2554} (\bibinfo {year} {1982})},\ \Eprint {https://arxiv.org/abs/https://www.pnas.org/doi/pdf/10.1073/pnas.79.8.2554} {https://www.pnas.org/doi/pdf/10.1073/pnas.79.8.2554} \BibitemShut {NoStop}%
\bibitem [{\citenamefont {Krotov}\ \emph {et~al.}(2025)\citenamefont {Krotov}, \citenamefont {Hoover}, \citenamefont {Ram},\ and\ \citenamefont {Pham}}]{krotov2025modernmethodsassociativememory}%
  \BibitemOpen
  \bibfield  {author} {\bibinfo {author} {\bibfnamefont {D.}~\bibnamefont {Krotov}}, \bibinfo {author} {\bibfnamefont {B.}~\bibnamefont {Hoover}}, \bibinfo {author} {\bibfnamefont {P.}~\bibnamefont {Ram}},\ and\ \bibinfo {author} {\bibfnamefont {B.}~\bibnamefont {Pham}},\ }\href {https://arxiv.org/abs/2507.06211} {\bibinfo {title} {Modern methods in associative memory}} (\bibinfo {year} {2025}),\ \Eprint {https://arxiv.org/abs/2507.06211} {arXiv:2507.06211 [cs.LG]} \BibitemShut {NoStop}%
\bibitem [{\citenamefont {Ramsauer}\ \emph {et~al.}(2021)\citenamefont {Ramsauer}, \citenamefont {Schäfl}, \citenamefont {Lehner}, \citenamefont {Seidl}, \citenamefont {Widrich}, \citenamefont {Adler}, \citenamefont {Gruber}, \citenamefont {Holzleitner}, \citenamefont {Pavlović}, \citenamefont {Sandve}, \citenamefont {Greiff}, \citenamefont {Kreil}, \citenamefont {Kopp}, \citenamefont {Klambauer}, \citenamefont {Brandstetter},\ and\ \citenamefont {Hochreiter}}]{ramsauer2021hopfieldnetworksneed}%
  \BibitemOpen
  \bibfield  {author} {\bibinfo {author} {\bibfnamefont {H.}~\bibnamefont {Ramsauer}}, \bibinfo {author} {\bibfnamefont {B.}~\bibnamefont {Schäfl}}, \bibinfo {author} {\bibfnamefont {J.}~\bibnamefont {Lehner}}, \bibinfo {author} {\bibfnamefont {P.}~\bibnamefont {Seidl}}, \bibinfo {author} {\bibfnamefont {M.}~\bibnamefont {Widrich}}, \bibinfo {author} {\bibfnamefont {T.}~\bibnamefont {Adler}}, \bibinfo {author} {\bibfnamefont {L.}~\bibnamefont {Gruber}}, \bibinfo {author} {\bibfnamefont {M.}~\bibnamefont {Holzleitner}}, \bibinfo {author} {\bibfnamefont {M.}~\bibnamefont {Pavlović}}, \bibinfo {author} {\bibfnamefont {G.~K.}\ \bibnamefont {Sandve}}, \bibinfo {author} {\bibfnamefont {V.}~\bibnamefont {Greiff}}, \bibinfo {author} {\bibfnamefont {D.}~\bibnamefont {Kreil}}, \bibinfo {author} {\bibfnamefont {M.}~\bibnamefont {Kopp}}, \bibinfo {author} {\bibfnamefont {G.}~\bibnamefont {Klambauer}}, \bibinfo {author} {\bibfnamefont {J.}~\bibnamefont {Brandstetter}},\ and\ \bibinfo {author} {\bibfnamefont
  {S.}~\bibnamefont {Hochreiter}},\ }\href {https://arxiv.org/abs/2008.02217} {\bibinfo {title} {Hopfield networks is all you need}} (\bibinfo {year} {2021}),\ \Eprint {https://arxiv.org/abs/2008.02217} {arXiv:2008.02217 [cs.NE]} \BibitemShut {NoStop}%
\bibitem [{\citenamefont {Gaudio}\ \emph {et~al.}(2026)\citenamefont {Gaudio}, \citenamefont {Ghimenti},\ and\ \citenamefont {Ganguli}}]{delgaudio2026shorttermplasticityrecallsforgotten}%
  \BibitemOpen
  \bibfield  {author} {\bibinfo {author} {\bibfnamefont {M.~D.}\ \bibnamefont {Gaudio}}, \bibinfo {author} {\bibfnamefont {F.}~\bibnamefont {Ghimenti}},\ and\ \bibinfo {author} {\bibfnamefont {S.}~\bibnamefont {Ganguli}},\ }\href {https://arxiv.org/abs/2511.22848} {\bibinfo {title} {Short-term plasticity recalls forgotten memories through a trampoline mechanism}} (\bibinfo {year} {2026}),\ \Eprint {https://arxiv.org/abs/2511.22848} {arXiv:2511.22848 [q-bio.NC]} \BibitemShut {NoStop}%
\bibitem [{\citenamefont {Lufkin}\ \emph {et~al.}(2026)\citenamefont {Lufkin}, \citenamefont {Figliolia}, \citenamefont {Millidge},\ and\ \citenamefont {Krishnamurthy}}]{lufkin2026hybridassociativememories}%
  \BibitemOpen
  \bibfield  {author} {\bibinfo {author} {\bibfnamefont {L.}~\bibnamefont {Lufkin}}, \bibinfo {author} {\bibfnamefont {T.}~\bibnamefont {Figliolia}}, \bibinfo {author} {\bibfnamefont {B.}~\bibnamefont {Millidge}},\ and\ \bibinfo {author} {\bibfnamefont {K.}~\bibnamefont {Krishnamurthy}},\ }\href {https://arxiv.org/abs/2603.22325} {\bibinfo {title} {Hybrid associative memories}} (\bibinfo {year} {2026}),\ \Eprint {https://arxiv.org/abs/2603.22325} {arXiv:2603.22325 [cs.LG]} \BibitemShut {NoStop}%
\bibitem [{\citenamefont {Steinberg}\ and\ \citenamefont {Sompolinsky}(2022)}]{Steinberg2022-jd}%
  \BibitemOpen
  \bibfield  {author} {\bibinfo {author} {\bibfnamefont {J.}~\bibnamefont {Steinberg}}\ and\ \bibinfo {author} {\bibfnamefont {H.}~\bibnamefont {Sompolinsky}},\ }\href@noop {} {\bibfield  {journal} {\bibinfo  {journal} {Sci. Rep.}\ }\textbf {\bibinfo {volume} {12}},\ \bibinfo {pages} {21808} (\bibinfo {year} {2022})}\BibitemShut {NoStop}%
\bibitem [{\citenamefont {Krotov}\ and\ \citenamefont {Hopfield}(2016)}]{krotov2016dense}%
  \BibitemOpen
  \bibfield  {author} {\bibinfo {author} {\bibfnamefont {D.}~\bibnamefont {Krotov}}\ and\ \bibinfo {author} {\bibfnamefont {J.~J.}\ \bibnamefont {Hopfield}},\ }\href@noop {} {\bibfield  {journal} {\bibinfo  {journal} {Advances in neural information processing systems}\ }\textbf {\bibinfo {volume} {29}} (\bibinfo {year} {2016})}\BibitemShut {NoStop}%
\bibitem [{\citenamefont {Lucibello}\ and\ \citenamefont {M\'ezard}(2024)}]{PhysRevLett.132.077301}%
  \BibitemOpen
  \bibfield  {author} {\bibinfo {author} {\bibfnamefont {C.}~\bibnamefont {Lucibello}}\ and\ \bibinfo {author} {\bibfnamefont {M.}~\bibnamefont {M\'ezard}},\ }\href {https://doi.org/10.1103/PhysRevLett.132.077301} {\bibfield  {journal} {\bibinfo  {journal} {Phys. Rev. Lett.}\ }\textbf {\bibinfo {volume} {132}},\ \bibinfo {pages} {077301} (\bibinfo {year} {2024})}\BibitemShut {NoStop}%
\bibitem [{\citenamefont {Kafraj}\ \emph {et~al.}(2026)\citenamefont {Kafraj}, \citenamefont {Krotov},\ and\ \citenamefont {Latham}}]{kafraj2026biologicallyplausibledenseassociative}%
  \BibitemOpen
  \bibfield  {author} {\bibinfo {author} {\bibfnamefont {M.~S.}\ \bibnamefont {Kafraj}}, \bibinfo {author} {\bibfnamefont {D.}~\bibnamefont {Krotov}},\ and\ \bibinfo {author} {\bibfnamefont {P.~E.}\ \bibnamefont {Latham}},\ }\href {https://arxiv.org/abs/2601.00984} {\bibinfo {title} {A biologically plausible dense associative memory with exponential capacity}} (\bibinfo {year} {2026}),\ \Eprint {https://arxiv.org/abs/2601.00984} {arXiv:2601.00984 [q-bio.NC]} \BibitemShut {NoStop}%
\bibitem [{\citenamefont {Ambrogioni}(2023)}]{ambrogioni2023search}%
  \BibitemOpen
  \bibfield  {author} {\bibinfo {author} {\bibfnamefont {L.}~\bibnamefont {Ambrogioni}},\ }\href@noop {} {\bibfield  {journal} {\bibinfo  {journal} {arXiv preprint arXiv:2309.17290}\ } (\bibinfo {year} {2023})}\BibitemShut {NoStop}%
\bibitem [{\citenamefont {Niu}\ \emph {et~al.}(2024)\citenamefont {Niu}, \citenamefont {Bai}, \citenamefont {Deng},\ and\ \citenamefont {Han}}]{niu2024scalinglawsunderstandingtransformer}%
  \BibitemOpen
  \bibfield  {author} {\bibinfo {author} {\bibfnamefont {X.}~\bibnamefont {Niu}}, \bibinfo {author} {\bibfnamefont {B.}~\bibnamefont {Bai}}, \bibinfo {author} {\bibfnamefont {L.}~\bibnamefont {Deng}},\ and\ \bibinfo {author} {\bibfnamefont {W.}~\bibnamefont {Han}},\ }\href {https://arxiv.org/abs/2405.08707} {\bibinfo {title} {Beyond scaling laws: Understanding transformer performance with associative memory}} (\bibinfo {year} {2024}),\ \Eprint {https://arxiv.org/abs/2405.08707} {arXiv:2405.08707 [cs.LG]} \BibitemShut {NoStop}%
\bibitem [{\citenamefont {Hoover}\ \emph {et~al.}(2026)\citenamefont {Hoover}, \citenamefont {Shi}, \citenamefont {Balasubramanian}, \citenamefont {Krotov},\ and\ \citenamefont {Ram}}]{hoover2026denseassociativememoryepanechnikov}%
  \BibitemOpen
  \bibfield  {author} {\bibinfo {author} {\bibfnamefont {B.}~\bibnamefont {Hoover}}, \bibinfo {author} {\bibfnamefont {Z.}~\bibnamefont {Shi}}, \bibinfo {author} {\bibfnamefont {K.}~\bibnamefont {Balasubramanian}}, \bibinfo {author} {\bibfnamefont {D.}~\bibnamefont {Krotov}},\ and\ \bibinfo {author} {\bibfnamefont {P.}~\bibnamefont {Ram}},\ }\href {https://arxiv.org/abs/2506.10801} {\bibinfo {title} {Dense associative memory with epanechnikov energy}} (\bibinfo {year} {2026}),\ \Eprint {https://arxiv.org/abs/2506.10801} {arXiv:2506.10801 [cs.LG]} \BibitemShut {NoStop}%
\bibitem [{\citenamefont {Dotsenko}(1986)}]{DOTSENKO1986410}%
  \BibitemOpen
  \bibfield  {author} {\bibinfo {author} {\bibfnamefont {V.~S.}\ \bibnamefont {Dotsenko}},\ }\href {https://doi.org/https://doi.org/10.1016/0378-4371(86)90248-7} {\bibfield  {journal} {\bibinfo  {journal} {Physica A: Statistical Mechanics and its Applications}\ }\textbf {\bibinfo {volume} {140}},\ \bibinfo {pages} {410} (\bibinfo {year} {1986})}\BibitemShut {NoStop}%
\bibitem [{\citenamefont {Engel}(1990)}]{engelJPA}%
  \BibitemOpen
  \bibfield  {author} {\bibinfo {author} {\bibfnamefont {A.}~\bibnamefont {Engel}},\ }\bibfield  {booktitle} {\emph {\bibinfo {booktitle} {Journal of Physics A: Mathematical and General}},\ }\href {https://doi.org/10.1088/0305-4470/23/12/034} {\ \textbf {\bibinfo {volume} {23}},\ \bibinfo {pages} {2587} (\bibinfo {year} {1990})}\BibitemShut {NoStop}%
\bibitem [{\citenamefont {Agliari}\ \emph {et~al.}(2013)\citenamefont {Agliari}, \citenamefont {Barra}, \citenamefont {De~Antoni},\ and\ \citenamefont {Galluzzi}}]{Agliari:2013wo}%
  \BibitemOpen
  \bibfield  {author} {\bibinfo {author} {\bibfnamefont {E.}~\bibnamefont {Agliari}}, \bibinfo {author} {\bibfnamefont {A.}~\bibnamefont {Barra}}, \bibinfo {author} {\bibfnamefont {A.}~\bibnamefont {De~Antoni}},\ and\ \bibinfo {author} {\bibfnamefont {A.}~\bibnamefont {Galluzzi}},\ }\href {https://doi.org/10.1016/j.neunet.2012.11.010} {\bibfield  {journal} {\bibinfo  {journal} {Neural Netw}\ }\textbf {\bibinfo {volume} {38}},\ \bibinfo {pages} {52} (\bibinfo {year} {2013})}\BibitemShut {NoStop}%
\bibitem [{\citenamefont {Sargolzaei}\ and\ \citenamefont {Rueda}(2025)}]{sargolzaei2025hierarchical}%
  \BibitemOpen
  \bibfield  {author} {\bibinfo {author} {\bibfnamefont {S.}~\bibnamefont {Sargolzaei}}\ and\ \bibinfo {author} {\bibfnamefont {L.}~\bibnamefont {Rueda}},\ }in\ \href@noop {} {\emph {\bibinfo {booktitle} {New Frontiers in Associative Memory Workshop at the International Conference on Learning Representations (ICLR)}}}\ (\bibinfo {year} {2025})\BibitemShut {NoStop}%
\bibitem [{\citenamefont {Krotov}(2021)}]{krotov2021hierarchicalassociativememory}%
  \BibitemOpen
  \bibfield  {author} {\bibinfo {author} {\bibfnamefont {D.}~\bibnamefont {Krotov}},\ }\href {https://arxiv.org/abs/2107.06446} {\bibinfo {title} {Hierarchical associative memory}} (\bibinfo {year} {2021}),\ \Eprint {https://arxiv.org/abs/2107.06446} {arXiv:2107.06446 [cs.NE]} \BibitemShut {NoStop}%
\bibitem [{\citenamefont {Agliari}\ \emph {et~al.}(2023)\citenamefont {Agliari}, \citenamefont {Albanese}, \citenamefont {Alemanno}, \citenamefont {Alessandrelli}, \citenamefont {Barra}, \citenamefont {Giannotti}, \citenamefont {Lotito},\ and\ \citenamefont {Pedreschi}}]{Agliari2023-an}%
  \BibitemOpen
  \bibfield  {author} {\bibinfo {author} {\bibfnamefont {E.}~\bibnamefont {Agliari}}, \bibinfo {author} {\bibfnamefont {L.}~\bibnamefont {Albanese}}, \bibinfo {author} {\bibfnamefont {F.}~\bibnamefont {Alemanno}}, \bibinfo {author} {\bibfnamefont {A.}~\bibnamefont {Alessandrelli}}, \bibinfo {author} {\bibfnamefont {A.}~\bibnamefont {Barra}}, \bibinfo {author} {\bibfnamefont {F.}~\bibnamefont {Giannotti}}, \bibinfo {author} {\bibfnamefont {D.}~\bibnamefont {Lotito}},\ and\ \bibinfo {author} {\bibfnamefont {D.}~\bibnamefont {Pedreschi}},\ }\href@noop {} {\bibfield  {journal} {\bibinfo  {journal} {Physica A}\ }\textbf {\bibinfo {volume} {626}},\ \bibinfo {pages} {129076} (\bibinfo {year} {2023})}\BibitemShut {NoStop}%
\bibitem [{\citenamefont {Agliari}\ \emph {et~al.}(2021)\citenamefont {Agliari}, \citenamefont {Alemanno}, \citenamefont {Barra},\ and\ \citenamefont {Marzo}}]{agliari2021emergenceconceptshallowneural}%
  \BibitemOpen
  \bibfield  {author} {\bibinfo {author} {\bibfnamefont {E.}~\bibnamefont {Agliari}}, \bibinfo {author} {\bibfnamefont {F.}~\bibnamefont {Alemanno}}, \bibinfo {author} {\bibfnamefont {A.}~\bibnamefont {Barra}},\ and\ \bibinfo {author} {\bibfnamefont {G.~D.}\ \bibnamefont {Marzo}},\ }\href {https://arxiv.org/abs/2109.00454} {\bibinfo {title} {The emergence of a concept in shallow neural networks}} (\bibinfo {year} {2021}),\ \Eprint {https://arxiv.org/abs/2109.00454} {arXiv:2109.00454 [cond-mat.dis-nn]} \BibitemShut {NoStop}%
\bibitem [{\citenamefont {Geszti}(1990)}]{Geszti1990-yk}%
  \BibitemOpen
  \bibfield  {author} {\bibinfo {author} {\bibfnamefont {T.}~\bibnamefont {Geszti}},\ }\href@noop {} {\emph {\bibinfo {title} {Physical models of neural networks}}}\ (\bibinfo  {publisher} {World Scientific Publishing},\ \bibinfo {address} {Singapore, Singapore},\ \bibinfo {year} {1990})\BibitemShut {NoStop}%
\bibitem [{\citenamefont {Xiao}\ \emph {et~al.}(2017)\citenamefont {Xiao}, \citenamefont {Rasul},\ and\ \citenamefont {Vollgraf}}]{xiao2017/online}%
  \BibitemOpen
  \bibfield  {author} {\bibinfo {author} {\bibfnamefont {H.}~\bibnamefont {Xiao}}, \bibinfo {author} {\bibfnamefont {K.}~\bibnamefont {Rasul}},\ and\ \bibinfo {author} {\bibfnamefont {R.}~\bibnamefont {Vollgraf}},\ }\href@noop {} {\bibinfo {title} {Fashion-mnist: a novel image dataset for benchmarking machine learning algorithms}} (\bibinfo {year} {2017}),\ \Eprint {https://arxiv.org/abs/cs.LG/1708.07747} {cs.LG/1708.07747} \BibitemShut {NoStop}%
\bibitem [{\citenamefont {Ackley}\ \emph {et~al.}(1985)\citenamefont {Ackley}, \citenamefont {Hinton},\ and\ \citenamefont {Sejnowski}}]{Ackley1985-up}%
  \BibitemOpen
  \bibfield  {author} {\bibinfo {author} {\bibfnamefont {D.}~\bibnamefont {Ackley}}, \bibinfo {author} {\bibfnamefont {G.}~\bibnamefont {Hinton}},\ and\ \bibinfo {author} {\bibfnamefont {T.}~\bibnamefont {Sejnowski}},\ }\href@noop {} {\bibfield  {journal} {\bibinfo  {journal} {Cogn. Sci.}\ }\textbf {\bibinfo {volume} {9}},\ \bibinfo {pages} {147} (\bibinfo {year} {1985})}\BibitemShut {NoStop}%
\bibitem [{\citenamefont {Mezard}\ \emph {et~al.}(1987)\citenamefont {Mezard}, \citenamefont {Parisi},\ and\ \citenamefont {Virasoro}}]{Mezard1987-tw}%
  \BibitemOpen
  \bibfield  {author} {\bibinfo {author} {\bibfnamefont {M.}~\bibnamefont {Mezard}}, \bibinfo {author} {\bibfnamefont {G.}~\bibnamefont {Parisi}},\ and\ \bibinfo {author} {\bibfnamefont {M.~A.}\ \bibnamefont {Virasoro}},\ }\href@noop {} {\emph {\bibinfo {title} {Spin glass theory and beyond: An introduction to the replica method and its applications}}},\ World Scientific Lecture Notes In Physics\ (\bibinfo  {publisher} {World Scientific Publishing},\ \bibinfo {address} {Singapore, Singapore},\ \bibinfo {year} {1987})\BibitemShut {NoStop}%
\bibitem [{\citenamefont {Amit}\ \emph {et~al.}(1985)\citenamefont {Amit}, \citenamefont {Gutfreund},\ and\ \citenamefont {Sompolinsky}}]{amit1985storing}%
  \BibitemOpen
  \bibfield  {author} {\bibinfo {author} {\bibfnamefont {D.~J.}\ \bibnamefont {Amit}}, \bibinfo {author} {\bibfnamefont {H.}~\bibnamefont {Gutfreund}},\ and\ \bibinfo {author} {\bibfnamefont {H.}~\bibnamefont {Sompolinsky}},\ }\href@noop {} {\bibfield  {journal} {\bibinfo  {journal} {Physical Review Letters}\ }\textbf {\bibinfo {volume} {55}},\ \bibinfo {pages} {1530} (\bibinfo {year} {1985})}\BibitemShut {NoStop}%
\bibitem [{\citenamefont {Parga}\ and\ \citenamefont {Virasoro}(1986)}]{Parga1986-rf}%
  \BibitemOpen
  \bibfield  {author} {\bibinfo {author} {\bibfnamefont {N.}~\bibnamefont {Parga}}\ and\ \bibinfo {author} {\bibfnamefont {M.~A.}\ \bibnamefont {Virasoro}},\ }in\ \href@noop {} {\emph {\bibinfo {booktitle} {World Scientific Lecture Notes in Physics}}}\ (\bibinfo  {publisher} {WORLD SCIENTIFIC},\ \bibinfo {year} {1986})\ pp.\ \bibinfo {pages} {436--443}\BibitemShut {NoStop}%
\end{thebibliography}%

\clearpage

\onecolumngrid

\setcounter{equation}{0}
\setcounter{figure}{0}
\setcounter{table}{0}
\setcounter{page}{1}
\makeatletter
\renewcommand{\thesection}{S\arabic{section}}
\renewcommand{\theequation}{S\arabic{equation}}
\renewcommand{\thefigure}{S\arabic{figure}}

\begin{center}
\textbf{\large Supplemental Material for ``Hierarchical Prototype Emergence in Modern Hopfield Models'' } \\~\\
Aditya Cowsik and Adithya Sriram \\
\textit{
 Department of Physics, Stanford University, Stanford, CA 94305, USA
}

\end{center}

\section{Moment Formulas and Diagrammatic Formalism}

In this section we will derive the analytical results for the energy gap statistics. To begin, let us first formally define the model described in the main text.
\begin{defn} [Dense Hopfield Model with Hierarchical Memories] \label{def: model formal def}
    Let $\mathbf{g}^0 \in \{-1,1\}^N$ be an iid Bernoulli(1/2) random vector. From $\mathbf{g}^0$, generate $K$ new datapoints by sampling random noise vectors $\mathbf{e} \in \{-1,1\}^N$, where each element $e_i$ is i.i.d. according to a Bernoulli distribution with parameter $p$, i.e. $\mathbb{P}(e_i = -1) = p$. We always take $0 < p < 1/2$. The new datapoints are generated by element-wise multiplication with the roots, $\mathbf{g}^1 = \mathbf{g}^0 \circ \mathbf{e}$.

    To form the next layer, generate $K-1$ new datapoints for each $\mathbf{g}^1$ by sampling random noise vectors according to the same distribution. This process is repeated until $h$ layers are generated. The patterns in the final layer, the leaf patterns, are denoted as $\boldsymbol{\xi}^\mu$ and there are overall $K(K-1)^{h-1}$ leaf patterns. The energy function for the dense Hopfield model is given by
    \begin{align}
        E(\boldsymbol{\sigma}) = -\sum_{\mu = 1}^M \left( \boldsymbol{\xi}^\mu \cdot \boldsymbol{\sigma}\right)^n
    \end{align}
    where $n$ is an even integer.
\end{defn}

We seek to calculate the statistics of the energy gap:

\begin{align} \label{eq: mth moment}
    \langle (\Delta E )^m \rangle 
    &= \left \langle  \left[ 2\sum_{\mu = 1}^M \sum_{k\  \rm odd}^n {n \choose k} (\xi_i^\mu \sigma_i) \left(\sum_{j \neq i} \xi^\mu_j \sigma_j \right)^{n-k}  \right]^m \right\rangle 
\end{align}

We will utilize the following result that allows us to more easily calculate the $m$th moment of the energy gap.

\begin{theo} [Approximate $m$th Moment of Energy Gap] \label{theorem: approximate gap}

    Let $\langle (\Delta \tilde{E} )^m \rangle$ be
    \begin{align} \label{eq: expanded gap}
        \langle (\Delta \tilde{E} )^m \rangle = 2^m n^m\sum_{\mu_1,\ldots,\mu_m = 1}^M \;\left\langle \prod_{a=1}^m \, (\xi_i^{\mu_a} \sigma_i)\right\rangle \left(\prod_a^m N^{n-1}q^{(n-1) d(\boldsymbol{\xi}^{\mu_a},\boldsymbol{\sigma})} \right)
    \end{align}
    For $q^{2h} = \omega(mn/\sqrt{N})$, the $m$th moment of the energy gap (\cref{eq: mth moment}) is approximated by 
    \begin{align}  \label{eq: approximate energy gap}
           \langle (\Delta E )^m \rangle\  \leq\  \langle (\Delta \tilde{E} )^m \rangle \cdot (1+o(1)),
    \end{align}

\end{theo}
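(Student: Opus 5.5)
The plan is to expand the single-spin-flip difference exactly, identify \cref{eq: expanded gap} as the contribution of a dominant class of terms, and show that every other term is suppressed by a factor controlled by $m^2n^2/(Nq^{4h})$. The hypothesis $q^{2h}=\omega(mn/\sqrt N)$ is exactly the statement that this factor is $o(1)$.

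\textbf{Setting up the expansion.} Write $a_\mu=\xi_i^\mu\sigma_i\in\{\pm1\}$ and $y_\mu=\sum_{j\neq i}\xi_j^\mu\sigma_j$. Then the gap is
\begin{align*}
\Delta E=\sum_\mu\big[(y_\mu+a_\mu)^n-(y_\mu-a_\mu)^n\big]=2\sum_\mu\sum_{k\ \mathrm{odd}}\binom{n}{k}a_\mu\, y_\mu^{\,n-k},
\end{align*}
using $a_\mu^k=a_\mu$ for odd $k$. Raising this to the $m$th power gives a sum over $(\mu_1,\dots,\mu_m)$ and over odd $(k_1,\dots,k_m)$. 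Since $\boldsymbol\sigma$ is a tree node, each coordinate $j$ is generated independently by the same tree process. Therefore $\langle\prod_a a_{\mu_a}\prod_a y_{\mu_a}^{n-k_a}\rangle$ factorizes as $\langle\prod_a a_{\mu_a}\rangle\cdot\langle\prod_a y_{\mu_a}^{n-k_a}\rangle$. Moreover each variable $\xi^\mu_j\sigma_j$ has mean $q^{d(\boldsymbol\xi^\mu,\boldsymbol\sigma)}$, since a path of length $d$ flips the bit with net probability $(1-q^d)/2$.

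\textbf{Dominant and subleading terms.} Expand each $y_{\mu_a}^{n-k_a}$ as a sum over ordered index tuples. Group all $R=\sum_a(n-k_a)$ indices into a set partition according to which coordinates coincide. For the all-distinct partition the expectation factorizes into $\prod_a q^{(n-k_a)d_a}$. The number of such tuples is $(N-1)^{\underline{R}}\le N^R$. For $k_a\equiv1$ this gives exactly \cref{eq: expanded gap}, with the inequality coming from $(N-1)^{\underline R}\le N^{R}$. The remaining terms are of two kinds:
\begin{enumerate}
\item[(i)] terms with some $k_a\ge3$, which relative to the $k_a=1$ term are suppressed by $\binom{n}{k}/(nN^{k-1})\cdot q^{-(k-1)d}\lesssim (n/(Nq^{h}))^{k-1}$;
\item[(ii)] terms with coincident indices.
\end{enumerate}
For (ii), a partition with $c$ fewer blocks has at most $\binom{R}{2}^c$-type multiplicity times $N^{R-c}$ tuples. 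Its expectation is bounded by merging factors. Each merged coordinate gives at most $1$ in absolute value, in place of a product of two means each at least $q^{2h}$, since every distance $d_a\le 2h$. So each coincidence costs at most $\sim R^2/(Nq^{4h})\le (mn)^2/(Nq^{4h})$, and summing the resulting geometric series over $c\ge1$ gives relative error $o(1)$.

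\textbf{Main obstacle: signs.} The hard step is that the bound must hold relative to $\langle(\Delta\tilde E)^m\rangle$ summed over $(\mu_1,\dots,\mu_m)$, while the prefactor $\langle\prod_a a_{\mu_a}\rangle$ can be small or alternating. So the error cannot naively be bounded term by term against a possibly cancelling main term. My first attempt is to show that every factor in the dominant term is nonnegative, i.e.\ $\langle\prod_a a_{\mu_a}\rangle\ge0$. This holds because the product of the $a$'s is again a parity of noise bits along tree paths, whose expectation is a product of powers of $q>0$. With nonnegativity in hand, I would show that each correction term for a fixed tuple $(\mu_a)$ is bounded in absolute value by $o(1)$ times the corresponding nonnegative dominant term. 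This needs the lower bound $q^{(n-1)d_a}\ge q^{2h(n-1)}$ to be used uniformly, so that the per-coincidence loss is measured against the actual $q^{d_a}$ and not a worst case. Summing over tuples then yields $\langle(\Delta E)^m\rangle\le\langle(\Delta\tilde E)^m\rangle(1+o(1))$.
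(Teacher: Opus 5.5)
Your proposal is correct and follows essentially the same route as the paper. You factorize over sites, keep the all-distinct, all-$k_a=1$ term, and bound the $k_a\ge 3$ and coincident-index corrections tuple by tuple using positivity of all tree correlations and $d_a\le 2h$; the paper does the same through its all-$k_a=1$ lemma, bulk-propagator lemma and collision-term lemma, and your per-merge Stirling-type count is a slightly cleaner version of its per-colliding-index count.

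One small slip: $d_a$ can be as large as $h+\ell\le 2h$, so the $k_a\ge 3$ suppression factor is $\lesssim (n/(Nq^{2h}))^{k-1}$, not $(n/(Nq^{h}))^{k-1}$. This corrected factor is still $o(1)$ under $q^{2h}=\omega(mn/\sqrt N)$.
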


We will prove this theorem upon introduction of two helper lemmas, the first of which allows us to restrict to $k = 1$ in \cref{eq: mth moment}, and the second of which which controls expectation values of terms of the form $\left(\sum_{j \neq i} \xi^\mu_j \sigma_j \right)^{n-k}$.

First, we expand out the part of the expectation value argument raised to the $m$:
\begin{align} \label{eq:expanded moment}
    &= 2^m \sum_{\mu_1,\ldots,\mu_m = 1}^M \;\sum_{\substack{k_1,\ldots,k_m \\ \rm all\ odd}}^n \;\left \langle\prod_{a=1}^m {n \choose k_a}\, (\xi_i^{\mu_a} \sigma_i) \left(\sum_{j \neq i} \xi^{\mu_a}_j \sigma_j \right)^{n-k_a} \right\rangle \\
    &= 2^m \sum_{\mu_1,\ldots,\mu_m = 1}^M \;\sum_{\substack{k_1,\ldots,k_m \\ \rm all\ odd}}^n \;T_\mathbf{k},
\end{align}
where 
\begin{align} \label{eq:T}
         T_\mathbf{k} &= \left \langle \prod_{a=1}^m {n \choose k_a}\, (\xi_i^{\mu_a} \sigma_i) \left(\sum_{j \neq i} \xi^{\mu_a}_j \sigma_j \right)^{n-k_a} \right\rangle.
    \end{align}

We may approximate this formula by only considering the term of the sum where all $k_a = 1$, i.e. only using $T_{(1,1,\dots, 1)}$. For a particular parameter regime, this leads to a controlled relative error on the moments of the energy gap. 

\begin{lem} [All $k_a = 1$ approximation] \label{lemma:all k 1}
    For $q^{2h} = \omega(mn / \sqrt{N})$, the $m$th moment of the energy gap (\cref{eq: mth moment}) is approximated by
    \begin{align}
        2^m \sum_{\mu_1,\ldots,\mu_m = 1}^M \; T_{(1,1,\dots 1)}\  \leq\  \langle (\Delta E)^m \rangle \leq\   2^m \sum_{\mu_1,\ldots,\mu_m = 1}^M \; T_{(1,1,\dots 1)}\cdot \left( 1 +  \varepsilon \right)
    \end{align}
    where $T_{(1,1,\dots,1)}$ is given by \cref{eq:T} and setting $\mathbf{k} = (1,1,\dots,1)$ and $\varepsilon = o(1)$ is a small relative error vanishing with $N$.
\end{lem}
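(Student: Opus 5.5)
The plan is to exploit two facts. First, every term in the expansion of $T_\mathbf{k}$ is nonnegative, which gives the lower bound for free. Second, lowering an exponent $n-k_a$ or merging two site indices always costs a factor that is small in the regime $q^{2h}=\omega(mn/\sqrt N)$, which gives the upper bound.

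\emph{Step 1 (positivity and lower bound).} Fix $\mu_1,\dots,\mu_m$ and write $X_a=\sum_{j\neq i}\xi^{\mu_a}_j\sigma_j$. Since $k_a$ is odd, $(\xi_i^{\mu_a}\sigma_i)^{k_a}=\xi_i^{\mu_a}\sigma_i$, so the site-$i$ factor $\langle\prod_a \xi_i^{\mu_a}\sigma_i\rangle$ is the same for every $\mathbf{k}$. Because sites are independent, this factor splits off from $\langle\prod_a X_a^{n-k_a}\rangle$.

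Expanding each $X_a^{n-k_a}$ as a sum over ordered tuples of sites $j\neq i$ turns $\langle\prod_a X_a^{n-k_a}\rangle$ into a sum of products, over distinct sites, of expectations of products of tree-node values at that site. Each such product is a product of edge noises $e$ raised to parities, and $\langle e\rangle=q>0$. Hence every monomial expectation lies in $[0,1]$, and the same holds for the site-$i$ factor. Therefore every $T_\mathbf{k}\geq 0$. Dropping all $\mathbf{k}\neq(1,\dots,1)$ gives the lower bound, and also shows that the upper bound amounts to proving $\sum_{\mathbf{k}\neq \mathbf 1}T_\mathbf{k}\leq \varepsilon\, T_{\mathbf 1}$ for each fixed $\mu$-tuple.

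\emph{Step 2 (leading diagram of a product moment).} For $r_a\geq 0$ with $R=\sum_a r_a\leq mn$, organize $\langle\prod_a X_a^{r_a}\rangle$ by set partitions of the $R$ slots according to which slots share a site.
\begin{itemize}
\item The all-singleton partition contributes $(N-1)^{\underline{R}}\prod_a q^{r_a d_a}$, where $d_a=d(\boldsymbol\xi^{\mu_a},\boldsymbol\sigma)\leq 2h$.
\item A partition with a block of size $s\geq 2$ loses $s-1$ powers of $N$. Its block expectation is at most $1$, whereas the singletons it replaces contributed at least $q^{2hs}$. The relative cost of the block is therefore at most $q^{-2hs}N^{-(s-1)}\leq (q^{-4h}/N)^{s-1}$, using $2hs\leq 4h(s-1)$ for $s\geq 2$.
\item The number of partitions with $R-b$ merges is at most $(R^2)^{R-b}$.
\end{itemize}
Summing, and noting that the falling factorial is $N^R(1-O(R^2/N))$, gives
\begin{align}
\Big\langle\prod_a X_a^{r_a}\Big\rangle = N^{R}\prod_a q^{r_a d_a}\,\bigl(1+O(m^2n^2q^{-4h}/N)\bigr).
\end{align}
The error is $o(1)$ precisely because $q^{2h}=\omega(mn/\sqrt N)$.

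\emph{Step 3 (comparing $\mathbf{k}$ to $\mathbf 1$).} Apply Step 2 with $r_a=n-k_a$ and with $r_a=n-1$, and use $\binom{n}{k_a}\leq n^{k_a}$. The ratio $T_\mathbf{k}/T_{\mathbf 1}$ is then bounded by
\begin{align}
\prod_a\left(\frac{n}{N q^{d_a}}\right)^{k_a-1}(1+o(1))\leq \prod_a\left(\frac{n^2q^{-4h}}{N^2}\right)^{(k_a-1)/2}(1+o(1)),
\end{align}
where $k_a-1$ is even. Summing over all odd $\mathbf{k}$ gives
\begin{align}
\sum_{\mathbf k\neq\mathbf 1}T_\mathbf{k}/T_{\mathbf 1}\leq \Big[\big(1-n^2q^{-4h}N^{-2}\big)^{-m}-1\Big](1+o(1))=O(mn^2q^{-4h}/N^2),
\end{align}
which is $o(1)$. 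Summing over $\mu_1,\dots,\mu_m$ with nonnegative weights then preserves the relative error $\varepsilon=o(1)$ uniformly.

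\emph{Main obstacle.} The delicate point is Step 2: the crude bound (each block expectation at most $1$) must be weighed against a leading term that is exponentially small in $h$. I would handle it exactly as above, by charging each merge a factor $q^{-4h}/N$, so that the $q$-deficit is always paid for by the lost power of $N$. I would also check that the $o(1)$ is uniform over all $\mu$-tuples, which it is, since all constants depend only on $m,n,h,N$.
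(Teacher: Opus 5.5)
Your proposal is correct and follows essentially the same route as the paper. Nonnegativity of every $T_\mathbf{k}$ gives the lower bound, and the site-$i$ factor cancels in $T_\mathbf{k}/T_{(1,\dots,1)}$. The bulk moments are dominated by the all-distinct-index term, with each colliding slot costing at most $q^{-2h}$ against a lost power of $N$. The resulting bound $T_\mathbf{k}/T_{(1,\dots,1)}\le (nq^{-2h}/N)^{\sum_a (k_a-1)}(1+o(1))$ then sums over odd $\mathbf{k}\neq(1,\dots,1)$ to $O(mn^2q^{-4h}/N^2)=o(1)$.

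The one difference is your collision accounting: you charge each merge a factor $R^2q^{-4h}/N$, using that at most $\binom{R}{2}^j$ set partitions arise from $j$ merges. This is a cleaner, self-contained substitute for the subset-and-Stirling-number counting in the paper's collision lemma, and it also gives the bulk-propagator estimate as a two-sided statement.
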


The proof of this is given in \cref{sec: lemma2 proof}. The proof is not complicated but is rather tedious. \\

Next, let us write out $T_{(1,1,\dots,1)}$.
\begin{align}
    T_{(1,1\dots,1)} = n^m  \left \langle\prod_{a=1}^m  (\xi_i^{\mu_a} \sigma_i) \right\rangle \left\langle \prod_{a=1}^m \left(\sum_{j \neq i} \xi^{\mu_a}_j \sigma_j \right)^{n-1} \right\rangle
\end{align}

The first factor $ \left \langle\prod_{a=1}^m  (\xi_i^{\mu_a} \sigma_i) \right\rangle$ is simple to deal with. Recall that $q = (1-2p)$ so the overlap between any two nodes at a path-distance $d$ is
\begin{align}
    \langle \xi_i^{A} \xi_j^{B} \rangle = \delta_{ij} q^{d}.
\end{align}
The variable $d$ counts the number of edges on the unique simple path, also the shortest path, between $A$ and $B$. Thus, $\langle \xi_j^{\mu_a}\sigma_j \rangle$ is a simple function of the distance between $\boldsymbol{\xi}^{\mu_a}$ and $\boldsymbol{\sigma}$, i,e.
\begin{align}
    \langle \xi_j^{\mu_a}\sigma_j \rangle = q^{d(\boldsymbol{\xi}^{\mu_a}, \boldsymbol{\sigma})}.
\end{align}

Thus, to determine the overall term $\left\langle \prod_a^m (\xi_i^{\mu_a} \sigma_i) \right\rangle$, we simply count all the links which appear an odd number of times along the paths from $\boldsymbol{\sigma}$ to each of the $\boldsymbol{\xi}^{\mu_a}$. 

The second type of term, $\left\langle \prod_a^m \left( \sum_{j \neq i} \xi_j^{\mu_a} \sigma_j \right)^{n-1} \right\rangle$ is more complicated. In theory this requires calculating higher moment correlators between patterns. However, this term is well approximated by instead taking powers of the mean of the argument.

\begin{lem} [Bulk Propagator] \label{lemma: bulk propagator}
    The bulk propagator $\left\langle \prod_a^m \left( \sum_{j \neq i} \xi_j^{\mu_a} \sigma_j \right)^{n-1} \right\rangle$ is well approximated by its mean when $q^{2h} = \omega(mn/\sqrt{N})$, i.e.
    \begin{align}
        \left[\frac{(N-1)!}{(N-1- m(n-1))!} \prod_a^m q^{(n-1) d(\boldsymbol{\xi}^{\mu_a},\boldsymbol{\sigma})} \right] \leq \left\langle \prod_a^m \left( \sum_{j \neq i} \xi_j^{\mu_a} \sigma_j \right)^{n-1} \right\rangle \leq  \left[\prod_a^m N^{n-1}q^{(n-1) d(\boldsymbol{\xi}^{\mu_a},\boldsymbol{\sigma})} \right](1+\bar{\epsilon}),
    \end{align}
    where $\bar{\epsilon} = o(1)$.
\end{lem}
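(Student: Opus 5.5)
The plan is to expand the product of powers into a sum over coordinate assignments and to group the terms according to which coordinates coincide. Independence across coordinates makes each group factorize, and a counting argument shows that only the all-distinct group survives at leading order. Write $L = m(n-1)$ and label the factors by pairs $(a,r)$ with $a=1,\dots,m$ and $r=1,\dots,n-1$. Then
\begin{align*}
\left\langle \prod_a^m \Big(\sum_{j\neq i}\xi_j^{\mu_a}\sigma_j\Big)^{n-1}\right\rangle = \sum_{(j_{a,r})} \left\langle \prod_{a,r} \xi^{\mu_a}_{j_{a,r}}\sigma_{j_{a,r}}\right\rangle ,
\end{align*}
where every $j_{a,r}$ ranges over $\{1,\dots,N\}\setminus\{i\}$. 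Each assignment induces a set partition $\pi$ of the $L$ labels: two labels share a block exactly when they receive the same coordinate.

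\textbf{Step 1: factorization.} By \cref{def: model formal def}, distinct coordinates are i.i.d. copies of the same tree process. Hence each term equals $\prod_{B\in\pi} W_B$, where $W_B$ is the single-site expectation of the product of the spins $\xi^{\mu_a}\sigma$ over the labels in $B$. On a single site the tree is a broadcast process: the root spin is uniform and each edge independently flips with bias $q$. A product of node spins therefore has expectation $\prod_{e} q^{[\text{$e$ used an odd number of times}]}$, which lies in $[0,1]$. In particular every $W_B \ge 0$, and a singleton block has $W_B = q^{d(\boldsymbol{\xi}^{\mu_a},\boldsymbol{\sigma})}$.

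\textbf{Step 2: lower bound.} The all-singletons partition has exactly $(N-1)!/(N-1-L)!$ assignments, each of value $\prod_a q^{(n-1)d(\boldsymbol{\xi}^{\mu_a},\boldsymbol{\sigma})}$. Every other term is nonnegative by Step 1, so dropping them gives the lower bound immediately.

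\textbf{Step 3: upper bound (the main obstacle).} Let $P=\prod_a q^{(n-1)d_a}$, where $d_a = d(\boldsymbol{\xi}^{\mu_a},\boldsymbol{\sigma}) \le 2h$ for tree nodes. Consider a partition with $L-r$ blocks. Its number of assignments is at most $N^{L-r}$, and its value is at most $\prod_{\text{singletons}} q^{d} \cdot 1$. At most $2r$ labels lie in non-singleton blocks, and each such label costs a factor of at most $q^{-2h}$ relative to $P$. So the contribution of this partition is at most $N^{L}P\,(q^{-4h}/N)^r$. The number of partitions of $L$ labels into $L-r$ blocks is a Stirling number $S(L,L-r)\le \binom{L}{2}^r \le L^{2r}$, since such a partition can be built by $r$ successive pairwise merges. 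Summing over $r\ge 1$ gives the relative error
\begin{align*}
\bar\epsilon \le \sum_{r\ge1}\Big(\frac{L^2}{N q^{4h}}\Big)^r .
\end{align*}
The hypothesis $q^{2h}=\omega(mn/\sqrt N)$ says precisely that $L^2/(Nq^{4h}) = o(1)$, so this geometric series is $o(1)$. The main term is bounded by $N^L P$ because $(N-1)!/(N-1-L)! \le N^L$, and this yields the stated upper bound.

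The delicate point is the per-label accounting in Step 3. One must check that charging only the labels in merged blocks, each by $q^{-2h}$, is legitimate: this uses $W_B\le 1$ together with the exact singleton values. One must also check that the Stirling-number count does not overwhelm the $N^{-r}$ gain. The hypothesis is tuned exactly so that both checks go through.
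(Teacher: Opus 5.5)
Your proposal is correct and takes essentially the paper's route: you expand into coordinate assignments and keep the all-distinct term exactly, which gives the lower bound because every collision term is nonnegative. For the upper bound you replace each colliding-block expectation by $1$, charge $q^{-2h}$ per colliding label, and gain a factor $1/N$ per lost free index, so the hypothesis $q^{2h}=\omega(mn/\sqrt N)$ makes the resulting geometric series $o(1)$.

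The only difference is bookkeeping. You index collision patterns by the number of merges $r$ and use $S(L,L-r)\le\binom{L}{2}^r$, whereas the paper's collision lemma indexes by the number $s$ of colliding labels and uses binomial and Bell-number bounds. Your version is, if anything, slightly tidier.
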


We present the proof of this lemma in \cref{sec: proof of bulk propagator}. Much of it involves the same series of steps as the proof of \cref{sec: lemma2 proof}.

Putting these two together, we may prove \cref{theorem: approximate gap}.

\begin{proof} [Proof of \cref{theorem: approximate gap}]
    Unfolding the $m$th moment of the gap, we obtain 

    \begin{align}
        \langle (\Delta E)^m \rangle &= 2^m \sum_{\mu_1,\ldots,\mu_m = 1}^M \;\sum_{\substack{k_1,\ldots,k_m \\ \rm all\ odd}}^n \;\left \langle\prod_{a=1}^m {n \choose k_a}\, (\xi_i^{\mu_a} \sigma_i) \left(\sum_{j \neq i} \xi^{\mu_a}_j \sigma_j \right)^{n-k_a} \right\rangle \\
    &\leq  2^m n^m \sum_{\mu_1,\ldots,\mu_m = 1}^M \;  \left \langle\prod_{a=1}^m  (\xi_i^{\mu_a} \sigma_i) \right\rangle \left\langle \prod_{a=1}^m \left(\sum_{j \neq i} \xi^{\mu_a}_j \sigma_j \right)^{n-1} \right\rangle \left(1 + \varepsilon \right)  \\
    &\leq 2^m n^m \sum_{\mu_1,\ldots,\mu_m = 1}^M \;  \left \langle\prod_{a=1}^m  (\xi_i^{\mu_a} \sigma_i) \right\rangle \left[\prod_a^m N^{n-1}q^{(n-1) d(\boldsymbol{\xi}^{\mu_a},\boldsymbol{\sigma})} \right](1+\bar{\epsilon}) (1+\varepsilon) \\
    &\leq \langle (\Delta \tilde{E} )^m \rangle (1+o(1)).
    \end{align}

    In line 2, we used \cref{lemma:all k 1} and in line 3, we used \cref{lemma: bulk propagator}.
    
\end{proof}

\subsection{Diagrams and Feynman Rules}

\begin{align*}
    \noindent\resizebox{0.5\textwidth}{!}{%
\begin{tikzpicture}
    \node[treenode, fill=gray]   (R)  at (0,0)      {};
    \node[treenode, fill=gray]   (A)  at (135:2.4)  {};
    \node[treenode, fill=gray]   (B)  at (45:2.4)   {};
    \node[treenode, fill=gray]   (C)  at (315:2.4)  {};
    \node[treenode, fill=gray]   (D)  at (225:2.4)  {};
    \node[treenode, fill=gray]   (A1) at (102:4.6)  {};
    \node[treenode, fill=orange] (A2) at (124:4.6)  {};
    \node[treenode, fill=gray]   (A3) at (146:4.6)  {};
    \node[treenode, fill=gray]   (A4) at (168:4.6)  {};
    \node[treenode, fill=gray]   (B1) at (12:4.6)   {};
    \node[treenode, fill=black]  (B2) at (34:4.6)   {};
    \node[treenode, fill=gray]   (B3) at (56:4.6)   {};
    \node[treenode, fill=gray]   (B4) at (78:4.6)   {};
    \node[treenode, fill=gray]   (C1) at (282:4.6)  {};
    \node[treenode, fill=gray]   (C2) at (304:4.6)  {};
    \node[treenode, fill=gray]   (C3) at (326:4.6)  {};
    \node[treenode, fill=gray]   (C4) at (348:4.6)  {};
    \node[treenode, fill=gray]   (D1) at (192:4.6)  {};
    \node[treenode, fill=gray]   (D2) at (214:4.6)  {};
    \node[treenode, fill=black]  (D3) at (236:4.6)  {};
    \node[treenode, fill=gray]   (D4) at (258:4.6)  {};
    \begin{scope}[on background layer]
        \edgegray{R}{C}
        \foreach \leaf in {A1, A3, A4}     { \edgegray{A}{\leaf} }
        \foreach \leaf in {B1, B3, B4}     { \edgegray{B}{\leaf} }
        \foreach \leaf in {C1, C2, C3, C4} { \edgegray{C}{\leaf} }
        \foreach \leaf in {D1, D2, D4}     { \edgegray{D}{\leaf} }
        \edgepair{A}{A2}
        \edgepair{R}{A}
        \edgepair{R}{B}
        \edgepair{B}{B2}
        \edgepair{R}{D}
        \edgepair{D}{D3}
    \end{scope}
\end{tikzpicture}%
$\quad\quad\quad\quad\quad\quad\quad$
\begin{tikzpicture}
    \node[treenode, fill=gray]   (R)  at (0,0)      {};
    \node[treenode, fill=gray]   (A)  at (135:2.4)  {};
    \node[treenode, fill=gray]   (B)  at (45:2.4)   {};
    \node[treenode, fill=gray]   (C)  at (315:2.4)  {};
    \node[treenode, fill=gray]   (D)  at (225:2.4)  {};
    \node[treenode, fill=gray]   (A1) at (102:4.6)  {};
    \node[treenode, fill=orange] (A2) at (124:4.6)  {};
    \node[treenode, fill=gray]   (A3) at (146:4.6)  {};
    \node[treenode, fill=gray]   (A4) at (168:4.6)  {};
    \node[treenode, fill=gray]   (B1) at (12:4.6)   {};
    \node[treenode, fill=gray]   (B2) at (34:4.6)   {};
    \node[treenode, fill=gray]   (B3) at (56:4.6)   {};
    \node[treenode, fill=gray]   (B4) at (78:4.6)   {};
    \node[treenode, fill=gray]   (C1) at (282:4.6)  {};
    \node[treenode, fill=gray]   (C2) at (304:4.6)  {};
    \node[treenode, fill=gray]   (C3) at (326:4.6)  {};
    \node[treenode, fill=gray]   (C4) at (348:4.6)  {};
    \node[treenode, fill=gray]   (D1) at (192:4.6)  {};
    \node[treenode, fill=black]  (D2) at (214:4.6)  {};
    \node[treenode, fill=black]  (D3) at (236:4.6)  {};
    \node[treenode, fill=gray]   (D4) at (258:4.6)  {};
    \begin{scope}[on background layer]
        \edgegray{R}{B}
        \edgegray{R}{C}
        \foreach \leaf in {A1, A3, A4}     { \edgegray{A}{\leaf} }
        \foreach \leaf in {B1, B2, B3, B4} { \edgegray{B}{\leaf} }
        \foreach \leaf in {C1, C2, C3, C4} { \edgegray{C}{\leaf} }
        \foreach \leaf in {D1, D4}         { \edgegray{D}{\leaf} }
        \edgequad{A}{A2}
        \edgequad{R}{A}
        \edgequad{R}{D}
        \edgepair{D}{D2}
        \edgepair{D}{D3}
    \end{scope}
\end{tikzpicture}%
}
\end{align*}

To each term in ~\cref{eq: expanded gap}, we may associate a \textit{diagram}. Let us use $m=2$ as an illustrative example. When $m=2$, within each term, present are $\boldsymbol{\sigma}$ and two $\boldsymbol{\xi}^{\mu}$, which may or may not be the same leaf memory. Above, two example diagrams are shown. The first diagram on the left corresponds to a case in which $\boldsymbol{\sigma}$ (orange) is a leaf memory and the common ancestors between each pair of $\boldsymbol{\sigma}$, $\boldsymbol{\xi}^{\mu_1}$ (black) and $\boldsymbol{\xi}^{\mu_2}$ (black) are the same (i.e. the root). The diagram on the right corresponds to a case in which the common ancestor between $\boldsymbol{\xi}^{\mu_1}$ and $\boldsymbol{\xi}^{\mu_2}$ is different than the common ancestor between $\boldsymbol{\sigma}$ and either of the $\boldsymbol{\xi}$. We mark the presence of, in the term, a factor of the form $(\xi_i^{\mu_a }\sigma_i)$ with a dashed line which between $\boldsymbol{\sigma}$ and $\boldsymbol{\xi}^{\mu_a}$ following the unique path between them on the tree. Similarly, the factor $\left( \sum_{j\neq i} \xi_j^{\mu_a}\sigma_j\right)^{n-1}$ is represented on the diagram by a solid line which follows the same unique path between $\boldsymbol{\sigma}$ and $\boldsymbol{\xi}^{\mu_a}$. Note that going forward, as the branches of the tree which do not lead to either $\boldsymbol{\sigma}$ or any $\boldsymbol{\xi}$ are unimportant to the diagram weight, we omit them. For example, this will allow us to represent the diagrams from above as tripods.

From the discussion in the previous subsection, we can associate to the lines of a diagram the following weights:
\begin{itemize}
    \item \textbf{Solid line (bulk propagator).} Represents the extensive overlap raised to the $(n{-}1)$th power: $\left(\sum_{j\neq i} \langle\sigma_j \xi^{\mu_1}_j\rangle\right)^{n-1} \to N^{n-1}\, q^{(n-1)\,\ell}$ in the large-$N$ limit. A solid line between nodes $A$ and $B$ carries a factor
    \begin{align}
        \tikz[baseline=-0.5ex]{\draw[thick] (0,0) -- (1,0); \fill (0,0) circle (2pt); \fill (1,0) circle (2pt);} \quad = \quad N^{n-1}\, q^{(n-1)\,d(A,B)}.
    \end{align}

    \item \textbf{Dashed line (single-spin propagator).} Represents the single-site overlap $\langle\sigma_i \xi^{\mu_1}_i\rangle = q^{\ell}$. A dashed line between $A$ and $B$ carries
    \begin{align}
        \tikz[baseline=-0.5ex]{\draw[thick, dashed] (0,0) -- (1,0); \fill (0,0) circle (2pt); \fill (1,0) circle (2pt);} \quad = \quad q^{d(A,B)}.
    \end{align}
\end{itemize}

As mentioned, to calculate $\left\langle \prod_a^m (\xi_i^{\mu_a} \sigma_i) \right\rangle$, we must count the links which appear an odd number of times along the paths from $\boldsymbol{\sigma}$ to each of the $\boldsymbol{\xi}^{\mu_a}$. This manifests in a simple cancellation rule for the dashed lines. Any dashed line passing through multiple nodes of the tree can be factored as the product of the nearest-neighbor dashed lines along the path. This means that two dashed lines, which overlap in any segment, can be reduced to a shorter set of non-overlapping dashed lines. See the following two examples.
\begin{equation*}
\resizebox{0.5\textwidth}{!}{$
\vcenter{\hbox{%
\begin{tikzpicture}[baseline=(current bounding box.center)]
\node[circle, draw=black, fill=black, minimum size=0.5cm] (A) at (0,0) {};
\node[circle, draw=black, fill=black, minimum size=0.5cm] (B) at (1,0) {};
\draw[black, thick, dashed, transform canvas={yshift= 2pt}] (A) -- (B);
\draw[black, thick, dashed, transform canvas={yshift=-2pt}] (A) -- (B);
\end{tikzpicture}%
}}
\;=\;
\vcenter{\hbox{%
\begin{tikzpicture}[baseline=(current bounding box.center)]
\node[circle, draw=black, fill=black, minimum size=0.5cm] (A) at (0,0) {};
\node[circle, draw=black, fill=black, minimum size=0.5cm] (B) at (1,0) {};
\end{tikzpicture}%
}}
\quad\text{       ,        }\quad
\vcenter{\hbox{%
\begin{tikzpicture}[baseline=(current bounding box.center)]
\node[treenode, fill=gray] (R) at (2, 2) {};
\node[treenode, fill=gray] (B1) at (1, 1) {};
\node[treenode, fill=gray] (B2) at (3, 1) {};
\node[treenode, fill=orange] (C1) at (1, 0) {};
\node[treenode, fill=black] (C2) at (2.5, 0) {};
\node[treenode, fill=black] (C3) at (3.5, 0) {};
\begin{scope}[on background layer]
\edgequad{R}{B1}
\edgequad{R}{B2}
\edgequad{B1}{C1}
\edgepair{B2}{C2}
\edgepair{B2}{C3}
\end{scope}
\end{tikzpicture}%
}}
\;=\;
\vcenter{\hbox{%
\begin{tikzpicture}[baseline=(current bounding box.center)]
\node[treenode, fill=gray] (R) at (2, 2) {};
\node[treenode, fill=gray] (B1) at (1, 1) {};
\node[treenode, fill=gray] (B2) at (3, 1) {};
\node[treenode, fill=orange] (C1) at (1, 0) {};
\node[treenode, fill=black] (C2) at (2.5, 0) {};
\node[treenode, fill=black] (C3) at (3.5, 0) {};
\begin{scope}[on background layer]
\edgedouble{R}{B1}
\edgedouble{R}{B2}
\edgedouble{B1}{C1}
\edgepair{B2}{C2}
\edgepair{B2}{C3}
\end{scope}
\end{tikzpicture}%
}}
$}
\end{equation*}
The second example is the same diagram from the right hand side the prior figure, with the middle groups omitted.

For the solid lines, the factor of $N^{n-1}$ arises due to the sum over the spin variables. This sum does not factor as a product over the path, so the factor of $N^{n-1}$ is associated with the \emph{whole line} rather than being broken up along the path. On the other hand the factors of $q^{n-1}$ do factor along the path which is why they come with an exponent of $d(A,B)$. In other words, for any node $C$ on the path between $A$ and $B$ we can factor

\begin{align*}
N^{n-1}\,q^{(n-1)d(A,B)}
    &= \vcenter{\hbox{%
        \begin{tikzpicture}[baseline=(current bounding box.center)]
            \node[circle, draw=black, fill=black, minimum size=0.5cm, label=below:$A$] (A) at (0,0) {};
            \node[circle, draw=black, fill=black, minimum size=0.5cm, label=below:$B$] (B) at (2,0) {};
            \draw[black, thick] (A) -- (B);
        \end{tikzpicture}}} \\
    N^{n-1}\bigl(q^{(n-1)d(A,C)}\cdot q^{(n-1)d(C,B)}\bigr) &= \vcenter{\hbox{%
        \begin{tikzpicture}[baseline=(current bounding box.center)]
            \node[circle, draw=black, fill=black, minimum size=0.5cm, label=below:$A$] (A) at (0,0) {};
            \node[circle, draw=black, fill=black, minimum size=0.5cm, label=below:$C$] (C) at (1.5,0) {};
            \node[circle, draw=black, fill=black, minimum size=0.5cm, label=below:$B$] (B) at (3,0) {};
            \draw[black, thick] (A) -- (C) -- (B);
        \end{tikzpicture}}}.
\end{align*}

We now enumerate the Feynman rules for evaluation of $\langle (\Delta \tilde{E})^m \rangle$.

\begin{enumerate}
    \item Indicate the probe point $\boldsymbol{\sigma}$ on the tree.
    \item Choose $m$ leaf nodes, one for each replica of $\Delta E$, labeled by leaf indices $\mu_1,\dots,\mu_m$.
    \item Draw a dotted and solid line between the probe and each $\boldsymbol{\xi}^{\mu_a}$, going through the tree.
    \item Apply the dotted line cancellation rule repeatedly until no pairs of overlapping dotted lines remain.
    \item Assign propagator weights: $2n N^{n-1}q^{(n-1)\,d(\boldsymbol{\sigma}, \boldsymbol{\xi}^{\mu_a})}$ per solid line and $q^{d(X,Y)}$ per remaining dashed line. Here $X$ and $Y$ range over every pair of nodes necessary to cover all dashed lines, with any covering set yielding the same result.
    \item Sum over all placements of the leaf indices $\mu_1, \dots, \mu_m$. For a homogeneous tree this depends only on the hierarchy of most recent common ancestors of the nodes. Each of those possibilities is associated with a multiplicity that appears as a symmetry factor. It is convenient to group by these factors and multiply by the appropriate combinatorial factor.
\end{enumerate}

\section{First Two Moments} \label{app: first two moments}

\subsection{Diagrams and Combinatorics}

For $\langle \Delta E \rangle$, there is a single leaf index $\mu$. The diagram consists of one solid and one dashed line from $\boldsymbol{\sigma}$ to $\boldsymbol{\xi}^\mu$, contributing $N^{n-1} q^{nd(\boldsymbol{\sigma}, \boldsymbol{\xi}^\mu})$ in total. There is only a single diagram topology in this case and since only one line is present, there are no cancellations. Let the depth of the probe point be $\ell$. The sum over all placements of the leaf indices is equivalent to a sum over the number of diagrams with distance $d(\boldsymbol{\sigma}, \boldsymbol{\xi}^\mu) = d$. Denote this number by $C_\ell(d)$. Then, for our setting with a homogenous tree of height $h$ and degree $K$
\begin{align} \label{eq:counting}
    C_\ell^{(1)}(d) = \begin{cases} 
        (K-1)^{h-\ell},&\quad d = h - \ell \\
        (K-2)(K-1)^{\frac{d-\ell+h}{2} - 1},&\quad h - \ell< d < h + \ell \\
        (K-1)^{h},&\quad d = h + \ell
    \end{cases}
\end{align} 

With this counting, the first moment is
\begin{align} \label{eq: first moment}
    \langle\Delta E\rangle_{\ell} = 2n\, N^{n-1} q^{n(h-\ell)}\sum_{x=0}^{\ell} C^{(1)}_\ell(h-\ell+2x)\, q^{2nx}.
\end{align}

For $m =2$, the diagrams all have a \textit{tripod} topology.
\begin{equation*}
\scalebox{0.5}{
    \begin{tikzpicture}[
  every node/.style = {inner sep=0pt},
  leaf/.style       = {circle, draw=black, line width=0.5pt, minimum size=9mm},
  junction/.style   = {circle, draw=black, fill=white, line width=0.6pt, minimum size=5mm},
  pathnode/.style   = {circle, draw=black!60, fill=gray!55, minimum size=3.5mm},
]

  \node[leaf, fill=orange] at ( 0,  3) (P){};
  \node[leaf, fill=black]  at (-3, -2) (T1){};
  \node[leaf, fill=black]  at ( 3, -2) (T2){};
 
  \node[junction, fill=white] at (0, 0) (J) {};
    \edgedouble{P}{J}
    \edgepair{J}{T1};
    \edgepair{J}{T2};
  \node[pathnode] at (0, 0.75) {};
  \node[pathnode] at (0, 1.5) {};
  \node[pathnode] at (0, 2.25) {};
 
  \node[pathnode] at (-0.6, -0.4)(A){};
  \node[pathnode] at (-1.2, -0.8)(B){};
  \node[pathnode] at (-1.8, -1.2)(C){};
  \node[pathnode] at (-2.4, -1.6)(D){};
 
  \node[pathnode] at (0.75, -0.5) {};
  \node[pathnode] at (1.5,  -1.0) {};
  \node[pathnode] at (2.25, -1.5) {};
    \node[leaf, fill=orange] at ( 0,  3) (P){};
  \node[leaf, fill=black]  at (-3, -2) (T1){};
  \node[leaf, fill=black]  at ( 3, -2) (T2){};
 
  \node[junction, fill=white] at (0, 0) (J) {};

\end{tikzpicture}}
\end{equation*}

Every such diagram can be labeled by a tuple $(s, t_1, t_2) \in \mathbb{Z}_{\geq 0}^3$, where each element of the tuple is the distance between the probe, target 1, or target 2, and the junction which is the median point between the three nodes. Then, given the position of the probe $\boldsymbol{\sigma}$, we must count all such tripods. For a tuple $(s,t_1,t_2)$ we label the count $C_{\ell}((s,t_1,t_2))$. This yields
\begin{align}
    \langle (\Delta \tilde{E})^2 \rangle_{\ell} &=   4n^2 N^{2n-2}\sum_{(s,t_1,t_2)} C^{(2)}_{\ell}((s,t_1,t_2)) q^{n(2s+t_1+t_2) - 2s}
\end{align}

The coefficient of variation will depend on the square of the mean. However, the square of the mean, just like the second moment, will depend on a sum over all positions of two targets. That is, expanded out, the square of the mean will take a form much like that of the second moment, in that it will have the exact same combinatoric structure.
\begin{align}
    \langle \Delta \tilde{E} \rangle_\ell^2 &= 4n^2 N^{2n-2} \sum_{d_1,d_2} C_\ell^{(1)}(d_1) C_\ell^{(1)}(d_2) q^{n(d_1 + d_2)} \\
    &= 4n^2 N^{2n-2}\sum_{(s,t_1,t_2)} C^{(2)}_{\ell}((s,t_1,t_2)) q^{n(2s+t_1+t_2) }
\end{align}
Here, for any junction placement, $d_1 = s + t_1$ and $d_2 = s+t_2$ and so $d_1+d_2= 2s+t_1+t_2$. The only difference between this expression and that of the second moment is the factor of $q^{-2s}$ in the second moment. This allows us to write the variance in a more general form:
\begin{align} \label{eq: variance}
    \langle \Delta \tilde{E}^2 \rangle_\ell - \langle \Delta \tilde{E} \rangle_\ell^2 &= 4n^2 N^{2n-2} \nonumber \\
    &\times \sum_{(s,t_1,t_2)} C^{(2)}_{\ell}((s,t_1,t_2)) q^{n(2s+t_1+t_2) -2s}(1-q^{2s})
\end{align}

Dividing the variance by the mean squared, and taking a square root yields the coefficient of variation.

In the above formulas, we have used the approximate formulas for the both the second moment and the mean squared. The relative errors we determined for the moments do not readily transfer to a relative error on the variance. However, through some additional work, we obtain a similar result. That is,

\begin{lem} [Relative Error on Variance] \label{lemma: variance rel error}

For $q^{2h} = \omega(n/\sqrt{N})$, 
\begin{align}
    \text{Var}(\Delta E) \leq 4 n^2 N^{2(n-1)} (1+o(1))\sum_{\mu_1 \mu_2 }^M  \;
    q^{n(2s+t_1+t_2)-2s}
     (1-q^{2s}).
\end{align}
    
\end{lem}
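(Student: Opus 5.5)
The plan is to work pair by pair rather than with the summed moments. The reason is that the variance is a difference of two quantities of the same order. A global $(1+o(1))$ bound on $\langle(\Delta E)^2\rangle$ combined with a global $(1-o(1))$ bound on $\langle \Delta E\rangle^2$ would leave an error of size $o(1)\cdot\langle \Delta E\rangle^2$. That error need not be small compared with $\sum q^{n(2s+t_1+t_2)-2s}(1-q^{2s})$, because many terms have $s=0$ and contribute zero to the target. So I would write
\begin{align}
\text{Var}(\Delta E)=\sum_{\mu_1,\mu_2}\mathrm{Cov}\bigl(\Delta E^{\mu_1},\Delta E^{\mu_2}\bigr),
\end{align}
where $\Delta E^{\mu}=2\sum_{k\ \rm odd}\binom{n}{k}(\xi_i^\mu\sigma_i)(\sum_{j\neq i}\xi_j^\mu\sigma_j)^{n-k}$. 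Each pair $(\mu_1,\mu_2)$ is labeled by its tripod $(s,t_1,t_2)$ relative to the probe, and I would bound each covariance separately.

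\emph{Step 1: pairs with $s=0$.} Here the junction of the tripod is the probe $\boldsymbol\sigma$ itself. The paths $\boldsymbol\sigma\to\boldsymbol\xi^{\mu_1}$ and $\boldsymbol\sigma\to\boldsymbol\xi^{\mu_2}$ then share no edge. By the tree construction, conditional on $\boldsymbol\sigma$, the vectors $\boldsymbol\xi^{\mu_1}$ and $\boldsymbol\xi^{\mu_2}$ are obtained from $\boldsymbol\sigma$ by independent products of noise vectors. By the gauge symmetry $\boldsymbol\xi\circ\boldsymbol\sigma$, the conditional law of $\Delta E^{\mu_a}$ given $\boldsymbol\sigma$ does not depend on $\boldsymbol\sigma$. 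Therefore $\mathbb E[\Delta E^{\mu_1}\Delta E^{\mu_2}]=\mathbb E[\Delta E^{\mu_1}]\,\mathbb E[\Delta E^{\mu_2}]$ exactly, and these pairs contribute zero. This matches the factor $(1-q^{2s})$ in the claimed bound.

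\emph{Step 2: pairs with $s\ge1$.} Here $1-q^{2s}\ge 1-q^2$, a constant depending only on $p$. For the second-moment piece I would rerun the proofs of \cref{lemma:all k 1} and \cref{lemma: bulk propagator} at fixed $(\mu_1,\mu_2)$ with $m=2$. I expect those proofs to be termwise: the $k\ge3$ corrections and the non-mean parts of the bulk propagator are controlled diagram by diagram, with an error $\varepsilon=o(1)$ that is uniform over pairs in the regime $q^{2h}=\omega(n/\sqrt N)$. This gives
\begin{align}
\mathbb E[\Delta E^{\mu_1}\Delta E^{\mu_2}]\le 4n^2N^{2n-2}q^{n(2s+t_1+t_2)-2s}(1+\varepsilon).
\end{align}
For the product of means I would use the lower bounds from the same lemmas with $m=1$. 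These are the $k=1$ term alone and the falling factorial $(N-1)!/(N-n)!=N^{n-1}(1-O(n^2/N))$. Together they give
\begin{align}
\mathbb E[\Delta E^{\mu_1}]\,\mathbb E[\Delta E^{\mu_2}]\ge 4n^2N^{2n-2}q^{n(2s+t_1+t_2)}(1-\varepsilon').
\end{align}
Subtracting, the covariance is at most $4n^2N^{2n-2}q^{n(2s+t_1+t_2)-2s}\bigl[(1-q^{2s})+\varepsilon+\varepsilon' q^{2s}\bigr]$. Since $\varepsilon+\varepsilon'\le \frac{\varepsilon+\varepsilon'}{1-q^2}(1-q^{2s})$, this is $\le 4n^2N^{2n-2}q^{n(2s+t_1+t_2)-2s}(1-q^{2s})(1+o(1))$. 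Summing over pairs then gives the lemma.

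The main obstacle is Step 2's demand for per-pair, uniform error control. \cref{lemma:all k 1} is stated only after summing over $\mu$. If its proof bounds the $k_a\ge3$ terms only in aggregate, I would go back and redo the bound for a single diagram. Concretely, I would show that each extra factor $(\sum_j\xi_j\sigma_j)^{-2}$ costs at most $O(n^2/(Nq^{2h}))$ relative to the leading term, using that every overlap is at least $Nq^{2h}$ to leading order. I would also check that the fluctuation corrections in the bulk propagator remain $o(1)$ relative to the mean-field value uniformly in $d\le 2h$. This is exactly where the hypothesis $q^{2h}=\omega(n/\sqrt N)$ enters.
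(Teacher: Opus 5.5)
Your proof is correct, and it reaches the bound by a more elementary route than the paper's. Both arguments work pair by pair, discard the pairs with $s=0$, and use $1-q^{2s}\ge 1-q^{2}$ for $s\ge 1$ to turn an additive $o(1)$ error into a relative one.

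\textbf{How the paper argues.} The paper expands each pair's covariance itself in $(k_1,k_2)$. For $\mathbf{k}\neq(1,1)$ it drops the subtracted product of means. It then compares the remainder with the $(1,1)$ covariance, which it lower-bounds by $(1-q^{2s})$ times the $(1,1)$ second-moment term. That lower bound uses positive association (FKG) of the increasing functions $(\sum_{j\neq i}\xi^{\mu_a}_j\sigma_j)^{n-1}$ and gives $\tilde\varepsilon_{\mathbf{k}}\le\varepsilon_{\mathbf{k}}/(1-q^{2s})$. A second step compares the $(1,1)$ covariance with its mean-field value, giving a factor $1+\bar\epsilon/(1-q^{2s})$.

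\textbf{How your route differs.} You bound the pair's second moment from above, bound the product of its two means from below (keeping only $k=1$ and the falling factorial $(N-1)!/(N-n)!$), and subtract. This avoids the correlation inequality and any use of the parity of $n$. It also tracks the $1-O(n^2/N)$ falling-factorial deficit explicitly. The paper's version buys only a term-by-term relative-error structure that parallels \cref{lemma:all k 1}.

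\textbf{Your Step 1.} This is sharper than the paper's treatment. The paper removes the $s=0$ pairs by appeal to the vanishing factor $(1-q^{2s})$, even though its error factors $1/(1-q^{2s})$ are singular there. Your observation is the real justification: $\Delta E^{\mu}$ depends only on $\boldsymbol\xi^{\mu}\circ\boldsymbol\sigma$, the product of the independent edge noises along the path. Edge-disjoint paths therefore give exactly independent $\Delta E^{\mu_1}$ and $\Delta E^{\mu_2}$.

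\textbf{The anticipated obstacle.} It does not arise. The proofs of \cref{lemma:all k 1} and \cref{lemma: bulk propagator} bound each tuple $(\mu_1,\dots,\mu_m)$ individually, and uniformity comes from $q^{-d_a}\le q^{-2h}$. The $k_a\ge 3$ corrections are $O(\delta^2)$ with $\delta=nq^{-2h}/(N-mn)$, so both lemmas apply pair by pair exactly as you need.
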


The proof of this is shown in \cref{sec: proof of variance rel error}. The proof involves many near identical manipulations to that of \cref{lemma:all k 1} and \cref{lemma: bulk propagator}.

The exact form of $C_{\ell}^{(2)}$ will depend on whether $\boldsymbol{\sigma}$ is a leaf, an internal node or the central root. We present the countings below. Note that if $t_1 = t_2 = 0$, the targets overlap and the probe is connected to the target via two solid lines. On the other hand if $s = t_1 = 0$, then the probe and one of the targets overlaps and the probe is connected to the other target via a solid line and a dashed line. For readability, if the subscript from $t$ is omitted, then $t_1 = t_2$. Additionally, the $t_1 \neq t_2$ cases are symmetric. 

If $\boldsymbol{\sigma}$ is itself a leaf, i.e. $\ell = h$, 
{\small
\begin{equation}
C^{(2)}_{\ell=h} =
\begin{dcases}
(K-1)^{2h-1}(K-2),
  & \begin{subarray}{l} h=s=t \end{subarray} \\[2pt]
(K-2)(K-3)(K-1)^{2s-2},
  & \begin{subarray}{l} s\in[1,h-1],\ t=s \end{subarray} \\[2pt]
(K-2)(K-1)^{h+t-1},
  & \begin{subarray}{l} s\in[h+1,2h-1],\\ s+t=2h \end{subarray} \\[2pt]
(K-2)^2(K-1)^{(t_2+3s-4)/2},
  & \begin{subarray}{l} s\in[1,h-1],\ t_1=s,\\
       t_2\in[s+2,\,s+4,\\
       \quad\ \dots,\,s+2(h-s-1)] \end{subarray} \\[2pt]
(K-2)(K-1)^{s+h-1},
  & \begin{subarray}{l} s\in[1,h-1],\\ t_1=s,\ t_2=2h-s \end{subarray} \\[2pt]
(K-2)^2(K-1)^{(s+3t)/2-2},
  & \begin{subarray}{l} t\in[1,h-1],\\ s \in [t+2,t+4,\dots, \\ 2h-t-2] \end{subarray} \\[2pt]
(K-2)(K-1)^{t_2/2-1},
  & \begin{subarray}{l} t_2\in[2,4,\dots,2(h-1)],\\ s=t_1=0 \end{subarray} \\[2pt]
(K-2)(K-1)^{s/2-1},
  & \begin{subarray}{l} s\in[2,4,\dots,2(h-1)],\\ t=0 \end{subarray} \\[2pt]
(K-1)^h,
  & \begin{subarray}{l} 2h=t_2,\ t_1=s=0 \end{subarray} \\[2pt]
(K-1)^h,
  & \begin{subarray}{l} 2h=s,\ t=0 \end{subarray} \\[2pt]
1,
  & \begin{subarray}{l} s=t_1=t_2=0 \end{subarray}
\end{dcases}.
\end{equation}
}

When $0<\ell<h$, the combinatorics are
{\small
\begin{equation} \label{eq: 0 < l < h combinatorics}
C^{(2)}_{0<\ell<h} =
\begin{dcases}
(K-1)^{2t-1}(K-2),
  & \begin{subarray}{l} s=0,\ t=h-\ell \end{subarray} \\[2pt]
(K-1)^{s+2t-1}(K-2),
  & \begin{subarray}{l} s\in[1,h-\ell-1],\\ s+t=h-\ell \end{subarray} \\[2pt]
(K-2)(K-3)(K-1)^{2t-2},
  & \begin{subarray}{l} s\in[1,\ell-1],\\ t-s=h-\ell \end{subarray} \\[2pt]
(K-2)(K-1)^{2h-1},
  & \begin{subarray}{l} s=\ell,\ t=h \end{subarray} \\[2pt]
(K-1)^{h+t-1}(K-2),
  & \begin{subarray}{l} s\in[\ell+1,h+\ell-1],\\ s+t=h+\ell \end{subarray} \\[2pt]
  (K-2)^2(K-1)^{(s+3t+h-\ell)/2-2},
  & \begin{subarray}{l} t\in[1,h-1],\\ s \in [|h-\ell-t|+2,|h-\ell-t|+4,\dots, \\ h+\ell-t-2] \end{subarray} \\[2pt]
(K-2)^2(K-1)^{(t_2+3t_1)/2-2},
  & \begin{subarray}{l} s\in[1,\ell-1],\ t_1-s=h-\ell,\\
       t_2-t_1\in[2,\dots,2(\ell-s-1)] \end{subarray} \\[2pt]
(K-2)(K-1)^{(t_2+3t_1)/2-1},
  & \begin{subarray}{l} s=0,\ t_1=h-\ell,\\
       t_2\in[h-\ell,\,h-\ell+2,\\
       \quad\ \dots,\,h-\ell+2(\ell-1)] \end{subarray} \\[2pt]
(K-2)(K-1)^{h+t_1-1},
  & \begin{subarray}{l} s\in[1,\ell-1],\ t_1-s=h-\ell,\\
       t_2-t_1=2(\ell-s) \end{subarray} \\[2pt]
(K-1)^{t_1+h},
  & \begin{subarray}{l} s=0,\ t_1=h-\ell,\ t_2=h+\ell \end{subarray} \\[2pt]
(K-1)^{h-\ell},
  & \begin{subarray}{l} s=h-\ell,\ t=0 \end{subarray} \\[2pt]
(K-2)(K-1)^{(s+h-\ell)/2-1},
  & \begin{subarray}{l} s\in[h-\ell+2,\,h-\ell+4,\\
       \quad\ \dots,\,h-\ell+2(\ell-1)],\\ t=0 \end{subarray} \\[2pt]
(K-1)^h,
  & \begin{subarray}{l} s=h+\ell,\ t=0 \end{subarray}
\end{dcases}.
\end{equation}
}

Finally, when $\ell = 0$, the combinatorics are
\begin{align}
    C^{(2)}_{\ell = 0} = \begin{cases}
        K(K-1)^{s + 2t-2}(K-2),&\quad s \in[1,h-1], s+t = h\\
        K(K-1)^{2h-1},&\quad s = 0, t = h \\
        K(K-1)^{h-1},&\quad s = h, t = 0
    \end{cases}.
\end{align}

With these exact combinatorics calculated, we will now make the simplifying assumption that $K \gg 3$. This is a reasonable approximation with large training data sets, and will be a valid upper bound on the variance. This will allow us to collapse many of the expressions above to much simpler, easier to manipulate forms, as everywhere we make the substitution $(K-1),(K-2),(K-3) \to K$.

The simplified combinatorial coefficients are, for the first moment,
\begin{align}
    \Tilde{C}_\ell^{(1)}  =  K^{\frac{d+h-\ell}{2}},\quad h-\ell \leq d \leq h+\ell.
\end{align}

For the second moment, when $\ell = h$, we obtain 

{\small
\begin{equation} \label{eq:l = h coefficients}
\tilde{C}^{(2)}_{\ell=h} =
\begin{dcases}
K^{(s+3t)/2},
  & \begin{subarray}{l} t\in[0,h],\\ s\in[t,\,t+2,\,\dots,\,2h-t] \end{subarray} \\[2pt]
K^{(t_2+3t_1)/2},
  & \begin{subarray}{l} t_1=s\in[0,h-1],\\ t_2\in[s+2,\,s+4,\,\dots,\,2h-s] \end{subarray}
\end{dcases}
\end{equation}
}

When $0 < \ell < h$, we get
{\small
\begin{equation} \label{eq: simplified interior combinatorics}
\Tilde{C}_{0<\ell<h}^{(2)} =
\begin{dcases}
K^{2(h-\ell) + 2s},
  & \begin{subarray}{l} s\in[1,\ell],\ t=h-\ell+s \end{subarray} \\[2pt]
K^{2(h-\ell) - s},
  & \begin{subarray}{l} s\in[0,h-\ell],\\ s+t=h-\ell \end{subarray} \\[2pt]
K^{h+t-c},
  & \begin{subarray}{l} t \in [1,h-1], c \in[0, \min(\ell -1, h-t-1)] \\
  s = \ell + h - t - 2c, \end{subarray} \\[2pt]
K^{(t_2+3t_1)/2},
  & \begin{subarray}{l} s\in[0,\ell-1],\ t_1=h-\ell+s,\\
       t_2\in[t_1+2,\,t_1+4,\\
       \quad\ \dots,\,2h-t_1] \end{subarray} \\[2pt]
K^{(s+h-\ell)/2},
  & \begin{subarray}{l} s\in[h-\ell+2,\\
       \quad\ \dots,\,h+\ell] \end{subarray}
\end{dcases}
\end{equation}
}

And in the last case, when $\ell = 0$, we get
\begin{align}
    \Tilde{C}_{\ell = 0}^{(2)} &= \begin{cases}
K^{2h-s}, & s \in [0,h],\ s+t = h 
\end{cases}
\end{align}

\subsection{Coefficient of Variation and Stability}

Recall that the statistics for the energy gap that we require to determine stability for a pattern is given by the mean, and the coefficient of variation. The mean energy gap must be positive, as a pattern can only be stable if perturbing it increases the energy. As explained in the main text, from the form of eq. ~\cref{eq: first moment}, we observe that for prototypes of any $\ell$, this is true. As a result, the relative stabilities of prototypes at different $\ell$ may be understood by studying the coefficient of variation, $\mathcal{C}^v_\ell$. With the combinatorial coefficients determined, we use the approximate formula for the variance given by \cref{lemma: variance rel error} to obtain an upper bound for $(\mathcal{C}^v_\ell)^2$. For the mean squared which is in the denominator of $(\mathcal{C}^v_\ell)^2$, one danger we must be cognizant of is that we cannot use the simplified combinatorics for $\Tilde{C}_\ell^{(1)}$, as this will obscure the relation in ~\cref{eq: variance}.

We may generically write this as
\begin{equation}\label{eq: cv 0 < l < h}
(\mathcal{C}^v_\ell)^2 \;\leq\; \frac{\displaystyle \,\sum_{(s,t_1,t_2)} \Tilde{C}^{(2)}_{\ell}\left( (s,t_1,t_2) \right) q^{n(2s+t_1+t_2) -2s}(1-q^{2s})}
{\displaystyle \left( \sum_{d=h-\ell}^{h+\ell} C^{(1)}_\ell(d)\, q^{n\,d}\right)^2}(1+o(1)),
\end{equation}
where the $o(1)$ error is from \cref{lemma: variance rel error}. This expression is plotted in ~\cref{fig:setting3} for various $\ell$. 

The rigorous upper bound on the coefficient of variation allows us to use Chebyshev's inequality to obtain a bound on the probability that the energy gap is negative for some spin flip. That is,
\begin{align}
    \mathbb{P}\left( \bigcup_{i = 1}^N \Delta E_i < 0 \right) &\leq \sum_i^N \mathbb{P}\left( \Delta E < 0\right) \\
    &\leq N \mathbb{P}\left( |\Delta E - \langle \Delta E \rangle| > \langle \Delta E \rangle \right) \\
    &\leq N (C^v_\ell)^2.
\end{align}

\subsubsection{Prototype Reconstruction $0 < \ell < h$}

Stability of prototypes with $0 < \ell < h$ indicates prototype reconstruction, which we take to be a form of generalization. When $K$ is large, holding other parameters constant, then each sum in ~\cref{eq: cv 0 < l < h} involving a coefficient in ~\cref{eq: simplified interior combinatorics} which contributes a positive power of $K$ is dominated by its largest term. In both the numerator and denominator, this largest term goes as $K^h$, which cancels out and so the limiting value becomes a constant:
\begin{align} \label{eq: cv l<h large K}
    (\mathcal{C}^v_{0 < \ell < h} )^2\;\xrightarrow{\,K\to\infty\,}\; {\,q^{-2\ell}-1\,}.
\end{align}
This expression only vanishes as $q \to 1$. However, in the $q \to 1$ limit, all levels of the tree merge as the correlations become perfect, and the examples only serve to reinforce the recall of the single root memory.

However, as we see in \cref{fig:setting3}, up to a point, $\mathcal{C}^v_{0<\ell<h}$ decreases with $K$. This occurs in a particular regime of $K$, namely $q^{-2} \ll K \ll q^{-2n+1}$. In this regime, for appropriate choice of scaling $K, n$ and $h$ with $N$, we show that the tail probability for there to be a spin, which when flipped lowers the energy, vanishes.

\begin{theo} [Coefficient of Variation for Ancestor Prototypes] \label{theorem: prototype reconstruction}
For $q^{2h} = \omega(2n/\sqrt{N})$ and $K \gg 3$, and $q^{-2} \ll K \ll q^{-2n+1}$, 
\begin{align} \label{eq: cv derived}
    (\mathcal{C}_{0 < \ell < h}^v)^2 \leq  \exp \left(\frac{2(h-\ell)}{K-1} \right) \cdot (1-q^2) \cdot \left[
K^2q^{4n-2}+(Kq^2)^{-1}
\right] \cdot (1+\nu) \cdot (1 + o(1)),
\end{align}

where $\nu = O(K^2q^{4n-2} + Kq^{2n} + (Kq^2)^{-1})$ and the $o(1)$ piece results from the relative error on the variance (\cref{lemma: variance rel error}). Furthermore, for $K \propto N \cdot \kappa(N)$, where $\kappa(N)$ is any arbitrarily slow growing function of $N$, there exist positive constants $c_n$ and $c_h$, such that for $n = c_n \log N$ and $h = c_h \log N$, 
\begin{align}
    \mathbb{P}\left( \bigcup_{i = 1}^N \Delta E_i < 0 \right) \leq o(1).
\end{align}

Thus, any given ancestor prototype has a vanishing probability to be unstable.
    
\end{theo}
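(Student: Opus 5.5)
Starting from the upper bound \cref{eq: cv 0 < l < h}, which already holds with the $(1+o(1))$ factor from \cref{lemma: variance rel error}, the plan is to evaluate numerator and denominator separately in the regime $q^{-2}\ll K\ll q^{-2n+1}$. In this regime the natural small parameters are $Kq^{2n}$, $K^2q^{4n-2}$ and $(Kq^2)^{-1}$. The proof amounts to identifying the dominant tripods and showing that all others are suppressed by these parameters.

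\textbf{Denominator.} With the exact $C^{(1)}_\ell$ of \cref{eq:counting}, the $d=h-\ell$ term $(K-1)^{h-\ell}q^{n(h-\ell)}$ dominates. The remaining terms form a geometric series in $(K-1)q^{2n}\ll 1$, so
\begin{align*}
\Big(\sum_d C^{(1)}_\ell(d)q^{nd}\Big)^2=(K-1)^{2(h-\ell)}q^{2n(h-\ell)}\bigl(1+O(Kq^{2n})\bigr).
\end{align*}
This is why exact counting must be kept in the denominator. Replacing $(K-1)^{2(h-\ell)}$ by $K^{2(h-\ell)}$ costs a factor $(1-1/K)^{-2(h-\ell)}\le \exp(2(h-\ell)/(K-1))$, which is exactly the prefactor in \cref{eq: cv derived}. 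In this step I compare the simplified numerator, built from $\tilde C^{(2)}_\ell\ge C^{(2)}_\ell$, against the exact denominator.

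\textbf{Numerator.} I go through the five families of \cref{eq: simplified interior combinatorics}, normalizing each term by $K^{2(h-\ell)}q^{2n(h-\ell)}$. Using $1-q^{2s}\le s(1-q^2)$, every surviving term carries a factor $(1-q^2)$, and $s=0$ terms vanish.
\begin{itemize}
\item Second family ($s+t=h-\ell$): this is the branch point moving toward the leaves. Each unit of $s$ contributes $K^{-1}q^{-2}$, so the $s=1$ term gives $(1-q^2)(Kq^2)^{-1}$. Higher $s$ form a geometric tail, contributing $O((Kq^2)^{-1})$ relative corrections.
\item First family ($t=h-\ell+s$, $s\ge1$): this is the two targets sharing a path through the parent. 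Each unit of $s$ contributes $K^2q^{4n-2}$, so $s=1$ gives $(1-q^2)K^2q^{4n-2}$, again with a geometric tail.
\item Third, fourth and fifth families: each also has relative weight at least one extra factor of $Kq^{2n}$, $K^2q^{4n-2}$ or $(Kq^2)^{-1}$ beyond the two leading terms. They can be absorbed into $\nu$ by bounding their double sums geometrically, using $K q^{2n}\ll1$ and $Kq^2\gg1$.
\end{itemize}
Collecting these pieces yields \cref{eq: cv derived}.

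\textbf{Main obstacle.} I expect the hardest step to be controlling the third and fourth families uniformly in $h$ and $\ell$. Their index ranges grow with $h$, and the sum over the extra parameter $c$ and over $t_2$ has to be dominated by a convergent geometric series whose ratio is one of the small parameters. My first attempt is to write the exponents as affine functions of $(s,t,c)$ and bound each one-dimensional sum by its first term times $(1-r)^{-1}$, with $r\in\{Kq^{2n},K^2q^{4n-2},(Kq^2)^{-1}\}$.

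\textbf{Scaling.} Take $K=N\kappa(N)$ and $n=c_n\log N$, with $c_n$ chosen so that $q^{2n}=N^{-2c_n\log(1/q)}$ makes $K^2q^{4n-2}=o(N^{-1})$. For example, $4c_n\log(1/q)>3$ suffices, up to the slowly varying $\kappa$. Also $(Kq^2)^{-1}=O(1/(N\kappa))=o(N^{-1})$.

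Choose $h=c_h\log N$ with $c_h$ small enough that $q^{2h}=N^{-2c_h\log(1/q)}=\omega(n/\sqrt N)$. This is possible since $n$ is only logarithmic, and it keeps \cref{lemma: variance rel error} valid. The exponential prefactor is $\exp(O(\log N/N))=1+o(1)$.

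Hence $N(\mathcal C^v_\ell)^2=o(1)$, and the union bound with Chebyshev, as displayed just before the theorem, gives the claim. The $o(N^{-1})$ rate from $(Kq^2)^{-1}$ hinges on $\kappa\to\infty$, which is exactly why $K$ must be slightly superlinear.
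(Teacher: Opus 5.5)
Your proposal is correct and follows essentially the same route as the paper. Both use the same five geometric families with leading terms $K^2q^{4n-2}$ and $(Kq^2)^{-1}$, extend each sum to infinity, factor out $(1-q^2)$ via $1-q^{2s}\le s(1-q^2)$, and lower bound the exact mean by its $d=h-\ell$ term to get $(K/(K-1))^{2(h-\ell)}\le e^{2(h-\ell)/(K-1)}$; the scaling conditions $c_n\log(1/q)>3/4$ and $c_h\log(1/q)<1/4$ also match. One small fix: once you insert $1-q^{2s}\le s(1-q^2)$, the tails are arithmetico-geometric rather than geometric, so bound them by the first term times $(1-r)^{-2}$ (or use that consecutive terms have ratio at most $2r$) instead of $(1-r)^{-1}$.
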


The proof of this theorem is given in \cref{sec: prototype theorem proof}. This theorem is the formal statement of the prototype reconstruction result of this work. This statement implies that every prototype in the hierarchy has a finite probability to be a stable local minimum in the energy landscape, leading to a finite density of emergent minima of the Hopfield model.

The term in the brackets of \cref{eq: cv derived} primarily controls the coefficient of variation. If we minimize the term in the brackets with respect to $K$, then we obtain $K^*$, i.e. $K^* \propto q^{-4n/3}$.

\subsubsection{Memorization:  $\ell = h$}
Let us now study the stability of the leaf node. As before, we seek to study limiting values. If we first take the large $K$ limit, then by similar arguments to the case above, we obtain
\begin{align}
    (\mathcal{C}_{\ell = h}^v)^2 \;\xrightarrow{\,K\to\infty\,}\; {q^{-2h} - 1},
\end{align}
which is simply ~\cref{eq: cv l<h large K} with the substitution $\ell = h$. Similarly, in the $h \to \infty$ limit, we once again demand that $Kq^{2n-1} < 1$ to prevent divergence. This can be seen simply from the terms corresponding to the $K^{2s}$ coefficient.

In order to obtain a finite expression when taking the $h \to \infty$ limit, we demand that $Kq^{2n-1} < 1$, as otherwise the infinite series in the above expressions will diverge. Unlike the interior prototype case however, if we take the $K \to  \infty$ limit now, assuming $Kq^{2n-1} < 1$, then $\mathcal{C}_{\ell = h}^v$ vanishes. To see this, we observe that in each sum involving a different coefficient from ~\cref{eq:l = h coefficients}, the dominant term is the lowest term of the sum. After carrying out some algebra, one finds that the leading form of $(\mathcal{C}^v_{\ell = h})^2$ has the form
\begin{align}
    \lim_{h \to \infty}  (\mathcal{C}^v_{\ell = h})^2 \propto {K^2 q^{4n-2} + Kq^{4n-4}}.
\end{align}
The above expression at large $K$, due to the assumption that $Kq^{2n-1} \to 0$, vanishes as this condition requires $q^{2n-1}$ to vanish quicker than $K$ diverges.  

\subsubsection{Root Prototype: $\ell = 0$}
Finally, let us analyze the stability of the root node. Here we obtain

\begin{align}
    (\mathcal{C}^v_{\ell = 0})^2 &= {\sum_{s=0}^{h} (K q^{2})^{-s} - K^{-s}}
\end{align}

As all leaves are derived from the same root, they all serve to reinforce it stability. For $Kq^2 > 1$, the sum in the expression above is convergent at large $h$. The $h\to \infty$ limit of the above expression is
\begin{align}
    \lim_{h \to \infty} (\mathcal{C}^v_{\ell = 0})^2 = {\frac{Kq^2}{Kq^2 - 1} - \frac{K}{K - 1}}.
\end{align}
In the large $K$ limit this expression vanishes. Of the cases we have studied this is the only case in which the coefficient of variation vanishes and so, the root is the only absolutely stable memory at fixed $n,h,q$.

The regime where $Kq^2 < 1$ is one where the correlations are sufficiently weak and there are not enough training examples for the network to take advantage of them in order to reinforce the root. In this regime, the coefficient of variation diverges.

\section{Additional Proofs and Derivations}

A central object that will appear repeatedly in the upcoming calculations are expectation values of the following form:
\begin{align}
    \left\langle \sum_{\substack{j_{1,1},\dots,j_{1,n-k_1} \\ \vdots \\ j_{m,1},\dots,j_{1,n-k_m}}}^N \prod_{a = 1}^m \prod_{r = 1}^{n-k_a} \xi_{j_{r,a}}^{\mu_a} \sigma_{j_{r,a}} \right\rangle.
\end{align}

The sum in the expectation value may be split into a contribution which result from all $j$ being distinct, and a contribution from at least two $j$s being the same. The benefit of this is that, upon using linearity of expectation, the expectation value of terms in the first of these sums will factorize.
\begin{align} \label{eq: collision expansion}
    \left\langle \sum_{\substack{j_{1,1},\dots,j_{1,n-k_1} \\ \vdots \\ j_{m,1},\dots,j_{1,n-k_m}}}^N \prod_{a = 1}^m \prod_{r = 1}^{n-k_a} \xi_{j_{r,a}}^{\mu_a} \sigma_{j_{r,a}} \right\rangle &= \sum_{\text{all }j_{r,a}\text{ distinct}} \prod_{a = 1}^m q^{(n-k_a)d(\boldsymbol{\xi}^{\mu_a},\boldsymbol{\sigma})} +\sum_{\geq 2\ j_{r,a}\text{ same}}  \left\langle \prod_{a = 1}^m \prod_{r = 1}^{n-k_a} \xi_{j_{r,a}}^{\mu_a} \sigma_{j_{r,a}} \right\rangle.
\end{align}

In order to facilitate the upcoming proofs, we will make use of the following helper lemma which places bounds on the second term on the RHS. We use the following shorthand notation $d_a = d(\boldsymbol{\xi}^{\mu_a}\boldsymbol{\sigma})$.

\begin{lem} [Bounding Collision Terms]\label{lemma: collisions}
    The second term of \cref{eq: collision expansion} is non-negative and upper bounded by

    \begin{align}
        \sum_{\geq 2\ j_{r,a}\text{ same}}  \left\langle \prod_{a = 1}^m \prod_{r = 1}^{n-k_a} \xi_{j_{r,a}}^{\mu_a} \sigma_{j_{r,a}}\right\rangle \leq \left( \prod_{a = 1}^m q^{(n-k_a)d_a}\right) {N-1 \choose R} R! \cdot  \sum_s^{R}    \left(\frac{q^{-2h}}{\sqrt{N-mn}} \cdot \frac{eR}{\log 2} \right)^{s}.
    \end{align}
\end{lem}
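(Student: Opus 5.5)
The plan is to expand the collision sum according to the \emph{pattern of coincidences} among the indices, and to bound each pattern separately. Write $R=\sum_a (n-k_a)\le mn$ for the total number of site indices $j_{r,a}$. Each assignment of sites determines a set partition $\pi$ of these $R$ labels: two labels are in the same block exactly when they carry the same site. The collision term is the sum over all $\pi$ that have at least one block of size $\ge 2$. For fixed $\pi$, the blocks are assigned distinct sites, and there are $(N-1)!/(N-1-|\pi|)!$ such assignments. Because different sites are independent, the expectation factorizes over blocks:
\begin{align}
\sum_{\geq 2\ j_{r,a}\text{ same}} \left\langle \prod_{a,r} \xi^{\mu_a}_{j_{r,a}}\sigma_{j_{r,a}}\right\rangle = \sum_{\pi\ \text{non-trivial}} \frac{(N-1)!}{(N-1-|\pi|)!} \prod_{B\in\pi} \Big\langle \prod_{(r,a)\in B} \xi^{\mu_a}_{j}\sigma_{j}\Big\rangle .
\end{align}

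Nonnegativity follows from the dashed-line cancellation rule. A single-site product of tree variables has expectation $q^{|O_B|}$, where $O_B$ is the set of edges traversed an odd number of times by the paths $\boldsymbol{\sigma}\to\boldsymbol{\xi}^{\mu_a}$ for $(r,a)\in B$. This is positive, so every term in the sum is nonnegative.

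For the upper bound, I will compare each $\pi$ with the fully distinct product $\prod_a q^{(n-k_a)d_a}$. Let $s=R-|\pi|\ge 1$ be the number of merges. Two estimates are needed.
\begin{itemize}
\item \emph{Counting factor.} Since $|\pi| = R-s$,
\begin{align}
\frac{(N-1)!}{(N-1-R+s)!}\le \binom{N-1}{R}R!\,(N-R)^{-s}\le \binom{N-1}{R}R!\,(N-mn)^{-s}.
\end{align}
\item \emph{$q$-factor.} A block $B$ of size $b\ge 2$ contributes at most $1$, while the reference product assigns it $\prod_{(r,a)\in B} q^{d_a}\ge q^{2hb}$, since every tree distance is at most $2h$. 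Its relative excess is therefore at most $q^{-2hb}\le q^{-4h(b-1)}$. Blocks of size one match the reference exactly.
\end{itemize}
Combining the two, each merge costs at most $q^{-4h}/(N-mn) = \big(q^{-2h}/\sqrt{N-mn}\big)^2$. In the regime of interest this quantity is $o(1)$, so it is bounded by $q^{-2h}/\sqrt{N-mn}$, giving one factor of $q^{-2h}/\sqrt{N-mn}$ per merge. The square root in the stated bound reflects this slack.

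It remains to sum over the partitions. Let $P_R(s)$ be the number of partitions of $R$ labels with exactly $s$ merges, which is the Stirling number $S(R,R-s)$. I need $P_R(s)\le (eR/\log 2)^s$. A partition with $s$ merges touches at most $2s$ labels in non-singleton blocks, so one route is
\begin{align}
S(R,R-s)\le \binom{R}{2s}\, B_{2s}\le \Big(\frac{eR}{2s}\Big)^{2s} B_{2s},
\end{align}
combined with a Bell-number bound of the form $B_k\le (k/\log(k+1))^k$ (the source of the $\log 2$). An alternative is a direct bound on $S(R,R-s)$, for example via $(R^2/2)^s/s!$; after using the relative-size slack this fits into $(eR/\log 2)^s$. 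Summing over $s$ from $1$ to $R$ then gives the stated geometric-type sum.

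I expect this combinatorial step to be the main obstacle. Matching the exact constant $eR/\log 2$, and making sure the leftover powers of $R$ are absorbed against the $\sqrt{N-mn}$ slack, requires choosing the right Stirling/Bell estimate. The parts involving independence, factorization and cancellation are routine.
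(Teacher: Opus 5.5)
Your decomposition into coincidence patterns, the nonnegativity argument, the counting factor $(N-1)!/(N-1-R+s)!\le\binom{N-1}{R}R!\,(N-R)^{-s}$ and the per-block estimate are all fine. The combinatorial step you single out, however, fails as stated.

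\textbf{The needed inequality is false.} With $s$ the number of \emph{merges}, $P_R(s)=S(R,R-s)$ grows like $R^{2s}/(2^s s!)$, because $s$ merges can involve up to $2s$ labels. Already $S(R,R-1)=\binom{R}{2}$ exceeds $eR/\log 2$ once $R\ge 9$. Your Bell-number route gives $\binom{R}{2s}B_{2s}\le \big(eR/\log(2s+1)\big)^{2s}$, which again carries the wrong power of $R$.

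\textbf{Why it happens.} The cause is the order of operations. Write $x:=q^{-2h}/\sqrt{N-mn}$. You spend the smallness of $x$ to turn the per-merge cost $x^2$ into $x$, and this leaves one factor of $x$ to absorb two factors of $R$.

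\textbf{Repair within your scheme.} Keep $x^{2s}$ and pair it with the count via the spanning-forest bound $S(R,R-s)\le\binom{R}{2}^s/s!$:
\[
S(R,R-s)\,x^{2s}\le (R^2x^2/2)^s/s!\le (Rx)^s\le (x\,eR/\log 2)^s .
\]
This requires $Rx\le 1$. That condition holds in the regime $q^{2h}=\omega(mn/\sqrt N)$, but it is stronger than the $x\le 1$ you invoke. Either way, your argument proves the lemma only under a smallness hypothesis that the statement does not contain.

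\textbf{The paper's route.} The paper avoids both problems by indexing the collision sum by the number $s$ of \emph{indices involved in collisions}, not by merges.
\begin{itemize}
\item It chooses the colliding set $S$ in $\binom{R}{s}$ ways and a partition of $S$ into $b$ blocks of size $\ge 2$.
\item It bounds the colliding expectations by $1$, paying $q^{-2h}$ for each of the $s$ colliding indices.
\item It uses $s-b\ge s/2$ to bound the site count by $\binom{N-1}{R}R!\,(N-R)^{-s/2}$.
\end{itemize}
Each colliding index therefore costs exactly one factor of $x$. The combinatorial count is $\binom{R}{s}B_s\le (eR/s)^s(s/\log s)^s=(eR/\log s)^s$, which has only $R^s$. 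This yields the stated $(x\,eR/\log 2)^s$ per value of $s$ with no smallness assumption at all.
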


\begin{proof}
    Non-negativity is easily proven as the correlations are all positive and so the expectation values must all be greater than or equal to $0$.

    To derive the upper bound, we will decompose the summation into colliding and distinct indices. That is, every term in the summation involves some set of indices among which there are collisions (i.e. the $j_{r,a}$ are the same), and then the remaining indices are all distinct. The colliding indices need not all be the same, but for $j_{r,a}$ in the set of colliding indices, there is at least one other $j_{r',a'}$ for which $j_{r,a} = j_{r',a'}$.

    With this logic, and setting $R = \sum_{a=1}^m n - k_a$, the sum may be written as
\begin{align}
    \sum_{\geq 2\ j_{r,a}\text{ same}}  \left\langle \prod_{a = 1}^m \prod_{r = 1}^{n-k_a} \xi_{j_{r,a}}^{\mu_a} \sigma_{j_{r,a}} \right \rangle &= \sum_s^{R} \sum_{\substack{S \subset \{(r,a)\} \\ : |S| = s}} \sum_{\substack{\pi \vdash S \\ |B| \geq 2 \forall B}} \sum_{\substack{ {u_B}_{B \in \pi} \\
    \{v_{(a,r)}\}_{(a,r) \notin S} \\ \text{ distinct} }} \prod_{B\in \pi} \left\langle \prod_{(a,r) \in B} \xi_{u_B}^{\mu_a} \sigma_{u_B} \right\rangle  \prod_{(a,r) \notin S} \left\langle \xi_{v_{(a,r)}}^{\mu_a} \sigma_{v_{(a,r)}} \right\rangle .
\end{align}

We use $s$ to indicate the number of indices amongst whome there are collisions. Then, we choose a subset $S$ of size $s$ amongst all the tuples $(r,a)$. This subset $S$ is further decomposed into subsets, and among those subsets, $j_{(r,a)}$ is equal. Thus, $S = \sqcup_{B \in \pi} u_B$. Now, we shall upper bound this expression by replacing the expectation over the colliding indices with $1$. With this, and using the dashed line propagator, we obtain
\begin{align}
     \sum_{\geq 2\ j_{r,a}\text{ same}}  \left\langle \prod_{a = 1}^m \prod_{r = 1}^{n-k_a} \xi_{j_{r,a}}^{\mu_a} \sigma_{j_{r,a}} \right \rangle &\leq  \sum_s^{R} \sum_{\substack{S \subset \{(r,a)\} \\ : |S| = s}} \sum_{\substack{\pi \vdash S \\ |B| \geq 2 \forall B}} \sum_{\substack{ {u_B}_{B \in \pi} \\
    \{v_{(a,r)}\}_{(a,r) \notin S} \\ \text{ distinct} }}  \prod_{(a,r) \notin S} q^{d_a} \\
    &= \sum_s^{R} \sum_{\substack{S \subset \{(r,a)\} \\ : |S| = s}} \sum_{\substack{\pi \vdash S \\ |B| \geq 2 \forall B}}  {N-1 \choose R- s + |\pi|} (R- s + |\pi|)!\prod_{(a,r) \notin S} q^{d_a} \\
    &= \left( \prod_{a = 1}^m q^{(n-k_a)d_a}\right) \sum_s^{R} \sum_{\substack{S \subset \{(r,a)\} \\ : |S| = s}} \sum_{\substack{\pi \vdash S \\ |B| \geq 2 \forall B}}  {N-1 \choose R- s + |\pi|} (R- s + |\pi|)!\prod_{(a,r) \in S} q^{-d_a} \\
    &\leq  \left( \prod_{a = 1}^m q^{(n-k_a)d_a}\right) \sum_s^{R}  q^{-2hs} {R \choose s} \sum_{b = 1}^{\lfloor s/2 \rfloor} S_2(s,b)  {N-1 \choose R- s + b} (R- s + b)! 
\end{align}

In the third line, we made the replacement $\prod_{(a,r) \notin S} q^{d_a} = \prod_{a = 1}^m q^{(n-k_a)d_a}  \prod_{(a,r) \in S} q^{-d_a}$. In the fourth line, we use the fact that $q^{-d_a} \leq q^{-2h}$ for any $d_a$. We also replace the sums with appropriate counts as the sums no longer depend on any other properties of the subsets beyond their size. $S_r$ denotes the $r$-associated Stirling number of the second kind. 

Next we use the fact that $s - b \geq s/2$ to bound the product of binomial coefficients:
\begin{align}
    {N-1 \choose R- s + b} (R- s + b)! \leq {N-1 \choose R} R! \left( \frac{1}{N-R} \right)^{s/2}.
\end{align}

Substituting, we get
\begin{align}
    &\leq  \left( \prod_{a = 1}^m q^{(n-k_a)d_a}\right) {N-1 \choose R} R! \cdot  \sum_s^{R}  q^{-2hs} {R \choose s} \left( \frac{1}{N-R} \right)^{s/2}  \sum_{b = 1}^{\lfloor s/2 \rfloor} S_2(s,b)   \\
    &\leq  \left( \prod_{a = 1}^m q^{(n-k_a)d_a}\right) {N-1 \choose R} R! \cdot  \sum_s^{R}    \left(\frac{q^{-2h}}{\sqrt{N-R}} \cdot \frac{eR}{\log s} \right)^{s}
\end{align}

To obtain the second line, we used the partition bound on $S_2$ and the binomial bound, and collected terms. Next, we use the fact that $R \leq mn $ to lower bound the $\sqrt{N-R}$ term, and that $\log s \geq \log 2$, which yields
\begin{align}
    &\leq  \left( \prod_{a = 1}^m q^{(n-k_a)d_a}\right) {N-1 \choose R} R! \cdot  \sum_s^{R}    \left(\frac{q^{-2h}}{\sqrt{N-mn}} \cdot \frac{eR}{\log 2} \right)^{s}
\end{align}
    
\end{proof}

\subsection{Proof of \cref{lemma:all k 1}} \label{sec: lemma2 proof}

\begin{proof}

With \cref{eq:T}, we write the $m$th moment as
\begin{align}
    \langle (\Delta E)^m \rangle 
    &= 2^m \sum_{\mu_1,\ldots,\mu_m = 1}^M \; T_{(1,1,\dots 1)}\left( 1 + \sum_{\substack{k_1,\ldots,k_m \\ \rm all\ odd \\
    \neq (1,1\dots,1)}}^n \frac{T_{\mathbf{k}}}{T_{(1,1,\dots 1)}} \right) .
\end{align}

The term in the parenthesis is the relative error. As the correlations are all positive, notice that $T_{\mathbf{k}} \geq 0$ for any $\mathbf{k}$. Thus the relative error is always positive, which yields the lower bound of the lemma. 

We can factorize $T_\mathbf{k}$ to yield
\begin{align}
    T_\mathbf{k} &= \prod_{a=1}^m {n \choose k_a}\, \left \langle\prod_{a=1}^m  (\xi_i^{\mu_a} \sigma_i) \right\rangle \left\langle \prod_{a=1}^m \left(\sum_{j \neq i} \xi^{\mu_a}_j \sigma_j \right)^{n-k_a} \right\rangle \\
    &= \prod_{a=1}^m {n \choose k_a}\, \left \langle\prod_{a=1}^m  (\xi_i^{\mu_a} \sigma_i) \right\rangle \left\langle \sum_{\substack{j_{1,1},\dots,j_{1,n-k_1} \\ \vdots \\ j_{m,1},\dots,j_{1,n-k_m}}}^N \prod_{a = 1}^m \prod_{r = 1}^{n-k_a} \xi_{j_{r,a}}^{\mu_a} \sigma_{j_{r,a}} \right\rangle.
\end{align}

Following \cref{eq: collision expansion}, we obtain

\begin{align}
    &= \prod_{a=1}^m {n \choose k_a}\, \left \langle\prod_{a=1}^m  (\xi_i^{\mu_a} \sigma_i) \right\rangle \left[  \sum_{\text{all }j_{r,a}\text{ distinct}} \prod_{a = 1}^m q^{(n-k_a)d_a} +\sum_{\geq 2\ j_{r,a}\text{ same}}  \left\langle \prod_{a = 1}^m \prod_{r = 1}^{n-k_a} \xi_{j_{r,a}}^{\mu_a} \sigma_{j_{r,a}} \right\rangle \right]
\end{align}

We use the shorthand $d(\boldsymbol{\xi}^{\mu_a},\boldsymbol{\sigma}) = d_a$.

Let us now write the ratio between an arbitrary $\mathbf{k}$ case and the case of $\mathbf{k} = (1,1,\dots,1)$.

\begin{align}
    \varepsilon_\mathbf{k} = \frac{T_{\mathbf{k}}}{T_{(1,1,\dots,1)}} &= \frac{\prod_{a=1}^m {n \choose k_a}\, \left \langle\prod_{a=1}^m  (\xi_i^{\mu_a} \sigma_i) \right\rangle \left[  \sum_{\text{all }j_{r,a}\text{ distinct}} \prod_{a = 1}^m q^{(n-k_a)d_a} +\sum_{\geq 2\ j_{r,a}\text{ same}}  \left\langle \prod_{a = 1}^m \prod_{r = 1}^{n-k_a} \xi_{j_{r,a}}^{\mu_a} \sigma_{j_{r,a}} \right\rangle \right]}{\prod_{a=1}^m {n \choose 1}\, \left \langle\prod_{a=1}^m  (\xi_i^{\mu_a} \sigma_i) \right\rangle \left[  \sum_{\text{all }j_{r,a}\text{ distinct}} \prod_{a = 1}^m q^{(n-1)d_a} +\sum_{\geq 2\ j_{r,a}\text{ same}}  \left\langle \prod_{a = 1}^m \prod_{r = 1}^{n-1} \xi_{j_{r,a}}^{\mu_a} \sigma_{j_{r,a}} \right\rangle \right]} 
\end{align}

To simplify notation, set $R = \sum_{a = 1}^m n-k_a$. We use \cref{lemma: collisions} now in two ways. First, we may remove the second term from the denominator, as it is positive, to obtain a valid upper bound. Second, we use \cref{lemma: collisions} to upper bound the second term in the numerator. We additionally cancel the $\left \langle\prod_{a=1}^m  (\xi_i^{\mu_a} \sigma_i) \right\rangle$ from the numerator and denominator in the prefactor, and use the binomial bound, to obtain

\begin{align}
    \varepsilon_\mathbf{k} &\leq \frac{(en)^{\sum_a k_a}}{n^m \prod_a^m k_a^{k_a}} \cdot \frac{  \sum_{\text{all }j_{r,a}\text{ distinct}} \prod_{a = 1}^m q^{(n-k_a)d_a} +\left( \prod_{a = 1}^m q^{(n-k_a)d_a}\right) {N-1 \choose R} R! \cdot  \sum_s^{R}    \left(\frac{q^{-2h}}{\sqrt{N-mn}} \cdot \frac{eR}{\log 2} \right)^{s} }{  \sum_{\text{all }j_{r,a}\text{ distinct}} \prod_{a = 1}^m q^{(n-1)d_a}  }
\end{align}

We next insert the appropriate counting factors for the sum over distinct indices. 

\begin{align}
    \varepsilon_\mathbf{k} &\leq \frac{(en)^{\sum_a k_a}}{n^m \prod_a^m k_a^{k_a}} \cdot \frac{ {N-1 \choose R}R! \prod_{a = 1}^m q^{(n-k_a)d_a} +\left( \prod_{a = 1}^m q^{(n-k_a)d_a}\right) {N-1 \choose R} R! \cdot  \sum_s^{R}    \left(\frac{q^{-2h}}{\sqrt{N-mn}} \cdot \frac{eR}{\log 2} \right)^{s} }{  {N-1 \choose m(n-1)}(m(n-1))! \prod_{a = 1}^m q^{(n-1)d_a}  } \\
    &= \frac{(en)^{\sum_a k_a}}{n^m \prod_a^m k_a^{k_a}} \cdot \left(\frac{{N-1 \choose R} R!}{{N-1 \choose m(n-1)} (m(n-1))!} \prod_{a=1}^m q^{(1-k_a)d_a} \right) \cdot \left( 1+  \sum_s^{R}    \left(\frac{q^{-2h}}{\sqrt{N-mn}} \cdot \frac{eR}{\log 2} \right)^{s}\right)\\
    &\leq \frac{(en)^{\sum_a k_a}}{n^m \prod_a^m k_a^{k_a}} \cdot \left( \frac{   (1/q)^{2h} }{(N - mn)} \right)^{\sum_a k_a - m} \cdot \left( 1+  \sum_s^{R}    \left(\frac{q^{-2h}}{\sqrt{N-mn}} \cdot \frac{eR}{\log 2} \right)^{s}\right) \label{eq: this line}
\end{align}

In the third line, we used the fact that $q^{-d_a} \leq q^{-2h}$ and the following:
\begin{align} \label{eq:combinatoric manipulation}
    \frac{ {N-1 \choose R}R!}{{N-1 \choose m(n-1)} ( m(n-1) )!} &= \frac{(N-1-(m(n-1))!}{(N-1-R)!} \\
    &= \frac{1}{(N-m(n-1))(N-m(n-1)+1)\dots(N-1-R)} \\
    &\leq \frac{1}{(N-m(n-1))^{\sum_a k_a - m}} \\
    &\leq \frac{1}{(N-mn)^{\sum_a k_a - m}}.
\end{align}

The prefactor in \cref{eq: this line} vanishes under the assertion that $q^{2h} = \omega(n/N)$. 
The geometric series in the brackets of \cref{eq: this line} converges so long as the common ratio is less than $1/2$. This will be true under the stronger assertion that $q^{2h} = \omega(n / \sqrt{N})$. Combined with the prefactor, the entire ratio becomes $o(1)$.

Let us define $\delta = \frac{ n  (1/q)^{2h} }{(N - mn)}$. Then, the overall ratio $\varepsilon_\mathbf{k}$ is now upper bounded by
\begin{align}
    \varepsilon_\mathbf{k} \leq C \delta^{-m} \prod_a^m \left( \frac{e}{k_a} \delta\right)^{k_a} 
\end{align}
where here, $C$ is some constant which depends on the sum in the brackets of \cref{eq: this line}.  
We use this and return to the moment:
\begin{align}
    \langle (\Delta E)^m \rangle &=  2^m \sum_{\mu_1,\ldots,\mu_m = 1}^M \; T_{(1,1,\dots 1)}\left( 1 + \sum_{\substack{k_1,\ldots,k_m \\ \rm all\ odd \\
    \neq (1,1\dots,1)}}^n \varepsilon_\mathbf{k} \right) \\
    &\leq   2^m \sum_{\mu_1,\ldots,\mu_m = 1}^M \; T_{(1,1,\dots 1)}\left( 1 + C \delta^{-m} \sum_{\substack{k_1,\ldots,k_m \\ \rm all\ odd \\
    \neq (1,1\dots,1)}}^n \prod_a^m \left( \frac{e}{k_a} \delta\right)^{k_a} \right) \\
    &\leq   2^m \sum_{\mu_1,\ldots,\mu_m = 1}^M \; T_{(1,1,\dots 1)}\left( 1 +  O(\delta^2) \right)
\end{align}

As long as $q^{2h} = \omega(mn / \sqrt{N})$, the $O(\delta^2)$ term becomes $o(1)$ and so
\begin{align}
    \langle (\Delta E)^m \rangle &= 2^m \sum_{\mu_1,\ldots,\mu_m = 1}^M \; T_{(1,1,\dots 1)}\left( 1 +  o(1) \right),
\end{align}

i.e., the $m$th moment is well approximated by the all $k_a= 1$ term. 
\end{proof}

\subsection{Proof of \cref{lemma: bulk propagator}} \label{sec: proof of bulk propagator}

\begin{proof}
    Let us first expand the term in the expectation.
    \begin{align}
        \left\langle \prod_a^m \left( \sum_{j \neq i} \xi_j^{\mu_a} \sigma_j \right)^{n-1} \right\rangle &= \left\langle\sum_{\substack{j_{1,1},\dots,j_{1,n-1} \\ \vdots \\ j_{m,1},\dots,j_{1,n-1}}}^N \prod_{a = 1}^m \prod_{r = 1}^{n-1} \xi_{j_{r,a}}^{\mu_a} \sigma_{j_{r,a}} \right\rangle 
    \end{align}
    Following \cref{eq: collision expansion}, we split the sum on the RHS into distinct and collision terms.  Counting the total number of distinct $j_{r,a}$ and using \cref{lemma: collisions}, we obtain
    \begin{align}
        &\leq   {N-1 \choose m(n-1)} ( m(n-1) )! \prod_{a = 1}^m q^{(n-1)d_a} +\left( \prod_{a = 1}^m q^{(n-1)d_a}\right) {N-1 \choose m(n-1)} (m(n-1))! \cdot  \sum_s^{m(n-1)}    \left(\frac{q^{-2h}}{\sqrt{N-mn}} \cdot \frac{em(n-1)}{\log s} \right)^{s} \\
        &\leq   \frac{(N-1)!}{(N-1- m(n-1))!} \prod_{a = 1}^m q^{(n-1)d_a} \left(1 + \sum_s^{m(n-1)}    \left(\frac{q^{-2h}}{\sqrt{N-mn}} \cdot \frac{emn}{\log s} \right)^{s} \right) .
    \end{align}

    We obtain that the relative error $\bar{\epsilon}$ is upper bounded by
    \begin{align}
         \bar{\epsilon} \leq \sum_s^{nm} \left( \frac{q^{-2h} e mn}{\log 2 \sqrt{N-mn}} \right)^{s}.
    \end{align}

    This sum and therefore the relative error is $o(1)$ under the constraint $q^{2h} = \omega(mn / \sqrt{N})$. Subsequently, the entire propagator is
    \begin{align}
        &\leq \frac{(N-1)!}{(N- 1-m(n-1))!} \prod_{a = 1}^m q^{(n-1)d_a} \left(1 + o(1)\right) \\
        &\leq \left[ \prod_{a = 1}^m  N^{n-1} q^{(n-1)d_a} \right]\left(1 + o(1)\right).
    \end{align}

    In the final line, we have just upper bounded $\frac{(N-1)!}{(N- 1- m(n-1))!} $ with $N^{m(n-1)}$.

    As the relative error term is positive, the lower bound is obtained by lower bounding the prefactor. This yields
    \begin{align}
\left\langle \prod_a^m \left( \sum_{j \neq i} \xi_j^{\mu_a} \sigma_j \right)^{n-1} \right\rangle &\geq         \frac{(N-1)!}{(N- 1-m(n-1))!} \prod_{a = 1}^m q^{(n-1)d(\boldsymbol{\xi}^{\mu_a},\boldsymbol{\sigma})}.
    \end{align}
    
\end{proof}

\subsection{Proof of \cref{lemma: variance rel error}} \label{sec: proof of variance rel error}

\begin{proof}

The overall variance may be expressed as 
\begin{align}
    \text{Var}(\Delta E) &= 4 \sum_{\mu_1 \mu_2 }^M \sum_{k_1, k_2,\rm\ odd}^n {n \choose k_1}{n \choose k_2} \left[ \;\left \langle (\xi_i^{\mu_1} \sigma_i) (\xi_i^{\mu_2} \sigma_i) \right\rangle \left\langle  \left(\sum_{j \neq i} \xi^{\mu_1}_j \sigma_j \right)^{n-k_1}  \left(\sum_{j \neq i} \xi^{\mu_2}_j \sigma_j \right)^{n-k_2} \right\rangle \right. \nonumber \\
    &\left. -\left\langle (\xi_i^{\mu_1} \sigma_i  \right\rangle \left\langle \left( \sum_{j\neq i} \xi_j^{\mu_1}\sigma_j\right)^{n-k_1}\right\rangle \left\langle (\xi_i^{\mu_2} \sigma_i  \right\rangle \left\langle \left( \sum_{j\neq i} \xi_j^{\mu_2}\sigma_j\right)^{n-k_2}\right\rangle\right] .
\end{align}

As before, we will attempt to approximate this expression with $k_1 = k_2 = 1$. The relative error there is
\begin{align}
    \tilde{\varepsilon}_\mathbf{k} &= \frac{{n\choose k_1}{n\choose k_2}}{n^2}\frac{\left[ \;\left \langle (\xi_i^{\mu_1} \sigma_i) (\xi_i^{\mu_2} \sigma_i) \right\rangle \left\langle  \left(\sum_{j \neq i} \xi^{\mu_1}_j \sigma_j \right)^{n-k_1}  \left(\sum_{j \neq i} \xi^{\mu_2}_j \sigma_j \right)^{n-k_2} \right\rangle -\left \langle (\xi_i^{\mu_1} \sigma_i) \right\rangle \left\langle (\xi_i^{\mu_2} \sigma_i) \right\rangle \left\langle \left( \sum_{j\neq i} \xi_j^{\mu_1}\sigma_j\right)^{n-k_1}\right\rangle \left\langle \left( \sum_{j\neq i} \xi_j^{\mu_2}\sigma_j\right)^{n-k_2}\right\rangle\right]}{\left[ \;\left \langle (\xi_i^{\mu_1} \sigma_i) (\xi_i^{\mu_2} \sigma_i) \right\rangle\left\langle  \left(\sum_{j \neq i} \xi^{\mu_1}_j \sigma_j \right)^{n-1}  \left(\sum_{j \neq i} \xi^{\mu_2}_j \sigma_j \right)^{n-1} \right\rangle -\left \langle (\xi_i^{\mu_1} \sigma_i) \right\rangle \left\langle (\xi_i^{\mu_2} \sigma_i) \right\rangle \left\langle \left( \sum_{j\neq i} \xi_j^{\mu_1}\sigma_j\right)^{n-1}\right\rangle \left\langle \left( \sum_{j\neq i} \xi_j^{\mu_2}\sigma_j\right)^{n-1}\right\rangle\right]} \\
    &\leq \frac{{n\choose k_1}{n\choose k_2}}{n^2}\frac{
    \left \langle (\xi_i^{\mu_1} \sigma_i) (\xi_i^{\mu_2} \sigma_i) \right\rangle 
    \left\langle  \left(\sum_{j \neq i} \xi^{\mu_1}_j \sigma_j \right)^{n-k_1}  \left(\sum_{j \neq i} \xi^{\mu_2}_j \sigma_j \right)^{n-k_2} \right\rangle}
    {\left[ \;\left \langle (\xi_i^{\mu_1} \sigma_i) (\xi_i^{\mu_2} \sigma_i) \right\rangle\left\langle  \left(\sum_{j \neq i} \xi^{\mu_1}_j \sigma_j \right)^{n-1}  \left(\sum_{j \neq i} \xi^{\mu_2}_j \sigma_j \right)^{n-1} \right\rangle -\left \langle (\xi_i^{\mu_1} \sigma_i) \right\rangle \left\langle (\xi_i^{\mu_2} \sigma_i) \right\rangle \left\langle \left( \sum_{j\neq i} \xi_j^{\mu_1}\sigma_j\right)^{n-1}\right\rangle \left\langle \left( \sum_{j\neq i} \xi_j^{\mu_2}\sigma_j\right)^{n-1}\right\rangle\right]} \\
    &= \frac{{n\choose k_1}{n\choose k_2}}{n^2}\frac{ 
    \left\langle  \left(\sum_{j \neq i} \xi^{\mu_1}_j \sigma_j \right)^{n-k_1}  \left(\sum_{j \neq i} \xi^{\mu_2}_j \sigma_j \right)^{n-k_2} \right\rangle}
    {\left\langle  \left(\sum_{j \neq i} \xi^{\mu_1}_j \sigma_j \right)^{n-1}  \left(\sum_{j \neq i} \xi^{\mu_2}_j \sigma_j \right)^{n-1} \right\rangle } \frac{1}{1-\frac{\left \langle (\xi_i^{\mu_1} \sigma_i) \right\rangle \left\langle (\xi_i^{\mu_2} \sigma_i) \right\rangle \left\langle \left( \sum_{j\neq i} \xi_j^{\mu_1}\sigma_j\right)^{n-1}\right\rangle \left\langle \left( \sum_{j\neq i} \xi_j^{\mu_2}\sigma_j\right)^{n-1}\right\rangle}{\left \langle (\xi_i^{\mu_1} \sigma_i) (\xi_i^{\mu_2} \sigma_i) \right\rangle\left\langle  \left(\sum_{j \neq i} \xi^{\mu_1}_j \sigma_j \right)^{n-1}  \left(\sum_{j \neq i} \xi^{\mu_2}_j \sigma_j \right)^{n-1} \right\rangle}} \\
    &\leq \varepsilon_{\mathbf{k}} \left( 1-\frac{\left \langle (\xi_i^{\mu_1} \sigma_i) \right\rangle \left\langle (\xi_i^{\mu_2} \sigma_i) \right\rangle \left\langle \left( \sum_{j\neq i} \xi_j^{\mu_1}\sigma_j\right)^{n-1}\right\rangle \left\langle \left( \sum_{j\neq i} \xi_j^{\mu_2}\sigma_j\right)^{n-1}\right\rangle}{\left \langle (\xi_i^{\mu_1} \sigma_i) (\xi_i^{\mu_2} \sigma_i) \right\rangle\left\langle  \left(\sum_{j \neq i} \xi^{\mu_1}_j \sigma_j \right)^{n-1}  \left(\sum_{j \neq i} \xi^{\mu_2}_j \sigma_j \right)^{n-1} \right\rangle}\right)^{-1}.
\end{align}

In the second line, we used the fact that the correlations are positive to remove the subtraction from the numerator. Then, by factoring the expression in line 3, we obtain a prefactor which is identical in form to that in \cref{lemma:all k 1}. We upper bound the prefactor using \cref{lemma:all k 1} to get to line 4. Continuing onwards, we now focus on the term in the parenthesis. To upper bound the variance, we must lower bound the term in the parenthesis.

\begin{align}
    1-\frac{\left \langle (\xi_i^{\mu_1} \sigma_i) \right\rangle \left\langle (\xi_i^{\mu_2} \sigma_i) \right\rangle \left\langle \left( \sum_{j\neq i} \xi_j^{\mu_1}\sigma_j\right)^{n-1}\right\rangle \left\langle \left( \sum_{j\neq i} \xi_j^{\mu_2}\sigma_j\right)^{n-1}\right\rangle}{\left \langle (\xi_i^{\mu_1} \sigma_i) (\xi_i^{\mu_2} \sigma_i) \right\rangle\left\langle  \left(\sum_{j \neq i} \xi^{\mu_1}_j \sigma_j \right)^{n-1}  \left(\sum_{j \neq i} \xi^{\mu_2}_j \sigma_j \right)^{n-1} \right\rangle} &= 1-\frac{q^{d(\boldsymbol{\xi}^{\mu_1},\boldsymbol{\sigma}) + d(\boldsymbol{\xi}^{\mu_2},\boldsymbol{\sigma})}\left\langle \left( \sum_{j\neq i} \xi_j^{\mu_1}\sigma_j\right)^{n-1}\right\rangle \left\langle \left( \sum_{j\neq i} \xi_j^{\mu_2}\sigma_j\right)^{n-1}\right\rangle}{q^{d(\boldsymbol{\xi}^{\mu_1}, \boldsymbol{\xi}^{\mu_2})}\left\langle  \left(\sum_{j \neq i} \xi^{\mu_1}_j \sigma_j \right)^{n-1}  \left(\sum_{j \neq i} \xi^{\mu_2}_j \sigma_j \right)^{n-1} \right\rangle} \\
    &= 1-q^{2s}\frac{\left\langle \left( \sum_{j\neq i} \xi_j^{\mu_1}\sigma_j\right)^{n-1}\right\rangle \left\langle \left( \sum_{j\neq i} \xi_j^{\mu_2}\sigma_j\right)^{n-1}\right\rangle}{\left\langle  \left(\sum_{j \neq i} \xi^{\mu_1}_j \sigma_j \right)^{n-1}  \left(\sum_{j \neq i} \xi^{\mu_2}_j \sigma_j \right)^{n-1} \right\rangle} \\
    &\geq 1 - q^{2s}.
\end{align}
In the first line we used the dashed line rule and in the second line, we used the tripod relation, where $s$ is the distance from the probe $\boldsymbol{\sigma}$ to the median point. The third line is obtained by realizing that as all correlations in the problem are positive. Therefore, for even $n$ and due to the independence of the sites, $(\sum_{j\neq i} \xi_j^{\mu_1}\sigma_j)^{n-1})$ and $(\sum_{j\neq i} \xi_j^{\mu_2}\sigma_j)^{n-1})$ are positively associated and the denominator must be at least as large as the numerator (see FKG inequality). Therefore, the fraction is upper bounded by 1. Thus the full term has the upper bound:
\begin{align}
    \tilde{\varepsilon}_{\mathbf{k}} &\leq \frac{\varepsilon_{\mathbf{k}}}{1-q^{2s}}.
\end{align}

Here, $s$ will depend on $\mu_1$ and $\mu_2$. It is possible for $s = 0$. But as we will see, the contribution to the variance from these terms is 0. 

We return now to the full variance formula:
\begin{align}
    \text{Var}(\Delta E) &\leq 4 n^2\sum_{\mu_1 \mu_2 , s\neq 0}^M \left[ \;\left \langle (\xi_i^{\mu_1} \sigma_i) (\xi_i^{\mu_2} \sigma_i) \right\rangle \left\langle  \left(\sum_{j \neq i} \xi^{\mu_1}_j \sigma_j \right)^{n-1}  \left(\sum_{j \neq i} \xi^{\mu_2}_j \sigma_j \right)^{n-1} \right\rangle \right. \nonumber \\
    &\left. -\left\langle (\xi_i^{\mu_1} \sigma_i  \right\rangle \left\langle \left( \sum_{j\neq i} \xi_j^{\mu_1}\sigma_j\right)^{n-1}\right\rangle \left\langle (\xi_i^{\mu_2} \sigma_i  \right\rangle \left\langle \left( \sum_{j\neq i} \xi_j^{\mu_2}\sigma_j\right)^{n-1}\right\rangle\right] \times \left(1 + \frac{1}{1-q^{2s(\mu_1,\mu_2)}}\sum_{k_1,k_2 \neq (1,1),\rm odd}^n \varepsilon_\mathbf{k}\right) \\
    &\leq 4 n^2\sum_{\mu_1 \mu_2 , s\neq 0}^M \left[ \;\left \langle (\xi_i^{\mu_1} \sigma_i) (\xi_i^{\mu_2} \sigma_i) \right\rangle \left\langle  \left(\sum_{j \neq i} \xi^{\mu_1}_j \sigma_j \right)^{n-1}  \left(\sum_{j \neq i} \xi^{\mu_2}_j \sigma_j \right)^{n-1} \right\rangle \right. \nonumber \\
    &\left. -\left\langle (\xi_i^{\mu_1} \sigma_i  \right\rangle \left\langle \left( \sum_{j\neq i} \xi_j^{\mu_1}\sigma_j\right)^{n-1}\right\rangle \left\langle (\xi_i^{\mu_2} \sigma_i  \right\rangle \left\langle \left( \sum_{j\neq i} \xi_j^{\mu_2}\sigma_j\right)^{n-1}\right\rangle\right] \times \left(1 + \frac{1}{1-q^{2s(\mu_1,\mu_2)}}O(\delta^2)\right) \\
    &= 4 n^2 \sum_{\mu_1 \mu_2, s\neq 0 }^M \left[ \;q^{t_1+t_2} \left\langle  \left(\sum_{j \neq i} \xi^{\mu_1}_j \sigma_j \right)^{n-1}  \left(\sum_{j \neq i} \xi^{\mu_2}_j \sigma_j \right)^{n-1} \right\rangle  -q^{2s+t_1+t_2} \left\langle \left( \sum_{j\neq i} \xi_j^{\mu_1}\sigma_j\right)^{n-1}\right\rangle \left\langle \left( \sum_{j\neq i} \xi_j^{\mu_2}\sigma_j\right)^{n-1}\right\rangle\right] \nonumber \\
    &\times \left(1 + \frac{1}{1-q^{2s(\mu_1,\mu_2)}}O(\delta^2)\right)
\end{align}

We must now control the term in the bracket. We will show that this term, divided by using the solid line approximation is 1 plus a small error term. 

\begin{align}
    &\frac{q^{t_1+t_2} \left\langle  \left(\sum_{j \neq i} \xi^{\mu_1}_j \sigma_j \right)^{n-1}  \left(\sum_{j \neq i} \xi^{\mu_2}_j \sigma_j \right)^{n-1} \right\rangle  -q^{2s+t_1+t_2} \left\langle \left( \sum_{j\neq i} \xi_j^{\mu_1}\sigma_j\right)^{n-1}\right\rangle \left\langle \left( \sum_{j\neq i} \xi_j^{\mu_2}\sigma_j\right)^{n-1}\right\rangle}
    {q^{t_1+t_2} \left\langle  \left(\sum_{j \neq i} \xi^{\mu_1}_j \sigma_j \right) \right\rangle^{n-1} \left\langle  \left(\sum_{j \neq i} \xi^{\mu_2}_j \sigma_j \right) \right\rangle^{n-1}  -q^{2s+t_1+t_2} \left\langle  \left(\sum_{j \neq i} \xi^{\mu_1}_j \sigma_j \right) \right\rangle^{n-1} \left\langle  \left(\sum_{j \neq i} \xi^{\mu_2}_j \sigma_j \right) \right\rangle^{n-1}} \\
    &= \frac{\left\langle  \left(\sum_{j \neq i} \xi^{\mu_1}_j \sigma_j \right)^{n-1}  \left(\sum_{j \neq i} \xi^{\mu_2}_j \sigma_j \right)^{n-1} \right\rangle  -q^{2s} \left\langle \left( \sum_{j\neq i} \xi_j^{\mu_1}\sigma_j\right)^{n-1}\right\rangle \left\langle \left( \sum_{j\neq i} \xi_j^{\mu_2}\sigma_j\right)^{n-1}\right\rangle}
    {\left\langle  \left(\sum_{j \neq i} \xi^{\mu_1}_j \sigma_j \right) \right\rangle^{n-1} \left\langle  \left(\sum_{j \neq i} \xi^{\mu_2}_j \sigma_j \right) \right\rangle^{n-1}  \left( 1- q^{2s} \right)}  \\
    &= \frac{\left\langle  \left(\sum_{j \neq i} \xi^{\mu_1}_j \sigma_j \right)^{n-1}  \left(\sum_{j \neq i} \xi^{\mu_2}_j \sigma_j \right)^{n-1} \right\rangle}{\left\langle  \left(\sum_{j \neq i} \xi^{\mu_1}_j \sigma_j \right) \right\rangle^{n-1} \left\langle  \left(\sum_{j \neq i} \xi^{\mu_2}_j \sigma_j \right) \right\rangle^{n-1}} \frac{1 - q^{2s} \frac{ \left\langle \left( \sum_{j\neq i} \xi_j^{\mu_1}\sigma_j\right)^{n-1}\right\rangle \left\langle \left( \sum_{j\neq i} \xi_j^{\mu_2}\sigma_j\right)^{n-1}\right\rangle}{\left\langle  \left(\sum_{j \neq i} \xi^{\mu_1}_j \sigma_j \right)^{n-1}  \left(\sum_{j \neq i} \xi^{\mu_2}_j \sigma_j \right)^{n-1} \right\rangle}}{1-q^{2s}} \\
    &\leq (1 + \bar{\epsilon}) \frac{1 - q^{2s} \frac{ \left\langle \left( \sum_{j\neq i} \xi_j^{\mu_1}\sigma_j\right)^{n-1}\right\rangle \left\langle \left( \sum_{j\neq i} \xi_j^{\mu_2}\sigma_j\right)^{n-1}\right\rangle}{\left\langle  \left(\sum_{j \neq i} \xi^{\mu_1}_j \sigma_j \right)^{n-1}  \left(\sum_{j \neq i} \xi^{\mu_2}_j \sigma_j \right)^{n-1} \right\rangle}}{1-q^{2s}} \\
    &\leq (1 + \bar{\epsilon}) \frac{1 - q^{2s} \frac{ \left\langle \left( \sum_{j\neq i} \xi_j^{\mu_1}\sigma_j\right)\right\rangle^{n-1} \left\langle \left( \sum_{j\neq i} \xi_j^{\mu_2}\sigma_j\right)\right\rangle^{n-1}}{\left\langle  \left(\sum_{j \neq i} \xi^{\mu_1}_j \sigma_j \right)\right\rangle^{n-1} \left\langle \left(\sum_{j \neq i} \xi^{\mu_2}_j \sigma_j \right) \right\rangle^{n-1}(1+\bar{\epsilon})}}{1-q^{2s}} \\
    &= (1 + \bar{\epsilon}) \frac{1 - q^{2s} (1+\bar{\epsilon})^{-1}}{1-q^{2s}} \\
    &= 1 + \frac{\bar{\epsilon}}{1-q^{2s}}
\end{align}

In the second line, we canceled out $q^{t_1+t_2}$, in the third line we factorized the expression, and in the fourth and fifth lines, we applied \cref{lemma: bulk propagator}. Finally, in the last line we omitted the $1 - \frac{1}{1+\bar{\epsilon}}$ term.

Once again returning to the full variance, we obtain
\begin{align}
    \text{Var}(\Delta E) &\leq 4 n^2 \sum_{\mu_1 \mu_2, s\neq 0 }^M  \;q^{t_1+t_2} \left\langle \left( \sum_{j\neq i} \xi_j^{\mu_1}\sigma_j\right)\right\rangle^{n-1} \left\langle \left( \sum_{j\neq i} \xi_j^{\mu_2}\sigma_j\right)\right\rangle^{n-1} (1-q^{2s(\mu_1,\mu_2)}) \left(1 + \frac{\bar{\epsilon}}{1-q^{2s(\mu_1,\mu_2)}}\right)  \times  \left(1 + \frac{1}{1-q^{2s(\mu_1,\mu_2)}}O(\delta^2)\right).
\end{align}
The factor $(1-q^{2s})$ is 0 for $s = 0$, so even though the error terms are uncontrolled there, the overall expression is 0. So we are free to include those terms back into the summation. Since those terms are 0, we can upper bound the error terms involving $s$ with $s = 1$. This ultimately yields
\begin{align}
    \text{Var}(\Delta E) &\leq 4 n^2 \left[  \left(1 + \frac{\bar{\epsilon}}{1-q^{2}}\right)  \times  \left(1 + \frac{1}{1-q^{2}}O(\delta^2)\right) \right]\sum_{\mu_1 \mu_2 }^M  \;q^{t_1+t_2} \left\langle \left( \sum_{j\neq i} \xi_j^{\mu_1}\sigma_j\right)\right\rangle^{n-1} \left\langle \left( \sum_{j\neq i} \xi_j^{\mu_2}\sigma_j\right)\right\rangle^{n-1} (1-q^{2s})\\
    &\leq 4 n^2 N^{2(n-1)} (1+o(1))\sum_{\mu_1 \mu_2 }^M  \;
    q^{n(2s+t_1+t_2)-2s}
     (1-q^{2s})
\end{align}

\end{proof}

\subsection{Proof of \cref{theorem: prototype reconstruction}} \label{sec: prototype theorem proof}

\begin{proof}
The expression in the numerator of $\mathcal{C}_{\ell}^v$ is the variance, which in simplified form is
\begin{align} \label{eq: cv numerator}
    \tilde{\text{Var}}[\Delta(E) ]  &= 4n^2N^{2n-2} \sum_{(s,t_1,t_2)} \Tilde{C}^{(2)}_{\ell}\left( (s,t_1,t_2) \right) q^{n(2s+t_1+t_2) -2s}(1-q^{2s}).
\end{align}
We use $ \tilde{\text{Var}}$ to denote the variance approximated using \cref{lemma: variance rel error} and the simplified combinatorics.

 We assume that $q^{-2} \ll K \ll q^{-2n+1}$. Upon some algebraic manipulations, we expand \cref{eq: cv numerator} as
\begin{align}
    \tilde{\text{Var}}[\Delta(E)] &= 4n^2N^{2n-2} (Kq^n)^{2(h-\ell)} \left[ \sum_{s = 1}^\ell (Kq^{2n-1})^{2s}(1-q^{2s}) + \sum_{s = 0}^{h-\ell} (Kq^2)^{-s}(1-q^{2s}) + \sum_{a=1}^{\ell}
        (Kq^{2n-1})^{2a}
        \sum_{u=1}^{h-\ell+a-1}
        (Kq^2)^{-u}
        \left(1-q^{2(a+u)}\right) \right. \nonumber \\
    &\left. + 2
    \sum_{s=1}^{\ell-1}
    (Kq^{2n-1})^{2s}
    (1-q^{2s})
    \sum_{y=1}^{\ell-s}
    (Kq^{2n})^y +
(Kq^2)^{-(h-\ell)}
\sum_{x=1}^{\ell}
(Kq^{4n-4})^x
\left(1-q^{2(h-\ell+2x)}\right) \right]
\end{align}

Each of the sums in the bracket here are geometric series. Furthermore, by the assumption on $K$, the common ratio of each of the sums is less than one. Thus we may upper bound the entire expression by setting the top of each sum to be $\infty$.

Our goal is to obtain an expression multiplied by a term of the form $1+o(1)$. To that end, we will factor out the first term from each sum. Note that each of the sums carries a term of the form $1-q^{2s}$ so this may be factored out of the entire expression. This results in the following:
\begin{align} \label{eq: simplified variance}
    \tilde{\text{Var}}[\Delta(E)] &\leq 4n^2N^{2n-2} (Kq^n)^{2(h-\ell)} (1-q^2) \Bigg[
(K^2q^{4n-2})(1+\lambda_1)
+(Kq^2)^{-1}(1+\lambda_2)
+(Kq^{4n-4})(1+q^2)(1+\lambda_3)
\nonumber\\
&+2(K^3q^{6n-2})(1+\lambda_4)
+(Kq^2)^{-(h-\ell)}(Kq^{4n-4})
\frac{1-q^{2(h-\ell+2)}}{1-q^2}
(1+\lambda_5)
\Bigg], 
\end{align}

where in the above expression, each of the $\lambda_i$ terms are
\begin{align}
    \lambda_1 &:= \sum_{s=2}^{\infty}
        \frac{\big(Kq^{2n-1}\big)^{2s}\big(1-q^{2s}\big)}
             {\big(Kq^{2n-1}\big)^{2}\big(1-q^{2}\big)}
        \;\leq\; \sum_{s=2}^{\infty} s\,\big(K^{2}q^{4n-2}\big)^{s-1}
        \;=\; O\!\big(K^{2}q^{4n-2}\big) \\[6pt]
    \lambda_2 &:= \sum_{s=2}^{\infty}
        \frac{\big(Kq^{2}\big)^{-s}\big(1-q^{2s}\big)}
             {\big(Kq^{2}\big)^{-1}\big(1-q^{2}\big)}
        \;\leq\; \sum_{s=2}^{\infty} s\,\big(Kq^{2}\big)^{-(s-1)}
        \;=\; O\!\big((Kq^{2})^{-1}\big) \\[6pt]
    \lambda_3 &:= \sum_{\substack{a,u\,\geq\,1\\ (a,u)\,\neq\,(1,1)}}
        \frac{\big(Kq^{2n-1}\big)^{2a}\big(Kq^{2}\big)^{-u}\big(1-q^{2(a+u)}\big)}
             {\big(Kq^{2n-1}\big)^{2}\big(Kq^{2}\big)^{-1}\big(1-q^{4}\big)}
        \nonumber\\
    &\;\leq\; \sum_{\substack{a,u\,\geq\,1\\ (a,u)\,\neq\,(1,1)}}
        (a+u)\,\big(K^{2}q^{4n-2}\big)^{a-1}\big(Kq^{2}\big)^{-(u-1)}
        \;=\; O\!\big(K^{2}q^{4n-2}+(Kq^{2})^{-1}\big) \\[6pt]
    \lambda_4 &:= \sum_{\substack{s,y\,\geq\,1\\ (s,y)\,\neq\,(1,1)}}
        \frac{\big(Kq^{2n-1}\big)^{2s}\big(1-q^{2s}\big)\big(Kq^{2n}\big)^{y}}
             {\big(Kq^{2n-1}\big)^{2}\big(1-q^{2}\big)\big(Kq^{2n}\big)}
        \nonumber\\
    &\;\leq\; \sum_{\substack{s,y\,\geq\,1\\ (s,y)\,\neq\,(1,1)}}
        s\,\big(K^{2}q^{4n-2}\big)^{s-1}\big(Kq^{2n}\big)^{y-1}
        \;=\; O\!\big(K^{2}q^{4n-2}+Kq^{2n}\big) \\[6pt]
    \lambda_5 &:= \sum_{x=2}^{\infty}
        \big(Kq^{4n-4}\big)^{x-1}\,
        \frac{1-q^{2(h-\ell+2x)}}{1-q^{2(h-\ell+2)}}
        \;\leq\; \frac{1}{1-q^{6}}\sum_{x=2}^{\infty}\big(Kq^{4n-4}\big)^{x-1}
        \;=\; O\!\big(Kq^{4n-4}\big)
\end{align}

Here, for $\lambda_1$ to $\lambda_4$, we have used the inequality $\frac{1-q^{2s}}{1-q^{2}} \;=\; 1+q^{2}+\cdots+q^{2s-2} \;\leq\; s$. We next notice that in the expression in the brackets of \cref{eq: simplified variance}, each term involves a factor of $K^2q^{4n-2}$ and/or $Kq^2$. We subsequently factor out $[K^2q^{4n-2} + (Kq^2)^{-1}]$. Upon doing so, and also upper bounding $\frac{1-q^{2(h-\ell+2)}}{1-q^2}$ using the earlier inequality, we obtain
\begin{align}
    \tilde{\text{Var}}[\Delta(E)] &\leq
4n^2N^{2n-2} (Kq^n)^{2(h-\ell)}(1-q^2)
\left[
K^2q^{4n-2}+(Kq^2)^{-1}
\right]
\cdot 
\Bigg[
1+
\frac{
K^2q^{4n-2}\lambda_1+(Kq^2)^{-1}\lambda_2
}{
K^2q^{4n-2}+(Kq^2)^{-1}
}
+
\frac{
(1+q^2)(K^2q^{4n-2})(Kq^2)^{-1}(1+\lambda_3)
}{
K^2q^{4n-2}+(Kq^2)^{-1}
}
\nonumber\\
&+
\frac{
2(K^2q^{4n-2})(Kq^{2n})(1+\lambda_4)
}{
K^2q^{4n-2}+(Kq^2)^{-1}
}
+
\frac{
(Kq^2)^{-(h-\ell)}
(K^2q^{4n-2})(Kq^2)^{-1}
(h-\ell+2)(1+\lambda_5)
}{
K^2q^{4n-2}+(Kq^2)^{-1}
}
\Bigg]
\\
&\leq 4n^2N^{2n-2} 
(Kq^n)^{2(h-\ell)}(1-q^2)
\left[
K^2q^{4n-2}+(Kq^2)^{-1}
\right]
\cdot 
\Bigg[
1+\lambda_1+\lambda_2
+(1+q^2)(K^2q^{4n-2})(1+\lambda_3)
+2(Kq^{2n})(1+\lambda_4)
\nonumber\\
&\qquad\qquad
+(h-\ell+2)(Kq^2)^{-(h-\ell)}
(K^2q^{4n-2})(1+\lambda_5)
\Bigg]
\\
&\leq 4n^2N^{2n-2} 
(Kq^n)^{2(h-\ell)}(1-q^2)
\left[
K^2q^{4n-2}+(Kq^2)^{-1}
\right]
(1+\nu).
\end{align}

Here, we set $\nu$ to be
\begin{align}
    \nu &:= \lambda_1+\lambda_2
+(1+q^2)(K^2q^{4n-2})(1+\lambda_3)
+2(Kq^{2n})(1+\lambda_4)
+(h-\ell+2)(Kq^2)^{-(h-\ell)}
(K^2q^{4n-2})(1+\lambda_5) \\
&= O(K^2q^{4n-2} + Kq^{2n} + (Kq^2)^{-1})
\end{align}

With this result in hand, we now determine the denominator to the squared coefficeint of variation, which is simply the mean squared. This quantity is lower bounded by a modification to the approximate formula (\cref{theorem: approximate gap}) using the lower bound in \cref{lemma: bulk propagator}. 

\begin{align}
\left\langle \Delta E \right\rangle_\ell^2
&\geq 
4n^2 \left( \frac{(N-1)!}{(N-n)!}\right)^2
\left((K-1)q^n\right)^{2(h-\ell)}
\times 
\Bigg[
1+
\frac{K-2}{K-1}
\sum_{x=1}^{\ell-1}
\left((K-1)q^{2n}\right)^x
+
\left((K-1)q^{2n}\right)^\ell
\Bigg]^2
\\
&\geq 
4n^2 \left( \frac{(N-1)!}{(N-n)!}\right)^2
\left((K-1)q^n\right)^{2(h-\ell)}
\\
&\geq 
4n^2N^{2n-2} \left(1 - \frac{n(n-1)}{N} \right)
\left((K-1)q^n\right)^{2(h-\ell)}.
\end{align}

When we invert this in the squared coefficient of variation, the term in the parenthesis becomes a $1+o(1)$ factor, which combines with the relative error on the variance to remain $1+o(1)$. With this, we obtain the upper bound on squared coefficient of variation.

\begin{align}
    (\mathcal{C}_{0<\ell<h}^v)^2 &\leq \left( \frac{K}{K-1} \right)^{2(h-\ell)}(1-q^2) \cdot \left[
K^2q^{4n-2}+(Kq^2)^{-1}
\right] \cdot 
(1+\nu) \cdot (1+o(1)) \\
&\leq \exp \left(\frac{2(h-\ell)}{K-1} \right) \cdot \left[
K^2q^{4n-2}+(Kq^2)^{-1}
\right] \cdot (1+\nu) \cdot (1 + o(1)) \label{eq: cv line}
\end{align}

Thus we obtain the inequality of the theorem. 

Now, we set $K = c_K \cdot N\cdot  \kappa(N),\ n = c_n \log N$ and $h = c_h \log N$. Here, $\kappa(N)$ is an arbitrarily slowly growing function of $N$. For this choice of parameter scaling $\nu$ becomes $o(1)$.

First, our formulas are only valid for $q^{2h} = \omega(n/\sqrt{N})$. For this to be satisfied, we require $\lim_{N \to \infty} \frac{c_n \log N}{\sqrt{N} q^{2c_h \log N}} \to 0$, which subsequently requires
\begin{align} \label{eq: this relation}
    c_h \log(1/q) < 1/4.
\end{align}

Return now to \cref{eq: cv line}, we focus on the term in the brackets, as this is what will be vanishing. 

\begin{align} \label{eq: cv final}
    K^2q^{4n-2}+(Kq^2)^{-1} &= c_K^2 N^{2}  q^{4c_n \log N - 2} (\kappa(N))^2+ \frac{1}{c_K N^{1}\kappa(N) q^2}
\end{align}

From \cref{eq: cv final}, we may now use the union bound over all $N$ spins. Note that the exponential prefactor also becomes $1+o(1)$ under this parameter scaling. 

\begin{align}
     \mathbb{P}\left( \bigcup_{i = 1}^N \Delta E_i < 0 \right) &\leq N \cdot \left( c_K^2 N^{2}  q^{4c_n \log N - 2} (\kappa(N))^2+ \frac{1}{c_K N^{1}\kappa(N) q^2} \right)  (1+o(1))
\end{align}

When $N$ multiples the second term in the parenthesis, that term decays with $N$ due to the $\kappa(N)$ term in the denominator. As for the first term in the parenthesis, it too decays so long as 
\begin{align}
    c_n \log(1/q) > \frac{3}{4} 
\end{align}

This relation and \cref{eq: this relation} imply also that $c_n > 3c_h$. This proves the second statement in the theorem. 
\end{proof}

\end{document}